\definecolor{niceRed}{RGB}{190,38,38}
\definecolor{niceYellow}{HTML}{f5b400}
\definecolor{blueGrotto}{HTML}{059DC0}
\definecolor{royalBlue}{HTML}{057DCD}
\definecolor{navyBlue}{HTML}{0B579C}
\definecolor{limeGreen}{HTML}{81B622}
\definecolor{nicePurple}{HTML}{9c27b0}
\definecolor{lightRoyalBlue}{HTML}{def2ff}  
\definecolor{ivory}{HTML}{FFFFF0}
\definecolor{lightRoyalBlue}{HTML}{def2ff} %
\theoremstyle{plain}
\newtheorem{theorem}{Theorem}[section]
\newtheorem{proposition}[theorem]{Proposition}
\newtheorem{lemma}[theorem]{Lemma}
\newtheorem{fact}[theorem]{Fact}
\newtheorem{corollary}[theorem]{Corollary}
\theoremstyle{definition}
\newtheorem{definition}{Definition}
\newtheorem*{definition*}{Definition}
\newtheorem{assumption}{Assumption}
\theoremstyle{remark}
\newtheorem{remark}[theorem]{Remark}
\theoremstyle{plain}
\newtheorem{inftheorem}[theorem]{Informal Theorem}
\crefname{section}{Section}{Sections}
\crefname{theorem}{Theorem}{Theorems}
\crefname{assumption}{Assumption}{Assumptions}
\crefname{lemma}{Lemma}{Lemmas}
\crefname{definition}{Definition}{Definitions}
\crefname{conjecture}{Conjecture}{Conjectures}
\crefname{corollary}{Corollary}{Corollaries}
\crefname{construction}{Construction}{Constructions}
\crefname{conjecture}{Conjecture}{Conjectures}
\crefname{claim}{Claim}{Claims}
\crefname{observation}{Observation}{Observations}
\crefname{proposition}{Proposition}{Propositions}
\crefname{fact}{Fact}{Facts}
\crefname{question}{Question}{Questions}
\crefname{problem}{Problem}{Problems}
\crefname{remark}{Remark}{Remarks}
\crefname{example}{Example}{Examples}
\crefname{equation}{Equation}{Equations}
\crefname{appendix}{Appendix}{Appendices}
\crefname{algorithm}{Algorithm}{Algorithms}
\crefname{table}{Table}{Tables}
\crefname{model}{Model}{Models}
\crefname{figure}{Figure}{Figures}
\crefname{inftheorem}{Informal Theorem}{Informal Theorems}
\crefname{minftheorem}{Main Informal Theorem}{Main Informal Theorems}
\crefname{maintheorem}{Main Theorem}{Main Theorems}
\newcommand{\eat}[1]{}
\newcommand{\yesnum}{\addtocounter{equation}{1}\tag{\theequation}}
\newcommand{\tagnum}[2]{%
    \refstepcounter{equation}%
    \tag{#1) \ (\theequation}%
    \protected@write \@auxout {}{%
        \string \newlabel {#2}{{\theequation}{\thepage}{}{equation.\theequation}{}}%
    }%
}
\newcommand{\white}[1]{\textcolor{white}{#1}}
\def\abs#1{\left| #1 \right|}
\def\sabs#1{| #1 |}
\newcommand{\sinparen}[1]{(#1)}
\newcommand{\sinbrace}[1]{\{#1\}}
\newcommand{\sinsquare}[1]{[#1]}
\newcommand{\inbrace}[1]{\left\{#1\right\}}
\newcommand{\inparen}[1]{\left(#1\right)}
\newcommand{\insquare}[1]{\left[#1\right]}
\newcommand{\inangle}[1]{\left\langle#1\right\rangle}
\newcommand{\norm}[1]{\ensuremath{\left\lVert #1 \right\rVert}}
\newcommand{\N}{\mathbb{N}}
\newcommand{\R}{\mathbb{R}}
\newcommand{\Q}{\mathbb{Q}}
\newcommand{\zo}{\ensuremath{\inbrace{0, 1}}}
\DeclareMathOperator{\Ex}{\mathbb{E}}
\DeclareMathOperator*{\OldPr}{Pr} %
\let\Pr\relax %
\DeclareMathOperator{\Pr}{\OldPr\nolimits}
\DeclareMathOperator{\cov}{Cov}
\DeclareMathOperator{\var}{Var}
\newcommand{\argmin}{\operatornamewithlimits{arg\,min}}
\newcommand{\nfrac}[2]{\nicefrac{#1}{#2}}
\newcommand{\sfrac}[2]{{#1/#2}}  %
\newcommand{\myfrac}[1]{\sfrac{1}{#1}}
\newcommand{\st}{\mathrm{s.t.}}
\newcommand{\poly}{\mathrm{poly}}
\newcommand{\eps}{\varepsilon}
\renewcommand{\epsilon}{\varepsilon}
\newcommand*\circled[1]{\tikz[baseline=(char.base)]{
            \node[shape=circle,draw,inner sep=1pt] (char) {#1};}}
\newcommand{\wh}[1]{\widehat{#1}}
\newcommand{\wt}[1]{\widetilde{#1}}
\newcommand{\cB}{\mathcal{B}}
\newcommand{\cD}{\mathcal{D}}
\newcommand{\cH}{\mathcal{H}}
\newcommand{\cI}{\mathcal{I}} 
\newcommand{\cN}{\mathcal{N}}
\newcommand{\cP}{\mathcal{P}}
\newcommand{\cX}{\mathcal{X}}
\DeclareMathAlphabet{\mathdutchcal}{U}{dutchcal}{m}{n}
\SetMathAlphabet{\mathdutchcal}{bold}{U}{dutchcal}{b}{n}
\DeclareMathAlphabet{\mathdutchbcal}{U}{dutchcal}{b}{n}
\newcommand{\hypo}[1]{\mathdutchcal{#1}}
\newcommand{\hyH}{\hypo{H}}
\newcommand{\hyG}{\hypo{G}}
\newcommand{\cS}{\hypo{S}}
\newcommand{\toa}[1]{\textit{#1}}
\newcommand{\quadtext}[1]{\quad\text{#1}\quad}
\newcommand{\qquadtext}[1]{\qquad\text{#1}\qquad}
\newcommand{\quadand}{\quadtext{and}}
\newcommand{\qquadand}{\qquadtext{and}}
\newcommand{\neyman}{\ensuremath{\tau_{\rm Neyman}}}
\newcommand{\ipw}{\ensuremath{\tau_\mathrm{IPW}}}
\newcommand{\doublyrobust}{\ensuremath{\tau_\mathrm{DR}}}
\newcommand{\CIPW}{\text{CIPW}}
\DeclareMathOperator{\MSE}{MSE}
\DeclareMathOperator{\RMSE}{RMSE}
\DeclareMathOperator{\bias}{Bias}
\DeclareMathOperator{\variance}{Var}
\DeclareMathOperator{\diam}{diam}
\newcommand{\nll}{\texttt{null}}
\newcommand{\outlier}{\ensuremath{\mathscr{O}}}
\newcommand{\goodlocal}[1]{\ensuremath{(#1)}-good-local partition}
\newcommand{\vc}{{\ensuremath{{\rm VC}}}}
\newcommand{\problemSS}{\textbf{\textsf{Subset-Sum}}}
\newcommand{\problemRMSE}{\textbf{\textsf{Min-RMSE}}}
\newcommand{\instanceSS}{\ensuremath{\cI_{\rm SS}}}
\newcommand{\instanceMSE}{\ensuremath{\cI_{\rm MSE}}}
\newcommand{\dataset}{\mathscr{C}}
\newcommand{\opt}{{\rm OPT}}
\def\colt{} %
\let\colt\undefined %
  \newcommand{\proofof}[1]{\textbf{of #1.}}
  \newcommand{\proofsketch}[1]{\textbf{sketch of #1.}}
  \renewenvironment{proof}[1][\proofname]{\par
  \pushQED{\qed}%
  \normalfont \topsep6\p@\@plus6\p@\relax
  \trivlist
      \item[\hskip\labelsep
            \itshape
        #1\@addpunct{}\hskip\labelsep]%
    }{%
      \popQED\endtrivlist\@endpefalse
    }
  \renewcommand{\proofname}{Proof}
  \newcommand{\proofof}[1]{\textit{\hspace{-4mm} of \expandafter#1.}}%
  \newcommand{\proofsketch}[1]{\textit{\hspace{-4mm} sketch of \expandafter#1}.}
\renewcommand{\paragraph}[1]{\medskip \noindent\textbf{#1}~}
\newcommand{\smallmath}[1]{\ensuremath{\text{\small $#1$}}}
\setlist[enumerate]{after=\vspace{\dimexpr-\baselineskip+12pt\relax},itemsep=2pt}
\setlist[itemize]{after=\vspace{\dimexpr-\baselineskip+12pt\relax},itemsep=2pt}
\newcommand\blfootnote[1]{%
  \begingroup
  \renewcommand\thefootnote{}\footnote{#1}%
  \addtocounter{footnote}{-1}%
  \endgroup
}
\newcolumntype{L}[1]{>{\raggedright\let\newline\\\arraybackslash\hspace{0pt}}m{#1}}
\newcolumntype{C}[1]{>{\centering\let\newline\\\arraybackslash\hspace{0pt}}m{#1}}
\newcolumntype{R}[1]{>{\raggedleft\let\newline\\\arraybackslash\hspace{0pt}}m{#1}}
\title{Smaller Confidence Intervals From IPW Estimators\\ via Data-Dependent Coarsening}
\author{
        {\begin{tabular}{C{4.75cm}C{4.75cm}C{4.75cm}}
        {\bf Alkis Kalavasis} & {\bf Anay Mehrotra} & {\bf Manolis Zampetakis}\\[2mm]
        {Yale University} & {Yale University} & {Yale University} \\[-4mm]
        {\small\phantom{....................}} \mbox{\small\url{alkis.kalavasis@yale.edu}} & {\small \phantom{............}} \mbox{\small\url{anaymehrotra1@gmail.com}} & \mbox{\small\url{manolis.zampetakis@yale.edu}}
        \\
        \end{tabular}}
}
\date{}
\begin{document}

\maketitle

\begin{abstract}
    Inverse propensity-score weighted (IPW) estimators are prevalent in causal inference for estimating average treatment effects in observational studies. Under unconfoundedness, given accurate propensity scores and $n$ samples, the size of confidence intervals of IPW estimators scales down with $n$, and, several of their variants improve the rate of scaling. However, neither IPW estimators nor their variants are robust to inaccuracies: even if a single covariate has an $\eps>0$ additive error in the propensity score, the size of confidence intervals of these estimators can increase arbitrarily. Moreover, even without errors, the rate with which the confidence intervals of these estimators go to zero with $n$ can be arbitrarily slow in the presence of extreme propensity scores (those close to 0 or 1).

    We introduce a family of Coarse IPW (CIPW) estimators that captures existing IPW estimators and their variants. Each CIPW estimator is an IPW estimator on a \toa{coarsened} covariate space, where certain covariates are merged. Under mild assumptions, e.g., Lipschitzness in expected outcomes and sparsity of extreme propensity scores, we give an efficient algorithm to find a robust estimator: given $\eps$-inaccurate propensity scores and $n$ samples, its confidence interval size scales with $\eps+\sfrac{1}{\sqrt{n}}$. In contrast, under the same assumptions, existing estimators' confidence interval sizes are $\Omega(1)$ irrespective of $\eps$ and $n$. Crucially, our estimator is data-dependent and we show that no data-independent CIPW estimator can be robust to inaccuracies. 

    \blfootnote{Accepted for presentation at the 37th Conference on Learning Theory (COLT) 2024}
    
\end{abstract}

\pagenumbering{arabic}

\section{Introduction}\label{sec:intro}

  An observational study consists of several individuals who are described by a vector of \textit{covariates} $x\in \R^d$. In the study, each individual $x$ receives \textit{treatment} with some fixed but unknown probability $e(x)$. We observe tuples of the form $(x, y, t)$ of whether the treatment was assigned ($t = 1$) or not ($t = 0$) and the corresponding treatment-dependent \textit{outcome} $y$. 
  {Let {$X$ be the covariate random variable and}  $Y(t)$ be the random variable denoting the outcome when $T=t$ (for $t\in\zo{}$), so that $y = T Y(1) + (1-T) Y(0)$} {(note that $T, Y(1), Y(0)$ can depend on $X$).}
  The key quantity of interest is the \textit{average treatment effect} $\tau$, which is the average effect of the treatment on the population of individuals, i.e., {$\tau \coloneqq\Ex[Y(1)-Y(0)]$}. As an example the individuals can be patients with covariates that include the details of their medical history, the treatment corresponds to whether they received some medication or not, and the outcome is the the extent of the patient's symptoms. In another example the individuals are customers, the covariates include price sensitivity and interests, the treatment corresponds to whether they receive some discount, and the outcome is their satisfaction.

  The difficulty in estimating the average treatment effect $\tau$ is that, for each individual, we only observe their outcome either with or without treatment but not both, i.e., the data is \textit{censored}. To estimate $\tau$ from censored data, we need to account for the probabilities $e(x)$ with which each individual is assigned treatment. These probabilities are known as \textit{propensity scores}.

  \textit{Inverse propensity-score weighted (IPW)} estimators are a family of estimators that use the propensity scores $e(x)$ for each $x$ and a censored dataset $\dataset$ as input, and output a value $\ipw{}(\dataset;e)$. This value is an unbiased estimate of $\tau$ if $\dataset$ is drawn from an underlying distribution $\cD$ that satisfies a standard assumption--namely \emph{unconfoundedness}--which we formally define in \cref{sec:preliminaries}.
  IPW estimators are widely used in a variety of fields from Economics \citep{dehejia1998causal,dehejia2002propensity,galiani2005water,abadie2006large,abadie2016matching}, to Statistics \citep{rosenbaum2002overt,rubin2006matched}, to Medicine \citep{robin1997estimating,christakis2003health,austin2008critical}, to Political Science \citep{brunell2004turnout,sekhon2004quality,ho2007matching}, to Sociology \citep{morgan2006matching,lee2009estimation,oakes2017methods}.
    There are several reasons for their popularity including that $\ipw{}$ is: (1) easy to describe, (2) efficiently computable, (3) unbiased given \emph{accurate} propensity scores $e(x)$, and (4) asymptotically normal. Nevertheless, these vanilla IPW estimators suffer from some stability issues that we explore below.
    
    \paragraph{Issue I: Inaccuracies.} 
    IPW estimators enjoy the aforementioned good statistical properties when provided with accurate propensity scores. 
    Unfortunately, the IPW estimator $\ipw{}(\dataset;\wh{e})$ can be arbitrarily far from $\tau$, when the propensity score estimates $\wh{e}$ slightly differ from the true propensity scores $e$ 
    \emph{even on one covariate}
    by an additive amount $\eps>0$. This leads us to the following question.
    \begin{mdframed}
        \textbf{Q1.}
        {Given inaccurate propensity scores $\wh{e}$, s.t., $\norm{\wh{e}-e}_\infty\leq\eps$, can we estimate $\tau$ to $O(\eps)$ error?}
    \end{mdframed} 
    \noindent %
    In the special case of $\eps=0$, practitioners and researchers have devoted significant effort to reducing the size of the confidence intervals arising from IPW estimators, e.g.,  \citep{hirano2003efficient,li2018overlapWeights}. %
    This includes several doubly-robust estimators that utilize predictions of covariate-level outcomes {(i.e., {predictions of $\mu_t(x)\coloneqq\Ex[Y(t)\mid X{=}x]$} for $t=0$ and $t=1$)} to achieve smaller confidence intervals \citep{robins2005doublyRobust,chernozhukov2018doubleML,chernozhukov2022simple,foster2023orthognalSL}.
    However, the understanding of the $\eps > 0$ case is limited (see \Cref{sec:related_works} for some prior work on inaccurate propensity scores). %

    \paragraph{Issue II: Outliers.} %
    Root-mean-squared-error (RMSE) is the main quantity of interest of an estimator of the average treatment effect $\tau$; not only because it checks the consistency of the estimator but also because it controls the size of the resulting confidence interval at a fixed confidence level. It is well known that given $n$ samples, the RMSE of both IPW and doubly-robust estimators is proportional to $\sqrt{\frac{1}{n} \Ex_\cD\insquare{\frac{1}{e(X)(1-e(X))}}}$ \citep{imbens2015causal,farrell2015robust,chernozhukov2022simple,foster2023orthognalSL}. %

    An \textit{outlier} in this context is defined as a covariate $x$ whose propensity score $e(x)$ is close to $0$ or $1$. To be more concrete, let covariate $x$ be a $\beta$-outlier if $e(x)(1-e(x)) < \beta$. 
    One important issue with the aforementioned expression of RMSE is that it goes to $\infty$ if there exists an outlier with positive mass.
    Therefore, even a single $\beta$-outlier, with small $\beta$, can significantly affect the size of the confidence interval of $\tau$.

    A popular heuristic to increase robustness to such outliers is to remove or trim all $\beta$-{outliers} from the data for some $\beta = \Omega(1)$  \citep{crump2009dealing}.
    While this ensures that the standard deviation is at most $O\sinparen{\sfrac{1}{\sqrt{\beta n}}}$, it can increase the bias of the estimation to $\rho$, where $\rho$ is the probability mass of the $\beta$-outliers \citep{li2018overlapWeights}.
    Hence, this would result in an RMSE and, hence, confidence interval of size proportional to $\rho+O\inparen{\sfrac{1}{\sqrt{\beta n}}}$ which is finite, as opposed to the RMSE of IPW which is $\infty$ in this case, but still it does not converge to zero. Unfortunately, without further assumptions, it is information-theoretically impossible to find an estimator with RMSE that goes to zero as $n$ goes to infinity (for more details see \cref{thm:infoLB:alphaL}). This leads us to the following question.

    \begin{mdframed}
        \textbf{Q2.}
        If $\rho$-fraction of covariates are $\beta$-outliers, then are there natural assumptions on outliers that enable estimation of $\tau$ with an RMSE much smaller than $\rho+O\sinparen{\sfrac{1}{\sqrt{n\beta}}}$?
    \end{mdframed}
    {In this work, we study both questions \textbf{Q1} and \textbf{Q2}\textbf{.}
    When $\eps>0$, in the regime of \textbf{Q1}\textbf{,} none of the aforementioned estimators (i.e., IPW, doubly-robust, and trimmed-IPW estimators) guarantee RMSE that decays with $n$; unless there are no $\beta$-outliers for $\beta\gg \eps$.
    When $\rho=\omega\sinparen{\sfrac{1}{\sqrt{n}}}$, in the regime of \textbf{Q2}\textbf{,} then none of the aforementioned \mbox{estimators guarantee RMSE smaller than $\rho+O\sinparen{\sfrac{1}{\sqrt{n\beta}}}$.}}

    \subsection{Our Main Result: Estimation Robust to Inaccuracies and Outliers} \label{intro:main}
    
    Our goal is to efficiently find an estimator whose confidence intervals satisfy the properties desired in \textbf{Q1} and \textbf{Q2}\textbf{.} We make the following natural assumption on the outcomes.
    
    \begin{assumption}[{Lipschitzness}]\label{asmp:lipschitzness}
      {{The expected outcome with treatment $t$ conditioned on covariate $x$,} i.e., $\mu_t(x)\coloneqq \Ex[Y(t) \mid X{=}x]$}, is $L$-Lipschitz in $x$ (for some $\ell_p$-norm), i.e., $\abs{\mu_t(x_1)-\mu_t(x_2)}\leq L\cdot \norm{x_1-x_2}_p$ for all $x_1$ and $x_2$.
    \end{assumption}
    We expect Lipschitzness to hold in many settings such as when $\mu_t(x)$ belongs to a parametric family {(e.g., see \citet{wager2018estimation} who require Lipschitzness).}
    For instance, a common assumption in the literature is that $\mu_t$ has a linear parametric form, say, $\mu_t(x)=w_t^\top x +\eps$ \citep{imbens2015causal,hernan2023causal}.
    This implies that 
    $\mu_t$ is
    $\norm{w_t}_1$-Lipschitz and satisfies \cref{asmp:lipschitzness} when $w_t$ has a bounded norm.
    Lipschitzness of $\mu(x)$ alone, however, does \emph{not} address any of the issues that we mentioned before: all aforementioned estimators are independent of the geometry of the covariates, hence, we can always rearrange the covariates to satisfy Lipschitzness.
    Our estimators, on the other hand, will make use of the geometry of covariates. %

    Apart from the Lipschitzness of outcomes, we also need assumptions to exclude certain edge cases where it seems hard (if not impossible) to achieve a small RMSE. %
        For some small parameter $\alpha>0$, a number $k=O(1)$, we assume that $\beta$-outliers satisfy the following assumptions. %
        \begin{assumption}[{Sparsity}]\label{asmp:sparsity}
            There is no $\ell_p$-ball of diameter larger than $\alpha$ {such that} $\inparen{1-o(1)}$-fraction {the covariates within the ball are} $\beta$-outliers.
            (Where norm is the same as in \cref{asmp:lipschitzness}.)
        \end{assumption}
        \vspace{-7mm}
        \begin{assumption}[{Isolation}]\label{asmp:isolation}
            There are $k$ $\ell_p$-balls of diameter $\alpha>0$ whose centers are, pairwise, $3\alpha$ far in $\ell_p$-norm and that partition all $\beta$-outliers.
            (Where norm is the same as in \cref{asmp:lipschitzness}.)
        \end{assumption}
        If sparsity does not hold then, under mild conditions, there is a large ball $B$ most of whose mass is $\beta$-outliers (see \cref{fig:assumptions:counterExample:sparsity}).
        Since $B$ has a large radius, covariates well inside $B$ are far from any non-outlier and, hence, we cannot use Lipschitzness to estimate the expected outcomes (with and without treatment) at these covariates.
        Further, estimating these expected outcomes directly may require arbitrarily large samples as their propensity scores can be arbitrarily close to 0 or 1.

        If isolation does not hold, then either (i) we need a large number of (small) balls to cover the covariates {(see  \cref{fig:assumptions:counterExample:isolation})} or (ii) any cover with $O(1)$ balls has two close balls.
        In case (i), we need a huge number of samples to identify the good partition $\mathcal{S}$.
        Intuitively, this is because the VC-dimension of unions of $k$ balls is at least $\Omega(k)$, leading to a need for $\Omega(k)$ samples, which goes to $\infty$ as $k\to\infty$ (see \cref{lem:impossibility_of_learning_gl_partition}).
        In case (ii), the cover can be identified from finite samples, but how to identify it computationally efficiently is unclear. Under the assumptions above, we can find estimators with better guarantees than existing estimators.

        \begin{figure}[h!]
            \centering
            \subfigure[
                Sparse and Isolated Instance\label{fig:assumptions:example}]{     
                {\hspace{4mm}\includegraphics[scale=0.1,trim={22cm 2cm 20cm 3cm}]{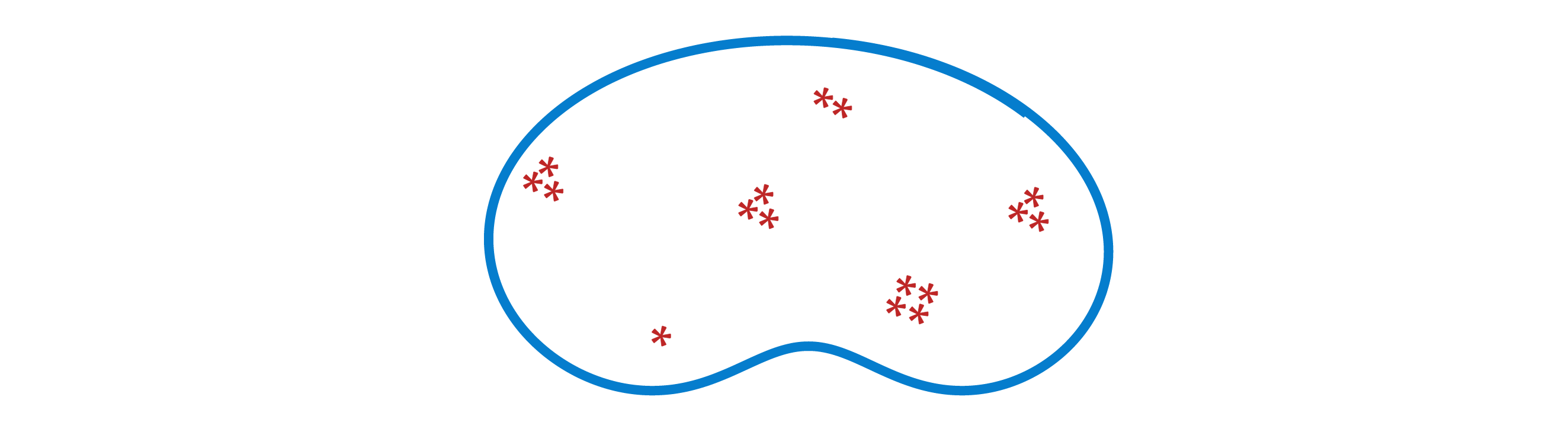}}
            }
            \quad
            \subfigure[
                Non-Sparse Instance\label{fig:assumptions:counterExample:sparsity}]{  
                \hspace{2.5mm}{\includegraphics[scale=0.1,trim={1cm 1cm 1cm 1.5cm}]{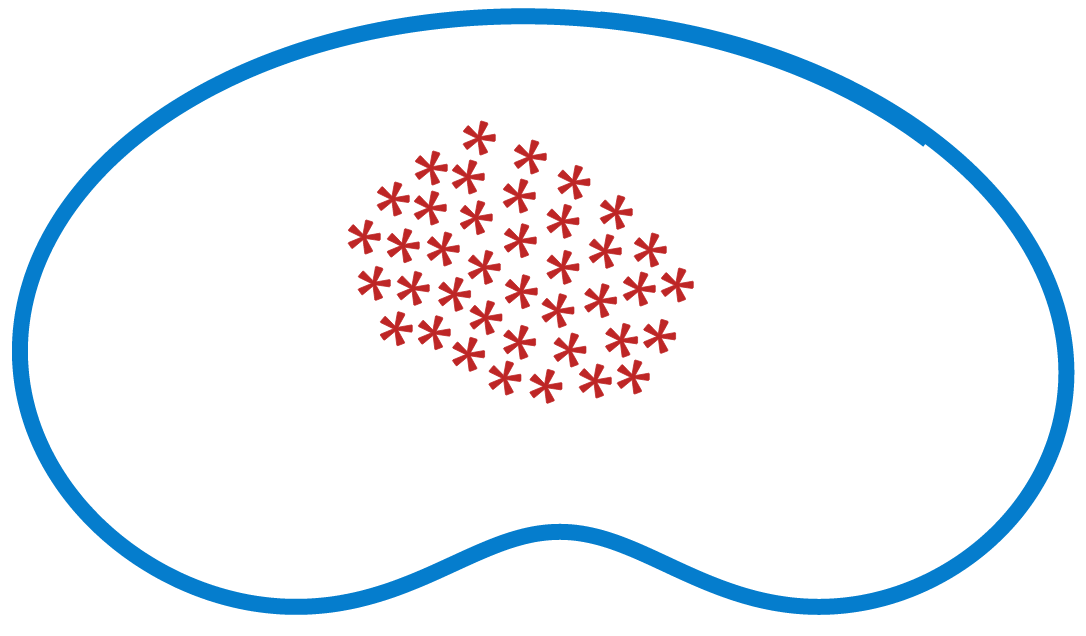}}
            }
            \qquad
            \subfigure[
                Non-Isolated Instance\label{fig:assumptions:counterExample:isolation}]{
                \hspace{2.5mm}{\includegraphics[scale=0.1,trim={1cm 1cm 1cm 1.5cm}]{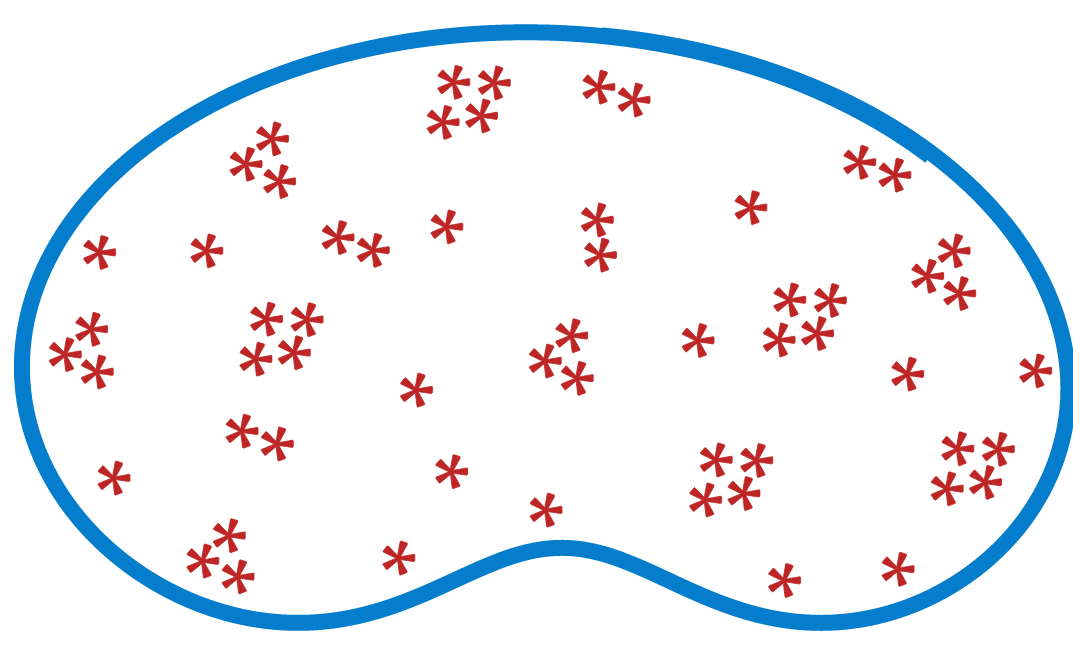}}
            }
            \vspace{-2mm}
            \caption{Illustrations of \cref{asmp:sparsity,asmp:isolation}. Outlier covariates are in \textcolor{niceRed}{red}. Inliers are hidden.}
            \label{fig:enter-label}
        \end{figure}

        \begin{inftheorem}\label{infthm:1}
            Suppose \cref{asmp:lipschitzness,asmp:sparsity,asmp:isolation} hold with $\alpha,\beta>0$ and $k,L=O(1)$ and assume that we are given estimated propensity scores $\wh{e}$ with $\norm{\wh{e}-e}_\infty \leq \eps$ and
            $n= \Omega(\sfrac{d}{(\rho\eps)^{2}})$ i.i.d.\ samples from an unconfounded distribution.
            There is an algorithm that 
            outputs a value $\tau_{A}$ in $\poly(n)$ time such that
            $\Ex[\abs{\tau-\tau_A}^2]^{\sfrac{1}{2}} = O(\eps+\alpha\rho+{\sfrac{1}{\sqrt{n\beta}}})$, where $\rho$ is the fraction of $\beta$-outliers.
        \end{inftheorem}
        This algorithm gives a unified approach for handling inaccuracies in propensity scores and outliers.
        Concretely, the algorithm, namely, \cref{algorithm}, answers both questions \textbf{Q1} and \textbf{Q2} in the affirmative:
        at one extreme, where there are no $\beta$-outliers (i.e., $\rho=0$) but the provided propensity scores have $\eps$-error, the algorithm achieves an RMSE $O(\eps)$ for $\beta \geq \sfrac{\rho}{d}$ and $O(\eps+{\sfrac{1}{\sqrt{n\beta}}})$ otherwise.
        At the other extreme, where accurate propensity scores are known, but $\rho$-fraction of the points are outliers, the algorithm achieves an RMSE $O(\alpha\rho+{\sfrac{1}{\sqrt{n\beta}}})$.
        This quantity is $\Omega\inparen{\sfrac{1}{\alpha}}$-times smaller than $\rho+O\sinparen{\sfrac{1}{\sqrt{n\beta}}}$ in the regime $\rho=\omega\sinparen{\sfrac{1}{\sqrt{n\beta}}}$ where trimmed-IPW performs {poorly}, and matches the guarantee of trimmed-IPW otherwise (when $\rho=O\sinparen{\sfrac{1}{\sqrt{n\beta}}}$).

        To put our algorithmic contribution in context, we verify that existing estimators have large confidence intervals under the same assumptions (see also \Cref{tab:RMSEBounds}).
        \begin{inftheorem}
        \label{prop:comparisonToBaselines}\label{infthm:2}
            For any $\eta,\eps>0$ and $n\geq 1$, there is an unconfounded distribution $\cD$ satisfying \cref{asmp:lipschitzness,asmp:sparsity,asmp:isolation} with parameters $(\alpha, \beta, k, L)=(\eta,\sfrac{1}{9},3,3)$ such that given inaccurate propensity scores $\wh{e}$ with $\norm{\wh{e}-e}_\infty\leq \eps$ and $n=\Omega(d/(\rho\eps)^2)$ samples %
            \begin{itemize}[itemsep=-3pt]
                \item[$\triangleright$] The RMSE of IPW and doubly robust estimators is $\Omega\left(\sfrac{1}{{\eta}} \right)$; %
                \item[$\triangleright$] The RMSE of $\eps$-Trimmed IPW estimator (which removes all $\eps$-outliers) is $\Omega(1)$; 
                \item[$\triangleright$] The RMSE of the Estimator in \cref{infthm:1} is $O(\eps+{\sfrac{1}{\sqrt{n}}})$.
            \end{itemize}
        \end{inftheorem} 

        \begin{table}[h!]
            \centering
                \caption{
                    $\eps$-Robust RMSE of different estimators where %
                    $(\alpha,\beta,L,k=O(1))$ are parameters of \cref{asmp:lipschitzness,asmp:sparsity,asmp:isolation},
                    $\rho$ is the mass of $\beta$-outliers, and
                    unconfoundedness holds for some unknown propensity scores.
                }
                \vspace{2mm} 
            \begin{tabular}{p{6.5cm} c c}
            \toprule
            {Estimator} & No Assumptions & \cref{asmp:lipschitzness,asmp:sparsity,asmp:isolation}\\ 
            \midrule
            IPW or Doubly-Robust Estimator & 
                $\infty$ & 
                    $\infty$ \\[1mm] 
            $\eta$-IPW Estimator (see Table \ref{tab:CIPWCapturesExistingEstimators})& 
                $O\inparen{\rho+\frac{1}{\sqrt{n\beta}}}$ & 
                    $O\inparen{{\frac{\eps}{\eta}} + \rho + \frac{1}{\sqrt{n\beta}}}$ \\[3mm] 
            \textbf{Our \cref{infthm:1}}  & 
                --- & 
                    $O\inparen{{\eps+\rho \alpha L} + \frac{1}{\sqrt{n\beta}}}$~\footnotemark{}
            \\[2mm] 
            \midrule
            Information-Theoretic Lower Bounds & 
                $\Omega\sinparen{\rho}$ (\cref{coro:infoLB:rho}) & 
                    $\Omega\sinparen{
                            \rho\cdot \min\inbrace{1, \alpha L} 
                        }$ (\cref{thm:infoLB:alphaL}) \\ 
            \bottomrule
            \end{tabular}
            \label{tab:RMSEBounds}
        \end{table} 

        \footnotetext{\cref{algorithm} does not take $L$ as input. If $L$ is provided, then $\alpha L$ in the RMSE can be improved to $\min{\sinbrace{1,\alpha L}}$: one can simply check if $\alpha L\leq 1$ and, if so, output the CIPW estimate, and, otherwise, use the $\beta$-Trimmed IPW estimate.}

        \subsection{Other Contributions} \label{sec:overview}

        In this section, we highlight some other important contributions of our work. In particular, in \cref{intro:cipw}, we explore the robustness of a natural extension of IPW estimators which we call \textit{coarse IPW estimators (CIPW)}, and, in \cref{intro:data}, we justify the need for data-dependent clustering.
        
        \subsubsection{Properties of CIPW Estimators} \label{intro:cipw} %
        
    We introduce a family of Coarse IPW (CIPW) estimators that captures almost all the existing IPW estimators in the literature and we explore their performance with respect to \textbf{Q1} and \textbf{Q2}. 
    
    \paragraph{CIPW Estimators.}
    Each estimator $\tau_{\cS, N}$ in this family is specified by a partitioning of the covariate space into sets $\cS=\inbrace{S_1, S_2, \dots}$ and an additional \textit{null} set $N$.
    Given $\inparen{\cS, N}$, the corresponding \CIPW{} estimator $\tau_{\cS, N}$ is defined as the IPW estimator over the coarse domain--where all covariates belonging to any one set $S\in \cS$ are merged into a single covariate and all covariates in the null set $N$ are ignored (see \cref{sec:cipw} for a formal definition).
    This family captures the IPW estimators as well as the aforementioned variants (see \cref{tab:CIPWCapturesExistingEstimators} below).
    (The only exception is doubly-robust estimators, which are captured by using doubly-robust estimators instead of IPW on the coarse domain.)
        
        \begin{table}[h!]
            \centering
                \caption{
                    \CIPW{} estimators capture existing IPW estimators (see \cref{sec:examplesIPW} for their definitions).
                }
                \vspace{2mm} 
                \small 
            \begin{tabular}{p{5cm} p{11cm}}
            \toprule
            \textbf{Estimator} &\textbf{Choice of $(\cS, N)$}\\ 
            \midrule
            IPW (aka Horvitz-Thompson)
                    & $\cS=\sinbrace{\inbrace{x}\colon x\in \R^d}$ and $N=\emptyset$\\% 
            $\eta$-Trimmed IPW 
            & $N=\sinbrace{x\in \R^d\colon e(x)\not\in [\eta,1-\eta]}$ and $\cS=\sinbrace{\inbrace{x}\colon x\not\in N}$\\%
            Neyman Estimator
                    & $\cS=\R^d$ and $N=\emptyset$\\ %
            Balancing Score Estimator
                    & $\cS=\inbrace{\inbrace{x\colon b(x){=}z}\colon z\in \R}$ and $N=\emptyset$, where $b$ is the balancing score\\ %
            Blocking Estimator 
                    & $\cS$ is the partition formed by the blocks and $N=\emptyset$ \\ %
            \bottomrule
            \end{tabular}
            \label{tab:CIPWCapturesExistingEstimators}
        \end{table}

            \paragraph{Robust Root Mean Squared Error.} Since all estimators $E$ we consider, even the ones in \cref{intro:main}, are asymptotically normal, their confidence intervals scale with their root-mean-squared-error (RMSE), defined as $\sqrt{\Ex_{\dataset\sim \cD}\sinsquare{\sabs{E(\dataset; e)-\tau}^2}},$ where $\dataset$ is the dataset of size $n$ and $e$ is the propensity scores.
                Our interest is in the worse-case RMSE under additive perturbations of propensity scores: 
                for any $\eps>0$, let $B(e,\eps)$ be the $\ell_\infty$-ball of radius $\eps$ around $e$ and define 
                \begin{align*}
                    \RMSE_{\cD, \epsilon}{(E)} \coloneqq 
                    \max_{\wh{e} \in B(e,\epsilon)}
                    \sqrt{\mathbb{E}_{\dataset\sim \cD} \left[ |E\inparen{\dataset;\wh{e}} - \tau|^2\right]}\,.
                    \tagnum{$\eps$-Robust RMSE}{eq:robustRMSE}
                \end{align*}
                As mentioned in \cref{sec:intro}, for any $\eps>0$, the $\eps$-Robust RMSE of IPW and doubly-robust estimators can be arbitrarily larger than their respective non-robust RMSE (\cref{prop:IPW_has_bad_mse}).
                When $\eps=0$, we simplify the above notation and use $\RMSE_{\cD}{(E)}$.

            \paragraph{Computational Complexity.} %
                Since we care about a data-dependent way to design the CIPW estimator, a natural question that arises is to find the best CIPW estimator given 
                 a censored dataset $\dataset$ of size $n$ from $\cD$ and $\eps>0$. In particular, we have to find a partition $(\cS, N)$ such that $\tau_{\cS, N}$ has the minimum $\eps$-Robust RMSE among all \CIPW{} estimators.
                
                In other words, we want to solve the optimization problem: $\argmin_{\cS, N}\; 
                                            \RMSE_{\cD,\eps}{(\tau_{\cS, N})}.$
                We begin with the special case of this problem with $\eps=0$ and consider its decision version: given number $U$, verify whether $\min_{\cS, N}\; \RMSE_{\cD}{(\tau_{\cS, N})}\leq U$.
                We call this problem \problemRMSE{}, whose inputs are $U$ and (all relevant parameters of) $\cD$.
                
                Since our focus is on computational complexity, we further assume that for any fixed partition $(\cS, N)$, $\RMSE_{\cD}{(\tau_{\cS, N})}$ is efficiently computable.
                We show that even in this special case, the problem is $\mathsf{NP}$-hard.
                In fact, not only is it $\mathsf{NP}$-hard, but also notoriously hard to approximate.
                \begin{restatable}[Hardness of Approximation]{theorem}{hardnessApproximation}\label{thm:hardness:approximation}
                    If $\mathsf{P} \neq \mathsf{NP}$, then there is no exponential-factor approximation algorithm for \textbf{\problemRMSE{}}, i.e., there is no algorithm that, given an instance of \textbf{\problemRMSE{}} with bit-complexity $b$,\footnote{The bit complexity of $A$ is the number of bits required to encode $A$ using the standard binary encoding (which, e.g., maps integers to their binary representation, rational numbers as pair of integers, and vectors/matrices as a tuple of their entries) \cite[Section 1.3]{grotschel2012geometric}.} 
                    checks if there is a partition $(\cS, N)$ such that $\RMSE_{\cD}{(\tau_{\cS, N})}\leq e^{O(b)}\cdot U$. %
                \end{restatable}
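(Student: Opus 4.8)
The plan is to reduce from \problemSS{} (which is $\mathsf{NP}$-hard) to \problemRMSE{}, building instances whose optimal RMSE is doubly‑exponentially small on yes‑instances and a fixed constant on no‑instances; the claimed inapproximability then follows because this ratio is $e^{\Theta(b)}$, where $b$ is the bit‑complexity of the produced instance. It is cleanest to first argue the gap for the squared objective — i.e.\ to show that distinguishing $\MSE$‑optimum $\le U^2$ from $\MSE$‑optimum $> e^{\Omega(b)}U^2$ is $\mathsf{NP}$‑hard — and then take square roots, since $\RMSE=\sqrt{\MSE}$ turns an $e^{\Omega(b)}$ gap in MSE into an $e^{\Omega(b)}$ gap in RMSE.

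Given a \problemSS{} instance $\instanceSS=(a_1,\dots,a_m;b)$ of bit‑complexity $s$ (the $a_i$ positive integers after the usual rescaling), I would construct an explicit finite unconfounded distribution $\cD$ whose covariate space has one covariate $x_i$ per item $i\in[m]$, together with a constant number of auxiliary covariates — among them a ``pivot'' with an extreme propensity score, so that any partition keeping it isolated (or in a block of too‑small average propensity) suffers a huge variance term. The covariate probabilities, propensities, and conditional outcome means $\mu_0,\mu_1$ (all of bit‑complexity $\poly(s)$) would be chosen so that: (i)~the bias $\bias_\cD(\tau_{\cS,N})$ of a CIPW estimator decomposes as a sum of per‑block and per‑null‑covariate terms, each a nonnegative penalty except for the term of the block containing the pivot, whose bias is an affine function of $\big(\sum_{i\in I}a_i\big)-b$, where $\{x_i:i\in I\}$ are the item‑covariates merged into that block — so the total bias is exactly $0$ precisely when the partition encodes a subset $I$ with $\sum_{i\in I}a_i=b$; and (ii)~whenever no such exact encoding is present, some penalty is at least a fixed $\delta=\Omega(1)$ (using integrality of the $a_i$), and this cannot be circumvented by placing the pivot, or any non‑negligible‑mass covariate, into the null set $N$. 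I would then take $n$ exponentially large in $s$ (say $n=2^{s^{10}}$, which needs only $\poly(s)$ bits) and set $U=\sqrt{V/n}$, where $V=\poly(s)$ upper‑bounds the variance term of the best CIPW estimator; this choice is legitimate because in a yes‑instance the bias is exactly $0$, so the optimum RMSE is at most $\sqrt{V/n}=2^{-\Omega(s^{10})}\cdot\poly(s)$. The resulting \problemRMSE{} instance has bit‑complexity $b=\Theta(s^{10})$.

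It then remains to check the two directions. If $\instanceSS$ is a yes‑instance, the partition that merges the pivot with a solution subset and leaves all other covariates as singletons (with $N=\emptyset$) has bias $0$ and variance at most $V/n$, hence RMSE at most $U$. If $\instanceSS$ is a no‑instance, then by (i)--(ii) every partition $(\cS,N)$ has bias at least $\delta$ (or an astronomically large variance), so $\min_{\cS,N}\RMSE_\cD(\tau_{\cS,N})\ge\delta=\Omega(1)$; since $e^{cb}\cdot U=e^{cb}\cdot 2^{-\Omega(s^{10})}$ is $o(1)$ once the constant $c$ is chosen below the hidden constant in the exponent, no partition satisfies $\RMSE_\cD(\tau_{\cS,N})\le e^{cb}U$. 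An $e^{O(b)}$‑approximation algorithm for \problemRMSE{} would therefore decide \problemSS{} in polynomial time, contradicting $\mathsf{P}\ne\mathsf{NP}$.

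The crux is the construction of $\cD$, and specifically establishing (ii): the minimizer ranges over \emph{all} block structures $\cS$ \emph{and} all null sets $N$, so one must rule out every way of making the bias small, including clever combinations of small merge‑biases with small nulling‑biases. Handling $N$ is the delicate part, since discarding a covariate of mass $q$ with local treatment effect $\mu_1-\mu_0$ contributes bias $\asymp q(\mu_1-\mu_0)$; the construction should therefore arrange that the local treatment effect is large exactly at the covariates one would be tempted to null, while being constant within each block that a solution‑encoding partition forms, so that driving the bias below $\delta$ forces an exact subset sum. A secondary technical point is verifying the per‑block variance behavior — in particular that merging the pivot with a target‑summing subset pushes that block's average propensity above a fixed threshold — which is arranged by the choice of the pivot's and the items' propensities.
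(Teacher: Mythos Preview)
Your high-level strategy matches the paper's: reduce from \problemSS{}, plant a covariate with extreme propensity, and engineer the parameters so that small RMSE forces a correct subset encoding; the exponential gap comes from a free parameter (you blow up $n$; the paper introduces a small $\eps\ll 1/(Am)^4$, sets $n=\eps^{-7}$ and $U=8A^2\eps^7$, and gets YES/NO separated by a factor $1/\sqrt{\eps}$ in MSE).

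There are, however, two genuine gaps. First, the structural ansatz~(i)---that the signed bias decomposes into per-block and per-null terms that are all \emph{nonnegative} except the pivot's block---is not a property of CIPW bias in general and you do not show how to engineer it: with $\mu_0\equiv 0$, block~$S$ contributes $\sum_{x\in S}\cD(x)\mu_1(x)\bigl(e(x)/e(S)-1\bigr)$ to $\Ex[\tau_{\cS,N}]-\tau$, which has no fixed sign. The paper does not attempt any such decomposition; instead it takes $e(x)=1$ for \emph{every} non-pivot covariate, so that merging or separating non-pivot covariates leaves $\Ex[\tau_{\cS,N}]$ literally unchanged, and soundness becomes a direct five-way case split on the pivot's location, each case either forcing large MSE or yielding bias $\alpha^2\bigl(\sum_{i\in R}a_i-T\pm o(1)\bigr)^2$. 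Second, and more substantively, the paper needs \emph{two} auxiliary covariates with distinct roles, and your sketch has no analogue of the second. Besides the low-propensity pivot $x_m$ (whose $\mu_1$ encodes the target $T$), there is a high-mass anchor $x_{m-1}$ with $e(x_{m-1})=1$ and $\cD(x_{m-1})\gg\cD(x_j)$ for all other~$j$. The anchor is precisely what makes your property~(ii) go through: nulling $x_{m-1}$ forces $\Omega(1/m^2)$ bias regardless of the rest (Case~3B), and merging $x_m$ into any block that excludes $x_{m-1}$ still leaves that block's coarse propensity at $O(k\alpha/\eps)$, hence huge variance (Case~2B). Your proposed mechanism---make the local treatment effect large at covariates one is tempted to null---cannot do this job because outcomes are bounded in $[-1,1]$; the paper uses mass asymmetry instead. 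Without an anchor-type gadget and the accompanying case analysis, the soundness direction remains unsupported.
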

                In the proof of \cref{thm:hardness:approximation} we construct an exponential-gap-inducing partition from \problemSS{} to \problemRMSE{}.
                We overview key ideas and present the complete proof %
                in \cref{proofOverivews:hardness}. 
            
            \paragraph{A Criterion Guaranteeing Small Robust RMSE.}
                Given the computational hardness of finding the CIPW estimator with the minimum RMSE (even approximately), we focus on finding a \toa{good} CIPW estimator that is robust to inaccuracies in propensity scores and outliers. 
                Toward this, we will analyze an upper bound on $\RMSE_{\cD}(\tau_{\cS, N})$ focusing, for now, on the $\eps=0$ case.
                The standard bias-variance decomposition implies that 
                $ 
                    \RMSE_{\cD}(\tau_{\cS, N})
                    = 
                    \bias_{\cD}(\tau_{\cS, N}) + {\variance_{\cD}(\tau_{\cS, N})}^{\sfrac{1}{2}}
                $.
                After algebraic manipulation, we derive explicit expressions for both quantities (\cref{lem:exp_of_mse}).
                While these expressions are complicated, the key takeaway is that with bounded outcomes and $L$-Lipschitzness, the bias and variance are upper bounded as follows 
                \begin{align}
                    \bias_\cD\inparen{\tau_{\cS, N}}
                    & \leq O(\cD(N)) + 
                     \frac{1}{1-\cD(N)} \sum_{S\in \cS} O(L)\cdot \diam{(S)}\cdot \cD(S)\,, \nonumber \\
                    \variance_\cD(\tau_{\cS, N})
                        & \leq 
                        \frac{1}{n} \cdot \frac{1}{1 - \cD(N)} \sum_{S \in \cS} 
                            \frac
                            {\cD(S)}
                            {
                                e(S) \inparen{1-e(S)}
                            }\,,
                        \label{eq:simpleUpperbound}
                \end{align}
                where $e(S)$ is the average propensity score over $S$, i.e., $e(S) = \frac{1}{\mathrm{vol(S))}}\int_S e(x) \text{d} x$ and $\diam(S) \coloneqq \max_{x,x'\in R} \norm{x-x'}_p$
                with the same $p$ as in \cref{asmp:lipschitzness} (Lipschitzness).
                While these formulas are still not very simple, they are at least independent of the distribution of the outcomes.
                Moreover, observe the following from \cref{eq:simpleUpperbound}:
                \begin{enumerate}
                    \item If the \textit{coarse} propensity score of each set $S\in \cS$ is $\Omega(1)$, i.e., $e(S)(1-e(S))=\Omega(1)$, and the mass of $N$ is bounded away from 1, e.g., $\cD(N)\leq 1-\Omega(1)$, then the variance is $O\inparen{\sfrac{1}{n}}$.
                    \item If the mass of $N$ is small $\cD(N)\leq \gamma$, then bias is at most $O(L)\cdot \max_{S\in \cS}\diam{(S)}+\gamma$. Hence, ensuring that the diameter of all (non-null) sets is small and $\cD(N)$ is small ensures that the bias is small.
                \end{enumerate}
                
                \noindent These arguments lead to the following simple criterion for a good partition for a CIPW estimator.
                \begin{definition}[Good-Local Partition]\label{def:goodlocal}
                    Given constants $\alpha, \beta, \gamma >0$, a partition $\inparen{\cS, N}$ is said to be an \goodlocal{\alpha,\beta,\gamma} if (1) $\cD\inparen{N}\leq \gamma$, (2) each $S\in \cS$ has diameter $\diam{\inparen{S}}\leq \alpha$, and (3) each $S\in \cS$ has coarse propensity score $e(S)(1-e(S))\geq \beta$.
                \end{definition}
                The aforementioned upper bounds (\cref{eq:simpleUpperbound}) imply that any CIPW estimator constructed from an \goodlocal{\alpha,\beta,\gamma} with $\gamma\leq \sfrac{1}{2}$ has a RMSE of at most $O\inparen{\alpha L+\gamma+\sfrac{1}{\sqrt{n \beta}}}$.
                Crucially, the criteria in \cref{def:goodlocal} are independent of the distribution of the outcomes which may not be known.
                Nevertheless, Lipschitzness is crucial to obtain such a criterion: for instance, without Lipschitzness, for any $\alpha,\beta,\gamma>0$ and any \goodlocal{\alpha,\beta,\gamma} different from IPW, \cref{lem:need_data_dependence} constructs a distribution $\cD$ where $\RMSE_{\cD}{\inparen{\tau_{\cS, N}(C)}}\geq \sfrac{1}{3}$.
                Apart from a small non-robust RMSE, importantly, we show that a good-local partition also implies a small robust RMSE.
                \begin{restatable}[Robust RMSE of Good-Local Partition]{lemma}{RobustMSEofLGpartition}\label{lem:RobustMSEofLGpartition}
                    Suppose \cref{asmp:lipschitzness} holds {with parameter $L>0$.}
                    An \goodlocal{\alpha,\beta,\gamma} for any $\alpha,\beta>0$ and $\gamma\in [0,\sfrac{1}{2}]$ satisfies
                    $\RMSE_{\cD, \epsilon}{\inparen{\tau_{\cS, N}}}
                            \leq 
                                O\sinparen{
                                    \alpha L
                                    + \inparen{\sfrac{\eps}{\beta}}
                                    + \gamma
                                    +\sfrac{1}{\sqrt{n\beta}}
                                }
                                $
                    for any $\eps\in [0, \sfrac{\beta}{2}]$. 
                    
                \end{restatable}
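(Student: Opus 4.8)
The plan is to bound the $\eps$-robust RMSE via the bias--variance decomposition uniformly over $\wh e\in B(e,\eps)$, i.e.\ to show
$\bias_\cD\inparen{\tau_{\cS,N}(\cdot;\wh e)}\le O(\alpha L+\eps/\beta+\gamma)$ and $\variance_\cD\inparen{\tau_{\cS,N}(\cdot;\wh e)}\le O\inparen{1/(n\beta)}$ for every such $\wh e$, and then invoke $\RMSE_{\cD,\eps}(\tau_{\cS,N})\le\sup_{\wh e}\bias_\cD+\sup_{\wh e}\variance_\cD^{1/2}$. The whole argument is a perturbation of the $\eps=0$ analysis that already yields \cref{eq:simpleUpperbound}, so I would reuse the bias and variance formulas from \cref{lem:exp_of_mse}. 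The one structural fact that drives everything is that the good-local property is quantitatively stable under $\eps$-perturbations: condition~(3) gives $e(S)(1-e(S))\ge\beta$, hence $e(S)\ge e(S)(1-e(S))\ge\beta$ and likewise $1-e(S)\ge\beta$, so $e(S)\in[\beta,1-\beta]$ (necessarily $\beta\le\sfrac14$). Therefore for any $\wh e\in B(e,\eps)$ the coarse propensity satisfies $\wh e(S)\in[\beta-\eps,\,1-\beta+\eps]\subseteq[\sfrac{\beta}{2},\,1-\sfrac{\beta}{2}]$ as soon as $\eps\le\sfrac{\beta}{2}$; in particular $\wh e(S)(1-\wh e(S))=\Omega(\beta)$ and the ratios $e(S)/\wh e(S)$ and $(1-e(S))/(1-\wh e(S))$ are at most $2$.

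Given this, the variance step is immediate: substituting $\wh e(S)$ into the variance expression of \cref{eq:simpleUpperbound} and using $\wh e(S)(1-\wh e(S))=\Omega(\beta)$ together with $\cD(N)\le\gamma\le\sfrac12$ (so the effective number of non-null samples is $\Theta(n)$ with high probability) gives $\variance_\cD(\tau_{\cS,N}(\cdot;\wh e))\le O(1/(n\beta))$, identical to the $\eps=0$ bound up to the constant absorbed by replacing $\beta$ with $\Omega(\beta)$.

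The bias step is where the new $\eps/\beta$ term appears. From the bias formula, the contribution of a set $S$ is controlled by $\Ex\insquare{\mathbbm{1}[X\in S]\,\mu_t(X)\,\tfrac{e(X)-\wh e(S)}{\wh e(S)}}$ (and the analogous term with $1-\wh e(S)$). I would split $e(X)-\wh e(S)=(e(X)-e(S))+(e(S)-\wh e(S))$. The second piece is at most $|\mu_t(X)|\cdot\eps/\wh e(S)\le O(\eps/\beta)$ by bounded outcomes and $\wh e(S)\ge\sfrac{\beta}{2}$, contributing $O(\eps/\beta)\,\cD(S)$. For the first piece I use that $e(S)$ is the $\cD$-conditional mean of $e$ on $S$, so $\Ex[\mathbbm{1}[X\in S](e(X)-e(S))]=0$; subtracting a reference value $\mu_t(x_0)$ with $x_0\in S$ and using $L$-Lipschitzness with $\diam(S)\le\alpha$ (hence $|\mu_t(X)-\mu_t(x_0)|\le L\alpha$) plus the bounded ratio $e(S)/\wh e(S)\le 2$ bounds this piece by $O(L\alpha)\,\cD(S)$ --- exactly as in the derivation of \cref{eq:simpleUpperbound}. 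Summing over $S\in\cS$ and adding the $O(\cD(N))=O(\gamma)$ contribution from ignoring the null set yields $\bias_\cD(\tau_{\cS,N}(\cdot;\wh e))\le O(L\alpha+\eps/\beta+\gamma)$, uniformly in $\wh e$. Plugging the two bounds into the bias--variance decomposition gives $\RMSE_{\cD,\eps}(\tau_{\cS,N})\le O\sinparen{\alpha L+\eps/\beta+\gamma+1/\sqrt{n\beta}}$.

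The step I expect to be the main obstacle is the bias analysis: preserving the cancellation $\Ex[\mathbbm{1}[X\in S](e(X)-e(S))]=0$ while the denominator is the perturbed $\wh e(S)$ rather than $e(S)$, and carefully tracking which propensity ($e$, which governs the data, versus $\wh e$, which is used in the weights) enters each factor. A secondary (routine) subtlety is that the Hájek-style normalization used to handle the null set makes the estimator's denominator random, so transferring the clean population identities of \cref{lem:exp_of_mse} to a genuine bound on the RMSE requires the standard concentration of that denominator around $1-\cD(N)\ge\sfrac12$; this needs to be noted but follows from $\gamma\le\sfrac12$ and $n$ large.
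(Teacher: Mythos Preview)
Your proposal is correct and follows essentially the same route as the paper: bias--variance decomposition, the observation that $e(S)\in[\beta,1-\beta]$ forces $\wh e(S)\in[\beta-\eps,1-\beta+\eps]$ so the perturbed coarse propensities stay bounded away from $0$ and $1$, then Lipschitzness plus $\diam(S)\le\alpha$ for the bias and $\wh e(S)(1-\wh e(S))=\Omega(\beta)$ for the variance.

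The one mechanical difference worth flagging is how the $\eps/\beta$ term is extracted. The paper does it multiplicatively: it writes $\tfrac{1}{\wh e(S)}\in\tfrac{1}{1\pm\eps/\beta}\cdot\tfrac{1}{e(S)}$, factors $(1\pm\eps/\beta)$ out of the entire MSE expression, and then runs the $\eps=0$ bias analysis verbatim with the true $e(S)$ in every denominator. You instead keep $\wh e(S)$ in the denominator and split the numerator additively as $e(X)-\wh e(S)=(e(X)-e(S))+(e(S)-\wh e(S))$, isolating the $\eps/\beta$ contribution directly in the second piece. Both are fine; the paper's factorization is marginally cleaner because it reduces to a single known lemma, while your split is a bit more transparent about where each error term originates. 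One thing to spell out when you write it up: in your ``first piece'' the reason the bound is $O(L\alpha)\cD(S)$ rather than $O(L\alpha/\beta)\cD(S)$ is that $\sum_{x\in S}\cD(x)\lvert e(x)-e(S)\rvert\le 2e(S)\cD(S)$, so the $e(S)$ in the numerator cancels against $\wh e(S)$ in the denominator via the bounded ratio you already noted.
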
 
                Hence, not only does a good-local partition have a small RMSE but, if the inaccuracy in the propensity scores $\eps$ is much smaller compared to the coarse propensity scores, then it also has a small $\eps$-Robust RMSE. %
                Contrast this with existing estimators: 
                    while $\eps$-Robust RMSE of IPW and doubly-robust estimators can be arbitrarily larger than their RMSE, the $\eps$-Robust RMSE and RMSE of a CIPW estimator based on a good-local partition are $\inparen{\sfrac{\eps}{\beta}}$-close.
                Intuitively, this is because in any \goodlocal{\alpha,\beta,\gamma} $(\cS, N)$, for all $S\in \cS$, the coarse propensity score $e(S)$ is bounded away from $0$ and 1.
                This is useful as, when propensity scores are bounded away from $0$ and 1, then $\eps$-additive errors in propensity scores also imply $\inparen{1\pm O(\eps)}$-\textit{multiplicative} errors---which are significantly easier to handle.  
                Finally, we note that CIPW estimators arising from any good-local partition are \emph{asymptotically normal} (\cref{sec:asymptoticNormality}), allowing us to obtain corresponding confidence intervals.

            \paragraph{Learning a Good-Local Partition.}
                We note that \cref{lem:RobustMSEofLGpartition} requires the partition to be independent of the dataset on which the estimator is evaluated.
                Hence, one must first learn a good-local partition $(\cS, N)$ using the dataset $\dataset$ and then evaluate the learned $\tau_{\cS, N}$ on a fresh dataset $\dataset'$.
                However, finding a good-local partition seems challenging even from a statistical perspective.
                For instance, to even test whether a given partition $(\cS, N)$ is good, we, at least, require a uniform convergence bound over all the possible sets that we might use, e.g., this is needed to even check whether $\cD(N) \leq \gamma$ or not.
                Our next result shows that, even assuming that the sets in the partition belong to a VC class, it is not possible to learn a good-local partition without further assumptions.
                \begin{lemma}[Informal; see \cref{sec:statisticalHardness}]\label{lem:impossibility_of_learning_gl_partition_intro}
                    Suppose \cref{asmp:lipschitzness} holds with $L = O(1)$ and there exists an \goodlocal{\alpha,\beta,\gamma} with all sets in a class with $O(1)$ VC dimension.
                    For any $n\geq 1$, there is an unconfounded distribution $\cD$ such that given a censored dataset $\dataset\sim \cD$ of size $n$, it is information-theoretically impossible to find an \goodlocal{\alpha,\beta,\gamma} with probability $>\sfrac{1}{8}$.
                \end{lemma}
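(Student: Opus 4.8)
The plan is a many-point Le Cam argument. The conceptual crux is that the budget constraint $\cD(N)\le\gamma$ in \cref{def:goodlocal} forbids the trivial strategy of dumping every suspicious region into the null set: with a fixed budget only $O(\gamma)$ mass can be hidden in $N$, so any successful learner is forced to actually \emph{locate} the outlier regions, i.e.\ to estimate coarse propensity scores of low-mass sets --- a needle-in-a-haystack task that a bounded sample cannot solve. I would therefore build a family of $M=\Theta(n)$ nearly identical instances that disagree only about which one of $M$ low-mass regions is the outlier region, and whose good-local partitions are pairwise incompatible, and then conclude by indistinguishability plus a counting bound.

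Concretely, work in $\R$ (so the relevant sets are intervals, a class of VC dimension $2$), fix $\beta=\sfrac19$, $\gamma=\sfrac1{16}$, an arbitrary $\alpha>0$, and set $Y(0)\equiv Y(1)\equiv 0$, so that $\mu_t\equiv 0$ is $0$-Lipschitz and unconfoundedness is automatic. Fix the covariate law across all instances: take $M=16n$ pairwise far-apart intervals $B_1,\dots,B_M$ of length $10\alpha$, each with a concentric central sub-interval $B_i'$ of length $2\alpha$, and let $\cD$ place mass $\mu=\sfrac1{16n}$ uniformly on each $B_i'$ (so $\cD$ is a probability measure). For $j\in[M]$ define the instance $\cD_j$ by $e\equiv\sfrac\beta2$ on $B_j$ and $e\equiv\sfrac12$ elsewhere, and the base instance $\cD_0$ by $e\equiv\sfrac12$ everywhere. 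Each $\cD_j$ satisfies the hypotheses: the partition with null set $N=B_j$ (one interval, $\cD(N)=\mu\le\gamma$) together with $\cS$ obtained by tiling each $B_i$, $i\ne j$, by four length-$\alpha$ subintervals --- each of coarse propensity $\sfrac14\ge\beta$ --- is an $(\alpha,\beta,\gamma)$-good-local partition all of whose sets lie in the $O(1)$-VC class of intervals.

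Two structural facts drive the argument. \textbf{Commitment:} in $\cD_j$, every $x\in B_j'$ is at distance $\ge 4\alpha>\alpha$ from $\R\setminus B_j$, so any set $S$ with $\diam(S)\le\alpha$ meeting $B_j'$ lies inside $B_j$ and hence has coarse propensity $\sfrac\beta2(1-\sfrac\beta2)<\beta$; thus $B_j'$ cannot intersect any valid block, which forces $\cD(N\cap B_j')=\mu$ for \emph{every} good-local partition of $\cD_j$. Since the $B_i'$ are disjoint with $\cD(B_i')=\mu$ and every good-local partition obeys $\cD(N)\le\gamma$, any single partition $(\cS,N)$ is good-local for at most $\sfrac\gamma\mu=n$ of the $M=16n$ instances. \textbf{Indistinguishability:} $\cD_j$ and $\cD_0$ share the same covariate marginal and the same (degenerate) outcomes and differ only in the conditional law of $T$ on $B_j$, an event of $\cD$-probability $\mu$, whence $\mathrm{TV}(\cD_j^{\otimes n},\cD_0^{\otimes n})\le n\,\mathrm{TV}(\cD_j,\cD_0)\le\sfrac{n\mu}2=\sfrac1{32}$. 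Now let $\mathcal A$ be any (possibly randomized) learner mapping a size-$n$ censored dataset to a partition, and let $j$ be uniform on $[M]$. The Commitment bound gives $\Pr_{j,\dataset\sim\cD_0}[\mathcal A(\dataset)\text{ is good-local for }\cD_j]\le\sfrac nM=\sfrac1{16}$, since for each fixed dataset at most $n$ of the $M$ events hold; replacing $\cD_0$ by $\cD_j$ in the $j$-th term costs at most $\mathrm{TV}(\cD_j^{\otimes n},\cD_0^{\otimes n})\le\sfrac1{32}$, so $\frac1M\sum_j\Pr_{\dataset\sim\cD_j}[\mathcal A(\dataset)\text{ good-local for }\cD_j]\le\sfrac1{16}+\sfrac1{32}<\sfrac18$. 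Hence some $j^\star$ has success probability $<\sfrac18$ on $\cD\coloneqq\cD_{j^\star}$, which is the claimed hard instance (and is allowed to depend on $n$).

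The main obstacle is not the Le Cam bookkeeping but the simultaneous calibration of the instances. One must make each candidate region ``fat'' (diameter $\gg\alpha$) so that its deep interior is untileable, yet arrange that the witnessing null set is a \emph{single} ball --- this is what keeps the promised good-local partition inside an $O(1)$-VC class and is why a ``core'' of always-outlier mass, if one needs it for larger $\gamma$, must be one ball rather than many. At the same time one must keep each candidate mass at $\mu=\Theta(\sfrac1n)$ so the $n$-fold product laws stay close in total variation, and one must make the null budget $\gamma$ a constant fraction of the total candidate mass (here $M\mu=1$ while $\sfrac\gamma\mu=n\ll M$) so that $\cD(N)\le\gamma$ genuinely forces the learner to commit to only a constant fraction of the candidates. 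Threading these requirements through the exact inequalities ($\sfrac\beta2(1-\sfrac\beta2)<\beta\le\sfrac14$, $\mu\le\gamma$, $\sfrac1{16}+\sfrac1{32}<\sfrac18$, with all of $\alpha,\beta,\gamma,L$ kept $O(1)$) is the delicate part; the remainder is standard.
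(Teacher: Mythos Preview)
Your proof is correct and takes a genuinely different route from the paper. The paper proves the formal version (\cref{lem:impossibility_of_learning_gl_partition}) by a black-box reduction to agnostic PAC learning: it sets $k=\abs{\cS}>n$, plants an outlier $z_i$ next to each ``negative'' covariate $x_i$ of a hard agnostic instance, and shows that any \goodlocal{\alpha,\beta,\gamma} must merge each $z_i$ with its neighbor $x_i$, thereby revealing (up to a $\gamma$-fraction) which covariates are negative---which would solve the PAC problem and hence needs $\Omega(k)>n$ samples by the standard VC lower bound. You instead run the Le Cam/averaging argument from first principles: $M=16n$ candidate outlier regions of mass $\mu=\sfrac{1}{16n}$, the Commitment step forces $B_j'\subseteq N$ in any good partition for $\cD_j$, the budget $\cD(N)\le\gamma$ caps the number of compatible $j$'s at $n$, and the $n\mu/2$ TV bound handles the switch from $\cD_0$ to $\cD_j$. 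Your argument is more self-contained (no appeal to \cref{fact:lower_bound_sample_complexity}) and makes the information-theoretic obstruction explicit; the paper's reduction is more modular and highlights that the hardness is exactly that of learning a union of many small intervals. The two are morally the same lower bound---your averaging step is essentially the inlined proof of the VC sample-complexity bound the paper invokes.
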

                This lemma excludes the existence of any finite-sample algorithm that finds a good-local partition without further assumptions.
                Briefly, the proof uses that $\cS$ may contain a large number of subsets and reduces PAC-learning of a union of $\abs{\cS}$ intervals (of width at most $\alpha$) up to $\gamma$-error in the agnostic setting to finding an \goodlocal{\alpha,\beta,\gamma}.
                The result then follows by verifying that the VC dimension of this class is at least $\abs{\cS}$ and using lower bounds on the sample complexity of PAC learning in the agnostic setting \citep{shalev2014understanding}.

            This brings us to \cref{asmp:sparsity,asmp:isolation} (Sparsity and Isolation).
            These assumptions guarantee the existence of an \goodlocal{\alpha, \Omega(\beta), 0} $(\cS^\star, N^\star)$ where $\abs{\cS^\star}=k=O(1)$ and all sets of the partition belong to a family with VC-dimension $O(d)$. This allows us to build the algorithm promised in \cref{infthm:1}, although the resulting estimator is a generalized version of CIPW. We refer to \cref{sec:cipw,sec:algoOverview} for the technical details.

     \subsubsection{Need for Data Dependence} \label{intro:data}
                In this section we ask whether there exists some CIPW, different from IPW, that has better RMSE uniformly over all unconfounded distributions $\cD$.
                \begin{restatable}[Impossible to Weakly Beat IPW]{lemma}{needDataDependence}\label{lem:need_data_dependence}%
                    For any $n \geq 1$ and $\tau_{\cS, N}$ distinct from IPW, there is an unconfounded distribution $\cD$, such that $\RMSE_{\cD}(\tau_{\cS, N}) \geq \sfrac{1}{3}$ and $\RMSE_{\cD}(\ipw{}) = O(\sfrac{1}{\sqrt{n}}).$
                \end{restatable}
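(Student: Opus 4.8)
The plan is to show that any coarsening $(\cS,N)$ other than the trivial one used by IPW destroys enough information to force a constant bias on a carefully chosen \emph{bounded}, well-overlapped instance, while plain Horvitz-Thompson IPW still attains its $O(1/\sqrt n)$ rate on that same instance. Recall that $\tau_{\cS,N}$ differs from IPW precisely when either (i) $N\neq\emptyset$, or (ii) $N=\emptyset$ but some $S\in\cS$ contains two distinct covariates (the degenerate case $\cS=\emptyset$, where $\tau_{\cS,N}$ is a fixed constant, is trivial: pick any $\cD$ with bounded outcomes whose $\tau$ is at distance $\ge\tfrac13$ from that constant, and note IPW with $e\equiv\tfrac12$ is fine). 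I would treat (i) and (ii) separately, using in each a two-atom distribution with deterministic outcomes, so that unconfoundedness holds automatically.

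For case (ii), fix distinct $x_1,x_2$ in a common $S\in\cS$ and let $\cD$ place mass $\tfrac12$ on each, with $\mu_0\equiv 0$, $\mu_1(x_1)=1$, $\mu_1(x_2)=-1$, and propensities $e(x_1)=\tfrac34$, $e(x_2)=\tfrac14$; then $\tau=\tfrac12(1)+\tfrac12(-1)=0$. On the coarsened domain $S$ collapses to one covariate whose coarse propensity $e(S)$ depends only on $e$ and $\cD$ near $S$ (not on the outcomes) and equals $\tfrac12(e(x_1)+e(x_2))=\tfrac12$. Using unconfoundedness, $\Ex[TY(1)\mid X{=}x]=e(x)\mu_1(x)$, the Horvitz-Thompson form of $\tau_{\cS,N}$ has expectation $\frac{e(x_1)\mu_1(x_1)+e(x_2)\mu_1(x_2)}{e(x_1)+e(x_2)}=\tfrac12$, so its bias is $\tfrac12$; more generally the bias is $\frac{(e(x_1)-e(x_2))(\mu_1(x_1)-\mu_1(x_2))}{2(e(x_1)+e(x_2))}$, which is why I take the two propensities unequal and the outcomes with opposite signs --- this is the one identity I want to pin down exactly against the formal definition in \cref{sec:cipw}. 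Since outcomes are bounded by $1$ and $e(x)(1-e(x))=\tfrac{3}{16}$ on the support, the Horvitz-Thompson IPW estimator is unbiased with variance $O(1/n)$, so $\RMSE_\cD(\ipw)=O(1/\sqrt n)$ while $\RMSE_\cD(\tau_{\cS,N})\ge\abs{\bias_\cD(\tau_{\cS,N})}=\tfrac12>\tfrac13$.

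For case (i), fix $x^\star\in N$ and some $x_0\notin N$, and let $\cD$ put mass $\tfrac12$ on each with $e(x^\star)=e(x_0)=\tfrac12$, $\mu_1(x^\star)=1$, and all other conditional means zero, so $\tau=\tfrac12$. Since $\tau_{\cS,N}$ discards every sample landing in $N$, its expectation is determined by the $x_0$-mass alone and equals $0$ (whether or not one renormalizes by $\cD(\R^d\setminus N)$), giving bias $\tfrac12$; again outcomes are bounded and $e\equiv\tfrac12$, so $\RMSE_\cD(\ipw)=O(1/\sqrt n)$ and $\RMSE_\cD(\tau_{\cS,N})\ge\tfrac12$.

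I do not expect a deep obstacle here; the only care points are (a) reading off the correct coarse propensity $e(S)$ and the normalization convention from the formal definition of $\tau_{\cS,N}$ and confirming the bias computation above holds for it (for a H\'ajek-normalized variant the same conclusion should persist, with the bias appearing in the large-$n$ limit), and (b) checking that each constructed $\cD$ genuinely satisfies unconfoundedness, strict overlap, and bounded outcomes, so that the $O(1/\sqrt n)$ bound for IPW is literally valid with an absolute constant.
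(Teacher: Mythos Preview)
Your proposal is correct and takes essentially the same approach as the paper: a case split on whether $N\neq\emptyset$ or some $S\in\cS$ merges two covariates, and in each case a two-atom distribution with bounded outcomes and propensities bounded away from $0$ and $1$, chosen so that the coarsened estimator has constant bias while IPW is unbiased with $O(1/n)$ variance. The paper uses $e(x_1)=7/8$, $e(x_2)=1/8$, $\mu_1(x_1)=1$, $\mu_1(x_2)=0$ in the merged-set case (bias $3/8$) and reuses the same numbers for the $N\neq\emptyset$ case, whereas you pick slightly different constants and a cleaner separate construction for case~(i); both work, and your concern about the normalization is moot since in case~(ii) you have $N=\emptyset$ so the normalizer is exactly $n$, and in case~(i) all outcomes at $x_0$ are zero so the estimator is identically $0$ regardless of normalization.
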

                Therefore, not only is it impossible to weakly beat the RMSE of the IPW estimator, but any \CIPW{} estimator different from IPW has at least a \textit{constant} RMSE for some distribution $\cD$ irrespective of the number of samples $n$.
                This negative result demonstrates that the true power of CIPW estimators can only be realized by designing the estimator in a \emph{data-dependent way} (see \cref{sec:data-dependent} for further discussion).

\subsection{Related Work}\label{sec:related_works}

    \textit{Sensitivity Analysis} in Causal Inference includes a large body of work studying the performance of estimators with inaccurate propensity scores \citep{rosenbaum2002observational}. 
    The literature on sensitivity analysis studies several error models, including pointwise multiplicative errors in propensity scores (which are implied by bounding the odds ratio, i.e., $\max_x \inparen{\nfrac{e(x)}{\wh{e}(x)}}\cdot\inparen{\nfrac{(1-\wh{e}(x)}{(1-e(x))}}$) \citep{tan2006distributional,kallus2019interval,kallus2021minimax}, errors with bounded expected value of the odds-ratio \citep{huang2023variancebased}, and fixed interval bounds on propensity scores \citep{aronow2012interval}.
    However, almost all of the works operate under the assumption that there are no $\beta$-outliers for some $\beta=\Omega(1)$.
    In contrast, we allow the existence of outliers and design new estimators that, under mild assumptions (\cref{asmp:lipschitzness,asmp:sparsity,asmp:isolation}), have a smaller RMSE than existing estimators (\cref{infthm:2}).

    \smallskip \noindent \textit{Robust Statistics} includes a growing number of works that, broadly speaking, solve estimation tasks amidst data inaccuracies (ranging from adversarial corruption \citep{huber2004robust} to interval censoring \citep{cohen1991truncated})--with a growing focus on computational efficiency \citep{diakonikolas2019recent}.
    The simplest task is perhaps estimating the mean of a Gaussian distribution when a \(\rho\)-fraction of samples is adversarially corrupted.
    Here, common estimators (e.g., the empirical mean) fail, but robust alternatives can estimate the true mean within an additive factor \(O(\rho)\).
    Moreover, without additional assumptions, it is information-theoretically impossible to achieve a better estimation of the mean.
    Such information-theoretic lower bounds also arise in our work.
    For instance, given a distribution $\cD$ with a mass $\rho$ of outliers, any estimator must have an RMSE of at least $\Omega(\rho)$ (\cref{sec:infoLB}).
    Under \cref{asmp:lipschitzness,asmp:sparsity,asmp:isolation}, however, one can find more robust estimators (e.g., \cref{infthm:1}). %

    \smallskip\noindent \textit{Learning propensity scores} from data is necessary as propensity scores are unknown in an observational study.
    A number of Machine Learning methods are used for estimating propensity scores: from simple methods such as logistic regression (see \citet{westreich2010estimation}), to more sophisticated methods such as regression trees, random forests, and boosted variates of these methods \citep{mccaffrey2004propensity}, to neural networks \citep{westreich2010estimation}.
    These methods, however, are susceptible to errors in the estimated propensity scores.
    In this work, we show that they can be combined with CIPW estimators to increase the robustness of the resulting estimators to these errors.

    \smallskip \noindent{
        \textit{Non-propensity-score-based estimators} of the average treatment effect predict the expected (potential) outcomes conditional on the covariates, i.e., they predict $\mu_t(x)\coloneqq \Ex[Y(t)\mid X{=}x]$ for each covariate $x$ and $t\in \zo$ \cite{chernozhukov2024applied}.
        While this quantity is identifiable under unconfoundedness,~\footnote{To see why the quantity is identifiable, observe that, under unconfoundedness, $\Ex[Y(t)\mid X{=}x] = \Ex[Y(t)\mid X{=}x, T{=}t]$ and $Y(t)$ is observed when $T=t$.} the estimations error scale with inverse propensity scores (concretely, with $\frac{1}{{e(x)(1-e(x))}}$) and, hence, can be large in the presence of outliers $x$ whose propensity scores $e(x)$ are close to 0 or 1 \cite{wager2018estimation}.
        It would be interesting to see if a similar approach as CIPW estimators can reduce the errors of estimators predicting the conditional expected outcomes $\mu_t(x)$ in the presence of outliers.
        }

\section{Preliminaries}
\label{sec:preliminaries}
    \paragraph{Causal Inference Setup.}
        An observational study involves several individuals or units (e.g., patients); each described by a vector of \textit{covariates} $X\in \R^d$ (encoding, e.g., medical history).
        In the study, each unit $x$ is assigned a treatment $T$ (e.g., medication)
            with some fixed but known probability $e(x)\in (0,1)$ \textit{independent} of all other units and we observe a %
            treatment-dependent outcome $Y(T)$ (e.g., the extent of the patient's symptoms).
        We focus bounded outcomes $Y(T)\in [-1,1]$ and binary treatments $T\in \zo$.\footnote{Our results extend to categorical treatments paying the standard degradation with the number of categories and to outcomes in $[-B, B]$ (for any $B>0$) with linear degradation in the RMSE with $B$.}
        The tuple $(X,Y(0),Y(1),T)$ has an unknown joint distribution $\cD$.
        The following parameters of $\cD$ are of interest: for each $x$ and $t\in \zo$:
        \[
            \cD(x) \coloneqq \Pr_\cD[X{=}x]\,,\qquad 
            \mu_t(x) \coloneqq \Ex_\cD[Y(t) \mid X{=}x]\,,\qquad 
            v_t(x) \coloneqq \var_\cD[Y(t) \mid X{=}x]\,.
        \]
        which denote the probability mass, conditional means, and conditional variances at $x$ respectively.
        Of specific interest will be outlier covariates: for any $\beta>0$ and (possibly inaccurate) propensity scores $e\colon \R^d\to (0,1)$, covariate $x\in \R^d$ is said to be a $\beta$-outlier with respect to $e$ if $e(x)(1-e(x)) < \beta$.
        We denote the set of $\beta$-outliers with respect to $e$ by 
        \[
            \white{.}\qquad\outlier{}(\beta; e)\coloneqq \inbrace{x\in \R^d\colon e(x)(1-e(x)) < \beta}\,.
            \tag{$\beta$-outliers w.r.t. $e$}
        \]
        
    \paragraph{Average Treatment Effect.}
        An important goal in causal inference is to estimate the effect of treatment on the average outcomes, which is known as the average treatment effect \citep{imbens2015causal,hernan2023causal}.
        Concretely, the average treatment effect is defined as 
        \[
            \white{.}\hspace{30mm}
            \tau \coloneqq 
            \Ex_\cD\insquare{Y(1) - Y(0)}.
            \tag{Average Treatment Effect}
        \]
        If one had access to independent samples from $\cD$, then we could estimate the above expectation using empirical averages.
        The main difficulty is that the samples are \emph{censored}: instead of observing sample $s=(X, Y(0), Y(1), T)$, only the censored sample $c=(X, Y(T), T)$ is observed.
        That is if the unit was assigned treatment, then $Y(0)$ is censored, and $Y(0)$ is censored otherwise.
        Due to this censoring, $\tau$ is unidentifiable without additional assumptions on $\cD$.\footnote{{One simple example demonstrating this is as follows: let $Y(0)$ be deterministically $0$ and $Y(1)=T\cdot A + (1-T)B$ for some random variables $A$ and $B$ independent of $T$. Here, $\tau=\Ex_\cD[Y(1)-Y(0)]=\Ex[T]\Ex[A]+\Ex[1-T]\Ex[B]$. However, since $Y(1)$ is only observed if $T=1$, one does not gain any information about $\Ex[B]$ and, hence, $\Ex[B]$ is non-identifiable from the censored data implying that $\tau$ is also non-identifable.}}
        Unconfoundedness is a standard assumption that enables identifiability.
        It requires that conditional on the covariates the outcomes are independent of the treatment, i.e., 
        \begin{equation}
            \white{.}\hspace{30mm}
            Y(0), Y(1) ~\bot~ T ~~\mid~~ X\,.
            \tagnum{Unconfoundedness}{eq:unconfounded}
        \end{equation}   
        Unconfoundedness is known by many other names: ignorability, conditional exogeneity, conditional independence, and selection on observables \citep{imbens2015causal,hernan2023causal}.

    \paragraph{Estimators.} %
        An estimator $E$ of $\tau$ is a function that takes a censored dataset $\dataset =\inbrace{\inparen{x_i, y_i, t_i}\colon 1\leq i\leq n}$ as input and outputs a scaler value $E(\dataset)$.
        Apart from $\dataset$, $E$ may also get some additional or \textit{nuisance} parameters $\eta$, such as the propensity scores, in which case, we overload the notation to $E(\dataset;\eta).$ 
        (See \cref{sec:examplesIPW} for several examples.)
        Ideally, we want $E(\dataset;\eta)$ to be equal to $\tau$.
        Since this is almost never possible, we settle for confidence intervals on $\tau$.
        To obtain confidence interval $E(\dataset;\eta)$ must have a known distribution, e.g., a standard normal distribution.
        This is guaranteed by asymptotic normality (which requires that $E(\dataset;\eta)$ approaches a normal distribution as $\abs{\dataset}\to \infty$).
        We show that CIPW estimators constructed from good-local partitions are asymptotically normal in \cref{sec:asymptoticNormality}.

\section{Coarse Inverse Propensity-Score Weighted Estimators}\label{sec:cipw}
    In this section, we formally define CIPW estimators, which are inspired by the success of tree-methods in estimating treatment effects \citep{mccaffrey2004propensity,wager2018estimation}.
    \begin{definition}[{CIPW Estimators}]
        Given a partition $\inparen{\cS=\inbrace{S_1,S_2,\dots},N}$ of $\R^d$, 
        where both $(S_i)_i$ and $N$ are subsets of $\R^d$,
        the corresponding CIPW estimator with input a censored dataset $\dataset = \{(x_i, y_i, t_i) : 1 \leq i \leq n\}$ and (possibly inaccurate) propensity scores $e\colon \R^d\to (0,1)$ is %
        \[
            \tau_{\cS, N}(\dataset; e)
                \coloneqq 
                \frac{1}{
                \abs{\sinbrace{i\in [n] \colon x_i \not\in N}}
                }
                \cdot 
                \sum_{S\in \cS} \sum_{i\colon x_i\in S}
                    \inparen{
                        \frac{t_i y_i}{e(S)}
                        - \frac{(1 - t_i) y_i}{1 - e(S)}
                    }\,,
            \yesnum\label{eq:CIPW_expression_on_data}
        \]
        where $e(S)$ is the \emph{coarse propensity score} over $S$, i.e., $e(S)\coloneqq \Pr_\cD\insquare{T=1\mid x\in S}.$ 
    \end{definition}
    As mentioned in \cref{sec:intro}, the above definition is quite general and captures several existing propensity score-based estimators (see \cref{tab:CIPWCapturesExistingEstimators}).
    The motivation of CIPW estimators comes from the use of decision trees and random forests for learning propensity scores \citep{mccaffrey2004propensity,wager2018estimation}: these methods also partition the covariate domain and learn a ``common'' propensity score for all covariates in the domain.
    We present expressions for the bias, variance, and RMSE of CIPW estimators in \cref{sec:expression_of_mse}.
    We show that $\tau_{\cS, N}$ is asymptotically normal whenever $\min_{S\in \cS} e(S)(1-e(S))$ is bounded away from 0 (as for good-local partitions)  in \cref{sec:asymptoticNormality}.

    \paragraph{Fractional CIPW Estimators.} If we restrict to the CIPW estimators defined above, then RMSE guarantee in \cref{infthm:1} deteriorates to $\eps+\alpha L+\myfrac{\sqrt{n\beta}}$.
        To achieve the improved guarantee of $\eps+\rho\cdot \alpha L+\myfrac{\sqrt{n\beta}}$, we need to consider \emph{fractional CIPW estimators}: %
            each fractional CIPW estimator is defined by a fractional partition $(\cS, N, w)$, where, for each $T\in \cS\cup \inbrace{N}$ and covariate $x$, there is a weight $w_T(x)\in [0,1]$.
            The weights satisfy the natural condition that they sum to 1 for any covariate $x$, i.e., $\sum_T w_T(x)=1$.

        Before defining $\tau_{\cS, N, w}$, we explain the connection between a fractional partition $(\cS, N, w)$ and a usual partition; they have many similarities. %
        Concretely, $(\cS, N, w)$ is a (usual) partition of an extension $\hypo{X}_{\cS, N, w}$ of the original covariate-domain $\hypo{X}=\R^d$: in the extended covariate-domain, each covariate $x\in \hypo{X}$ (with mass $\cD(x))$, is \toa{split} into multiple covariates $\inbrace{x_T\colon T\in \cS\cup\inbrace{N}}$ which have probability masses $\inbrace{\cD(x)\cdot w_T(x)\colon T\in \cS\cup \inbrace{N}}$ respectively.
        For any $(\cS, N, w)$, $\tau_{\cS, N, w}$ is defined as the CIPW estimator on the new domain $\hypo{X}_{\cS, N, w}\times  [-1,1] \times \zo$.
        One issue is that the dataset $\dataset$ we receive lies in the original domain and we would like to have samples in the extended domain, where the fractional CIPW operates.
        To address this issue we proceed as follows: for each $(x_i,y_i,t_i)\in \dataset$, we replace it with $((x_i)_T, y_i, t_i)$ with probability $\propto w_T(x_i)$ for each $T\in \cS\cup\inbrace{N}$.
        Note that such a re-sampling process is necessary, since if one naïvely added weights to \cref{eq:CIPW_expression_on_data}, then the terms will no longer be independent.
        
        To summarize, fractional CIPW estimators are defined on a fractional partition $(\cS, N, w)$, they are CIPW estimators on an extension of the domain, and given $\dataset$, we can generate samples from the extended domain and, hence, compute $\tau_{\cS, N, w}$.
        Further, we can compute the RMSE of fractional CIPW estimators by using the expression of RMSE of a CIPW estimator on the extended domain.
        Finally, a fractional partition is said to be \goodlocal{\alpha,\beta,\gamma} if the corresponding CIPW estimator on the extended domain is \goodlocal{\alpha,\beta,\gamma}.

    \paragraph{Computing Coarse Propensity Scores.}
    The expression of $\tau_{\cS, N}$ involves coarse propensity scores $\inbrace{e(S)\colon S\in \cS}$ (\cref{eq:CIPW_expression_on_data}).
        These coarse propensity scores can be estimated from data, akin to the usual propensity scores.
        For instance, under standard assumptions on 
            (i) propensity scores (e.g., finite fat shattering dimension \citep{alon1997scale}; as implied by common smoothness assumptions \citep{hirano2003efficient}) and
            (ii) sets in $\cS$ (e.g., finite VC dimension \citep{blumer1989learnability} and lower bounds on probability mass), 
            coarse propensity scores can be estimated from data:
            for instance, given $\wh{e}\colon \R^d\to (0,1)$ and $\dataset$ of size $n=\Omega\inparen{\eps^{-2}}$, 
            for any $S\in \cS$,
            the empirical average $\sfrac{\inparen{\sum_{i}\wh{e}(x_i) \mathds{I}[x_i\in S]}}{\inparen{\sum_{i}\mathds{I}[x_i\in S]}}$ 
            is additively $\eps$ close to $e(S)$ with high probability.
        For simplicity of exposition, we henceforth assume access to an oracle that given a set $S$ (from a class of finite VC dimension), outputs a value $v\in e(S)\pm \eps$. %

\section{Algorithmic Result and Overview of Our Algorithm}\label{sec:algoOverview}
        In this section, we present our main algorithmic result: an efficient algorithm that, given $\eps$-accurate propensity scores, outputs a prediction of $\tau$ that has a small $\eps$-Robust RMSE.
        \begin{restatable}[\textbf{Main Algorithmic Result}]{theorem}{mainTheorem}
            Suppose \cref{asmp:lipschitzness,asmp:sparsity,asmp:isolation} hold with parameters $\alpha,\beta,k,L$ and
            fix any $0\leq \eps\leq \beta/10$ and any $\delta > 0$.
            \label{thm:main}
            There is an algorithm (\cref{algorithm}) that, 
                given
                an estimate of propensity scores $\wh{e} \in B(e,\eps)$,
                constants $\alpha,\beta,\eps$, and
                a censored dataset $\dataset$ of size $n=\wt{\Omega}\inparen{\frac{1}{\rho^2\eps^{2}}\cdot {\inparen{k^3d+\log{(1/\delta)}}}{}}$, %
            outputs a value $\tau_A$ in $O(n^3)$ time such that, with probability at least $1-\delta,$ 
            \[
                    \Ex\insquare{\abs{\tau-\tau_A}^2}^{\sfrac{1}{2}}
                ~\leq~ 
                    O\inparen{
                        {\eps + \rho \alpha L} + \frac{1}{\sqrt{n\beta}}
                    }
                \qquadand \frac{\tau_A - \Ex[\tau_A]}{\sqrt{\var[\tau_A]}}~\xrightarrow{n\to\infty}~
                    \cN\inparen{0, 1}
                \,.
            \]
        \end{restatable}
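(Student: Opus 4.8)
The plan is to reduce \cref{thm:main} to learning a good-local fractional partition and then to invoke \cref{lem:RobustMSEofLGpartition} on the extended domain. First I would split the censored dataset $\dataset$ into two independent halves $\dataset_1,\dataset_2$ of size $\Theta(n)$: $\dataset_1$ is used only to \emph{learn} a fractional partition $(\hat\cS,\hat N,\hat w)$, and $\dataset_2$ only to \emph{evaluate} (after re-sampling into the extended covariate domain) the output $\tau_A\coloneqq\tau_{\hat\cS,\hat N,\hat w}(\dataset_2;\wh{e})$ as a CIPW estimator on that domain. This separation is forced by \cref{lem:RobustMSEofLGpartition}, which needs the partition to be independent of the evaluation sample, and by \cref{lem:impossibility_of_learning_gl_partition_intro}, which rules out learning a good-local partition in the absence of \cref{asmp:sparsity,asmp:isolation}.

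The heart of the argument is the learning step. \cref{asmp:sparsity,asmp:isolation} guarantee the existence of a fractional \goodlocal{\alpha,\Omega(\beta),0} $(\cS^\star,N^\star,w^\star)$ in which $k=O(1)$ ``merged'' sets --- each built from one of the $k$ diameter-$\alpha$ outlier balls of \cref{asmp:isolation}, augmented with a controlled amount of extra mass so that its coarse propensity score is bounded away from $0$ and $1$ --- have total mass $O(\rho)$, while every remaining covariate forms its own singleton; moreover all candidate sets lie in a class of VC dimension $O(kd)$ (unions of $O(k)$ balls in $\R^d$). On $\dataset_1$ I would then: (i) enumerate the $O(n)$ candidate ball centers given by the data points and use the coarse-propensity-score oracle of \cref{sec:cipw} together with VC uniform convergence to estimate, for each candidate ball, its mass and coarse propensity score to additive accuracy $\sim\rho\eps/k$; (ii) using the $3\alpha$-separation of \cref{asmp:isolation} and $k=O(1)$, greedily select $k$ balls whose union has symmetric-difference mass $O(\rho\eps)$ with $\bigcup_j S_j^\star$ (the $k$ rounds together with the $\poly(k)$ accuracy blow-up needed for a union bound are what produce the $k^3 d$ factor in the sample size); and (iii) fix the fractional weights $\hat w$ by transferring into each selected merged set just enough additional inlier mass (of near-$\sfrac12$ propensity) to certify a coarse propensity score $\ge\Omega(\beta)$ while keeping the transferred mass $O(\rho/k)$, putting weight $1$ on singletons elsewhere, and absorbing the tiny unidentified outlier mass $O(\rho\eps)$ into $\hat N$. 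A uniform-convergence and concentration argument then shows that with probability $\ge 1-\delta$, $(\hat\cS,\hat N,\hat w)$ is a fractional \goodlocal{O(\alpha),\Omega(\beta),O(\rho\eps)} whose merged sets have total mass $O(\rho)$.

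Given such a partition, the RMSE bound is immediate from \cref{lem:RobustMSEofLGpartition} applied to the CIPW estimator on the extended domain: since $\eps\le\beta/10$, it yields $\RMSE_{\cD,\eps}(\tau_A)\le O(\alpha L\cdot(\text{merged mass})+\sfrac{\eps}{\beta}+\gamma+\sfrac{1}{\sqrt{n\beta}})=O(\rho\alpha L+\eps+\sfrac{1}{\sqrt{n\beta}})$, where the factor $\rho$ multiplying $\alpha L$ comes from bounding the bias term $\sum_{S}L\,\diam(S)\,\cD(S)$ of \cref{eq:simpleUpperbound} by $O(L)\cdot\alpha\cdot O(\rho)$ because singletons have zero diameter, and where both $\sfrac{\eps}{\beta}$ (for fixed $\beta$, or whenever the coarse scores are $\Omega(1)$) and $\gamma=O(\rho\eps)$ are $O(\eps)$. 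Asymptotic normality of $\tau_A$ follows because the learned partition has all coarse propensity scores bounded away from $0$ and $1$, so the result of \cref{sec:asymptoticNormality} applies verbatim. The running time is dominated by iterating over $O(n)$ candidate centers, each requiring an $O(n)$--$O(n^2)$ pass (the latter if the weight assignment is cast as a small linear program), over $O(1)$ rounds, which is $O(n^3)$.

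I expect the main obstacle to be steps (ii)--(iii): converting \cref{asmp:sparsity,asmp:isolation} into a computationally efficient procedure that provably outputs a partition meeting \emph{all three} good-local conditions at once, with the stated sample complexity $\wt{\Omega}\big(\frac{k^3d+\log(1/\delta)}{\rho^2\eps^2}\big)$. The delicate point is to simultaneously (a) cover essentially all outlier mass by $O(1)$ small balls, (b) certify from finitely many samples that each merged set's coarse propensity score is $\Omega(\beta)$ (which controls the variance), and (c) keep the merged sets' total mass $O(\rho)$ (which controls the bias, hence gives the $\rho\alpha L$ rather than $\alpha L$ rate); balancing (b) against (c) is exactly where the fractional reweighting and the $\rho\eps$ estimation accuracy are needed. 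The re-sampling of $\dataset_2$ into the extended domain and the accompanying concentration bookkeeping are the more routine parts.
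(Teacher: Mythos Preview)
Your proposal is correct and follows the same overall architecture as the paper: split $\dataset$ into a learning half and an evaluation half, learn a fractional \goodlocal{O(\alpha),\Omega(\beta),O(\eps)} on the first half, evaluate the corresponding CIPW estimator on the second half, then read off the RMSE bound from \cref{eq:robustMSE_upperbound:term1}/\cref{lem:RobustMSEofLGpartition} (with the $\rho$ factor in front of $\alpha L$ coming from bounding the total weighted mass of non-singleton sets by $O(\rho)$), and asymptotic normality from \cref{sec:asymptoticNormality}. Your identification of the delicate point --- simultaneously certifying (b) the coarse-propensity condition and (c) the $O(\rho)$ merged-mass condition, which is exactly what forces the fractional reweighting --- matches the paper's Step~3/Step~4.

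Where you and the paper differ is in how the balls are selected. You propose to enumerate all $O(n)$ data points as candidate centers and, for each candidate ball, estimate its mass and coarse propensity score via uniform convergence and the oracle of \cref{sec:cipw}, then greedily pick $k$ of them. The paper's \cref{algorithm} is more direct: because $\eps\le\beta/10$, one can identify outliers \emph{pointwise} from $\wh{e}$ alone --- \cref{fact:outliers} gives $\outlier(\beta/9;e)\subseteq\{x:\wh{e}(x)(1-\wh{e}(x))<\beta/3\}\subseteq\outlier(\beta;e)$ --- so the algorithm simply iterates over the identified outliers in $\dataset_1$, centers a radius-$\alpha$ ball at each uncovered one, and the $3\alpha$-separation of \cref{asmp:isolation} automatically forces these balls to be disjoint and to each contain one of the $B_i$. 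The coarse-propensity condition for each such ball then follows for free from \cref{asmp:sparsity} via \cref{fact:coarsePropensity}, with no estimation needed at the ball-selection stage; uniform convergence is invoked only in the $\mathsf{UpdateWeight}$ step to set the fractional weights. Your route would work too, but the paper's buys a cleaner algorithm and a simpler correctness proof (five short lemmas), at the cost of relying on pointwise access to $\wh{e}$ rather than only the coarse-score oracle.
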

        In particular, the estimator's RMSE only increases by a small additive amount due to inaccuracies in propensity scores and outliers.
        As shown by \cref{prop:comparisonToBaselines}, this is significantly better than existing estimators which, under the same assumptions, can have $\Omega(1)$ RMSE.
        Moreover, the RMSE is close to the information-theoretic lower bound of $\Omega\inparen{\rho\min\inbrace{1, \alpha L}}$. %
        The algorithm does not use $L$ or $k$.
        If it is given $L$ as input, then we can bring the RMSE closer to the lower bound by using \cref{thm:main} if $\alpha L<1$ and otherwise using $\beta$-Trimmed IPW Estimator.
        Finally, the estimator in \cref{thm:main} is {asymptotically normal} and, hence, enables \mbox{one to generate confidence intervals on $\tau$.}
        \smallskip
        
        The pseudocode of the algorithm is in \cref{algorithm} and the proof of \cref{thm:main} appears in \cref{sec:proofof:thm:main}.

        \subsection{Technical Overview of \cref{algorithm}}
        \cref{algorithm} splits $\dataset$ into two equal parts $\dataset_1$ and $\dataset_2$; $\dataset_1$ is used to learn a (fractional) good-local partition $(\cS, N, w)$ and $\dataset_2$ is used to evaluate $\tau_{\cS, N, w}$.
        Our algorithm contains multiple components that handle different challenges. For this reason, we divide the description of the algorithm into three stages, where each stage handles a more general setting. %
        \cref{algorithm} presents the final algorithm.

        \paragraph{Stage 1 - Accurate Propensity Scores ($\eps=0$).}
                In this case, the algorithm then constructs an \goodlocal{2\alpha, \Omega(\beta), 0} of covariates in $\dataset_1$ in $O(n^3)$ time.
                To do this, the algorithm covers all $\beta$-outliers in $\dataset_1$ by balls of diameter $2\alpha$; the outliers are identifiable as $\eps=0$.
                To do this efficiently, the algorithm utilizes that the $k$-balls in the cover of $\beta$-outliers are far (due to \cref{asmp:isolation}). %

                Let this partition be $(\cS, N)$.
                We want $(\cS, N)$ to also generalize to $\dataset_2$.
                Toward this, one can use standard uniform convergence bounds to show that all sets in $\cS\cup\inbrace{N}$ satisfy the conditions required by a good-local partition (this is where we use $\abs{\cS}=k=O(1)$).
                The difficulty is that there can be many covariates in $\dataset_2$ that are not in $\dataset_1$ and, hence, may not be covered by $(\cS, N)$. 
                In fact, there may be a large mass of them since we consider a continuous domain and, hence, have a 0 probability of sampling the same covariate twice.
                We show that \cref{asmp:isolation} implies that the set of $\beta$-outliers not covered by $(\cS, N)$ (whether in $\dataset_2$ or not) have a mass of at most $O\inparen{\sfrac{1}{n}}$ (\cref{lem:algorithm:step2}).
                This enables us to extend $(\cS, N)$ to a \goodlocal{2\alpha, \Omega(\beta), O(1/n)} of the whole domain $(\cS', N')$ (where we add all uncovered outliers to $N'$).

        \begin{algorithm}[t!]
            \DontPrintSemicolon  %
            \small 
            
            \KwIn{A censored dataset $\dataset$, propensity scores $\wh{e}\in B(e,\eps)$, and constants $\alpha,\beta>0$}
            \KwOut{An estimate $\tau_{\cS,N}$ of $\tau$}
            
            \SetKwFunction{FMain}{FindGoodLocalParition}
            \SetKwProg{Fn}{Function}{:}{}
             
\Fn{$\mathsf{FindGoodPartition}${($\dataset, \wh{e}, \alpha, \beta$})}{

                Split $\dataset$ into two datasets of equal size $\dataset_1$ and $\dataset_2$
                
                Initialize  $\cS=\emptyset$, $N=\emptyset$, and, $w=\nll$
                
                \For{~$(x,y,t)\in \dataset_1$ \textnormal{such that} ${\wh{e}(x)(1-\wh{e}(x))} < \sfrac{\beta}{3}$~}
                {
                    \If{~\textnormal{(the outlier)} $x$ \textnormal{has not already been covered}~}{
                        
                        Define an $\ell_\infty$-ball around $x:~$ $S=\inbrace{z\in \R^d\colon \norm{x-z}_\infty \leq \alpha}$
                        
                        Collect the non-outliers of $S$ {in $\dataset_1$:}~ $\hypo{T}=\inbrace{\inbrace{z}\colon z \in S{\cap \dataset_1} \text{ ~ and ~ } \wh{e}(z)(1-\wh{e}(z)) \geq \sfrac{\beta}{3}}$
                        
                        Update the weight $w$ based on:~ $\mathsf{UpdateWeight}$($S, \hypo{T}, \dataset_1, \wh{e}, w, \beta$)

                        Update the partition by adding $S$ and $\hypo{T}:~$ $\cS=\cS\cup\inbrace{S}\cup \hypo{T}$

                        Mark all covariates in the following set as covered $\inbrace{z\colon (z,y,t)\in \dataset_1 \text{ and } z\in S}$
                    }
                }

                \For{~\textnormal{all} $(x,y,t)\in \dataset_2$ \textnormal{where} $x$ \textnormal{has not been covered}~}{

                    \textbf{if}~~ $\wh{e}(x)(1-\wh{e}(x)) \geq \sfrac{\beta}{3}$ ~~\textbf{then}~~ add (the non-outlier) $\inbrace{x}$ to $\cS$\\ {\textbf{else}~~ add (the outlier) $x$ to $N$}
                }
                
                \KwRet $\tau_{\cS, N, w}(\dataset_2;\wh{e})$\;
                
            }
            
            \vspace{2mm}

            \Fn{$\mathsf{UpdateWeight}${($S, \hypo{T}, \dataset, \wh{e}, w, \beta$)}}{
                $\#$ The set $S$ is an $\ell_\infty$-ball around some outlier in $\dataset$
                
                $\#$ The set $\hypo{T}$ contains all singleton sets each corresponding to a non-outlier inside $S$
                \;

                Compute the fraction of outliers in ${\dataset}:~$ $\eta = \frac{1}{{\abs{\dataset}}}\sum_{(x,y,t)\in {\dataset}} \mathds{I}\insquare{\wh{e}(x)(1-\wh{e}(x)) < \beta/3}$
                \;
                
                Compute the fraction of outliers in ${S\cap \dataset}:~$ $\eta_S = \frac{1}{{\abs{S\cap \dataset}}}\sum_{(x,y,t)\in {S\cap \dataset }} \mathds{I}\insquare{\wh{e}(x)(1-\wh{e}(x)) < \beta/3}$
                \;

                \For{~\textnormal{all} $(x,y,t)\in \dataset$ \textnormal{such that} $x\in S$~}{
                    \uIf{~$\wh{e}(x)(1-\wh{e}(x)) < \beta/3$~}{
                        Set $w_S(x)=1,$ and, $w_T(x)=0$ for each $T\in \hypo{T}$
                    }
                    \Else{
                        \mbox{Set $w_S(x) = {{\small \min\hspace{0mm}}\inbrace{\hspace{-0.5mm}\frac{\eta_S+k^{-1}\eta}{1-\eta_S+k^{-1}\eta}, 1\hspace{-0.5mm}}}$,
                        \footnotemark{} and, $w_T(x) = 1{-}w_S(x)$ for the only set $T \in \hypo{T}$ containing $x$} 
                    }
                    
                }

                $\#$ Each $T \in \hypo{T}$ has zero weighted-mass of outliers

                $\#$ The set $S$ has a near-equal weighted mass of outliers and non-outliers.
                
                \KwRet $w$
            }

            \vspace{4mm}
            
            \caption{Algorithm From \cref{thm:main} (and \cref{infthm:1})}\label{algorithm}
            \end{algorithm}

        \paragraph{Stage 2 ($0\leq \eps\leq \sfrac{\beta}{10}$):}
                Stage 1 used propensity scores to (i) identify $\beta$-outliers and (ii) evaluate $\tau_{\cS', N'}(\dataset_2; e)$.
                Implementing (ii) with inaccurate propensity scores is not an issue as the $\eps$-Robust MSE of good-local partitions is well-behaved (\cref{lem:RobustMSEofLGpartition}).
                To implement (i), we show the following (\cref{fact:outliers}):
                for any $\eps\leq \sfrac{\beta}{10}$, 
                \[
                \forall\ {\wh{e}\in B(e, \eps)}\,,\quad 
                 \frac{\beta}{9}\text{-outliers}
                 ~~\subseteq~~ 
                 \inbrace{x\in \R^d\colon \wh{e}(x)(1-\wh{e}(x))\leq \frac{\beta}{3}} 
                 ~~\subseteq~~
                 \beta\text{-outliers}\,.
            \] 
            Hence, given any $\eps$-inaccurate propensity score $\wh{e}$, one can approximately identify the outliers, which we show is sufficient to ensure guarantees of Stage 1 up to constant factors (\cref{lem:algorithm:step3}).

            \footnotetext{We shortly mention that the number of $\ell_\infty$-balls $k$ used in our cover is not given to the algorithm, but our $\mathsf{UpdateWeight}$ routine uses it. However, there is an easy adaptation of this algorithm that finds $k$ which can be run as a pre-processing step: we run $\mathsf{FindGoodPartition}$ by omitting the calls to $\mathsf{UpdateWeight}$ and then $k$ will correspond to the number of iterations of the (first) outer \textbf{for} loop. Then we execute our algorithm with $k$ as input.}

        \paragraph{Stage 3 (Improving Guarantee on Bias from $\alpha L$ to $\rho\alpha L$):}
                Since $\cD(N')=O(1/n)$ and the diameter of all sets in $\cS'$ is $2\alpha$, an upper bound on the bias is as follows (see \cref{eq:simpleUpperbound}) 
                \[ 
                    \bias_\cD(\tau_{\cS', N'})
                    \leq 
                    O\inparen{\frac{1}{n}}
                    +O(\alpha L)\cdot \sum_{S\in \cS'\colon \diam(S)>0}\cD(S) %
                    \,.  
                \]
                To get an upper bound of $O(\rho \alpha L+(\sfrac{1}{{n}}))$, we need to ensure that the total mass of sets $S$ for which $\diam{(S)}>0$ is at most $\rho$.
                However, even under \cref{asmp:lipschitzness,asmp:sparsity,asmp:isolation}, there may be no partition satisfying this guarantee.
                    To see this suppose there is a single $\beta$-outlier,  $x_{\rm out}$, with mass $\cD(x_{\rm out})=\rho$.
                    Let $x_{\rm in}$ be a non-outlier close to $x_{\rm out}$ with mass $\cD(x_{\rm in})\gg \rho$.
                Since $x_{\rm in}$ is a point-mass, any $S$ covering $x_{\rm out}$ either covers $x_{\rm in}$ or does not cover $x_{\rm in}$.
                In the former case, then 
                $\cD(S)=\cD(x_{\rm in})+\cD(x_{\rm out})\gg \rho,$ 
                and, hence, we do not satisfy $\cD(S)\leq \rho$. %
                In the latter case, $e(S)(1-e(S))\leq \beta$ and, hence it cannot be a part of a good-local partition.
                We side-step the issue by considering \textit{weighted} partitions.
                Intuitively, in this case, one can construct two sets $S,S'$ and set $w_S(x_{\rm out})=1$ and $w_S(x_{\rm in})=\sfrac{\rho}{\cD(x_{\rm in})}$.
                This corresponds to the $\mathsf{UpdateWeight}$ routine in \cref{algorithm}.
                
    \subsection*{Additional Remarks}
        {Next, we present some additional remarks on \cref{asmp:lipschitzness,asmp:isolation,asmp:sparsity}.

        Our algorithm for estimating $\tau$ requires the knowledge of parameters in \cref{asmp:lipschitzness,asmp:isolation,asmp:sparsity}.
        Some of these parameters (those in \cref{asmp:isolation,asmp:sparsity}) can be estimated from data: given $\epsilon$ accurate propensity scores $\wh{e}$, one can check if \cref{asmp:isolation,asmp:sparsity} hold for all $x$ with $\wh{e}(x)(1-\wh{e}(x)) < 2\beta+\epsilon$. 
        This is sufficient as \cref{fact:propensityScores} implies that any $\beta$-outlier must satisfy $\wh{e}(x)(1-\wh{e}(x)) < 2\beta+\epsilon$.
        Estimating the Lipschitzness constant (i.e., testing \cref{asmp:lipschitzness}) requires access to $\mu_t(x)$ which may be unavailable in practice.
        However, if $\mu_t(x)$ is a parametric function of $x$ (a common assumption in practice \citep{imbens2015causal,hernan2023causal}), then \cref{asmp:lipschitzness} holds under mild assumptions on the underlying parameter; see discussion after \cref{asmp:lipschitzness}.

        \cref{asmp:lipschitzness} is required to ensure that CIPW estimators defined by good-local partitions have a small robust RMSE.
        \cref{asmp:sparsity} is a \textit{necessary} condition to ensure the existence of a good-local partition.
        Given \cref{asmp:sparsity} holds, \cref{asmp:isolation} ensures (i) the existence of a fractional good-local partition with (possibly) overlapping partitions and (ii) an efficient algorithm to find it. 
        While \cref{asmp:isolation} is not necessary to ensure the existence of a (fractional) good-local partition, to be able to identify the good-local partition from finite samples, some assumption in addition to \cref{asmp:sparsity} is necessary (see \cref{lem:impossibility_of_learning_gl_partition_intro}).
        That said, alternatives to \cref{asmp:isolation} may also be able to ensure (i) and (ii).
        For instance, if the outliers lie in \textit{known} affine subspaces and non-outliers have a ``sufficient'' mass close to these subspaces, then any covering of these subspaces by $\ell_p$-balls of diameter $\alpha$ is a good-local partition, and the corresponding CIPW estimator satisfies \cref{thm:main}'s guarantee.}

\section{Expressions of Bias and Variance of CIPW Estimators}\label{sec:expression_of_mse}
    In this section, compute the expressions of the bias and variance of CIPW estimators.
    These, in turn, imply expressions for the RMSE and MSE of CIPW estimators. %
    \begin{restatable}[{Bias and Variance of CIPW Estimators}]{theorem}{expOfMSE}\label{lem:exp_of_mse}%
            Fix any $n\geq 1$ and assume that Equation~\eqref{eq:unconfounded} holds (unconfoundedness). 
            For any partition $(\cS,N)$ of $\R^d$,  censored dataset $\dataset$ of size $n$ generated from $\cD$ with feature marginal $\cD_X$,  propensity scores $e = \{e(x) \}_{x \in \R^d}$ and conditional means/variances $\mu,v = \{\mu_0(x),\mu_1(x),v_0(x),v_1(x)\}_{x \in \R^d}$
            it holds that
            \[
                \Ex_{\dataset}\insquare{
                    \tau_{\cS, N}(\dataset;(e,\mu)) 
                }
                    = 
                    \sum_x
                        \cD_X(x) 
                        \left(
                        \frac{e(x)\mu_1(x)}{e(S_x)}
                        - \frac{\inparen{1-e(x)}\mu_0(x)}{1-e(S_x)}
                        \right
                                  )
                        \,,
            \]
            where $S_x$ is the unique $S\in \cS$ containing $x$.
            Moreover, it holds that
            the bias term $\mathrm{Bias}(\tau_{\cS, N}(\dataset;(e,\mu))) $ is equal to
            \begin{align*}
                    \abs{
                        \tau - 
                        \Ex_{\dataset}\insquare{
                            \tau_{\cS, N}(\dataset;(e,\mu)) 
                         }
                    }\,,
                \end{align*}
                and the variance term
                $\mathrm{Var}(\tau_{\cS, N}(\dataset;(e,\mu,v)))$
                is equal to
                \begin{align*}
                    \frac{1}{n}
                    \sum_x
                        \cD_X(x)\inparen{
                            \frac{e(x)\inparen{v_1(x)+\mu_1(x)^2}}{e(S_x)^2}
                            + 
                            \frac{\inparen{1 - e(x)}\inparen{v_0(x)+\mu_0(x)^2}}{\inparen{1 - e(S_x)}^2}
                        }
                    - {
                        \Ex\insquare{
                            \tau_{\cS, N}(\dataset;(e,\mu)) 
                         }
                      }^2\,.
            \end{align*}
        \end{restatable}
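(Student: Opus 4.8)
The plan is to reduce all three quantities to elementary moment calculations by first conditioning on the covariate vector. Writing the observed outcome as $y_i = t_i Y_i(1) + (1-t_i)Y_i(0)$ and using $t_i(1-t_i)=0$, the per-sample summand of \eqref{eq:CIPW_expression_on_data} for an index $i$ with $x_i \notin N$ becomes $Z_i \coloneqq \frac{t_i Y_i(1)}{e(S_{x_i})} - \frac{(1-t_i)Y_i(0)}{1-e(S_{x_i})}$, so that $\tau_{\cS,N}(\dataset;(e,\mu,v)) = \frac{1}{M}\sum_{i: x_i \notin N} Z_i$ with $M = \abs{\{i: x_i\notin N\}}$. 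The first step is to condition on $(x_1,\dots,x_n)$: this fixes $M$ and the set of active indices, and because the samples are i.i.d.\ the triples $(t_i, Y_i(0), Y_i(1))$ are conditionally independent across $i$, each distributed as $\cD$ given $X = x_i$.

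The second step computes the conditional moments of a single $Z_i$, which is where unconfoundedness is used: given $X_i = x$, the treatment $t_i$ is independent of $(Y_i(0), Y_i(1))$, so $\Ex[t_i Y_i(1)\mid X_i = x] = e(x)\mu_1(x)$ and $\Ex[(1-t_i)Y_i(0)\mid X_i = x] = (1-e(x))\mu_0(x)$; using $t_i\in\zo$ once more to kill the cross term in $Z_i^2$, one likewise gets $\Ex[t_i Y_i(1)^2 \mid X_i = x] = e(x)(v_1(x)+\mu_1(x)^2)$ and $\Ex[(1-t_i)Y_i(0)^2\mid X_i=x] = (1-e(x))(v_0(x)+\mu_0(x)^2)$. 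Dividing by the appropriate powers of $e(S_x)$ and $1-e(S_x)$ --- which are strictly positive since $e$ maps into $(0,1)$ --- gives closed forms for $\Ex[Z_i\mid X_i=x]$ and $\Ex[Z_i^2\mid X_i=x]$ matching the integrands in the statement, and boundedness of the outcomes makes all of these finite.

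The third step assembles the answers. For the expectation, take the expectation over the covariate vector of $\frac{1}{M}\sum_{i:x_i\notin N}\Ex[Z_i\mid x_i]$; by exchangeability of i.i.d.\ draws, the sample average of any function over the draws landing outside $N$ has expectation equal to the corresponding average under $\cD_X$, which yields the stated formula, and the bias expression is then literally its definition $\abs{\tau - \Ex[\tau_{\cS,N}]}$. For the variance, expand $\tau_{\cS,N}^2 = \frac{1}{M^2}\big(\sum_{i:x_i\notin N} Z_i^2 + \sum_{i\neq j} Z_i Z_j\big)$ and again condition on the covariates: the diagonal terms contribute the displayed second-moment integral (scaled so the overall normalization is $\tfrac{1}{n}$, as in the regime where the samples in $N$ are discarded so that $M$ equals the dataset size), while each off-diagonal term factors as $\Ex[Z_i\mid x_i]\Ex[Z_j\mid x_j]$ and, after taking the outer expectation over covariates, combines with the subtracted $\Ex[\tau_{\cS,N}]^2$ to leave exactly the claimed expression. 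The only real subtlety is that the normalizer $M$ is data-dependent, so $\tau_{\cS,N}$ is a self-normalized sum rather than a plain i.i.d.\ average; conditioning on the covariate vector at the outset is precisely what removes this difficulty, and everything afterwards is routine bookkeeping built on the two bounded-moment identities above. I therefore expect the main (and only mild) obstacle to be handling this random denominator cleanly; there is no deeper difficulty.
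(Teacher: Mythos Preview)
Your approach is essentially the same as the paper's: both condition on the covariate, use unconfoundedness together with $T(1-T)=0$ to compute the first two conditional moments of the per-sample summand, and then assemble the variance (the paper via the law of total variance, you via the diagonal/off-diagonal split of the second moment---these are equivalent bookkeeping). On the random normalizer you flag as the ``only real subtlety'': the paper's proof simply writes $\tau_{\cP}=\tfrac{1}{n}\sum_{i}\phi(x_i,y_i,t_i)$ from the outset, i.e., it replaces $M$ by $n$ without comment, which is precisely why the stated formulas carry a $\tfrac{1}{n}$ and no $1/(1-\cD(N))$ factor; your parenthetical ``so that $M$ equals the dataset size'' is the same move.
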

    \noindent 
    Let us comment on the above expressions.
    The propensity scores $e(x)$ and $e(S)$ are observed quantities.
    While $\mu_0(x),\mu_1(x),v_0(x),$ and $v_1(x)$ are not observed due to censoring, under unconfoundedness, given a sufficiently large dataset, they can be computed to arbitrary accuracy (
    e.g., by using that $\mu_1(x)=\Ex\insquare{Y(1)}=\Ex[Y(1)\mid T=1]$).
    Hence, under unconfoundedness, \cref{lem:exp_of_mse} gives us expressions for bias, variance, and MSE in terms of quantities that can be estimated from a censored dataset.
    Furthermore, using the Central Limit Theorem, whenever $e(S)(1-e(S))$ is bounded away from 0, one can also obtain confidence intervals for the bias and variance and, hence, RMSE. (This analysis is analogous to that in \cref{sec:asymptoticNormality} which establishes asymptotic normality of $\tau_{\cS, N}$ when the coarse propensity scores $e(S)$ for each $S\in \cS$ are bounded away from $0$.)

    In the remainder of this section, we prove \cref{lem:exp_of_mse}.

            \label{sec:proofof:lem:exp_of_mse}

                \medskip
                
                \begin{proof}\proofof{\cref{lem:exp_of_mse}}
                    Recall that, given $n$ samples $C\coloneqq \inbrace{\inparen{x_i,y_i,t_i}}_i$ and a partition $\cP=(\cS, N)$
                \[
                    \tau_{\cP} = \frac{1}{n} \sum_{i=1}^n \phi(x_i,y_i,t_i)\,,
                    \quadtext{where}
                    \phi(x,y,t) = \frac{t y}{e(X; \cP)} - \frac{(1-t)y}{1-e(X; \cP)}\,.
                \]
                Since $\var(X) = \Ex[X^2]- \Ex[X]^2$ for a random variable $X$, simple algebraic manipulation gives that
                \begin{align*}
                    \Ex_{C\gets (\cD,A)^n}\insquare{\inparen{\tau_{\cP}-\tau}^2}
                    &= \inparen{\tau - \Ex_{C\gets (\cD,A)^n}\insquare{\tau_{\cP}}}^2
                    + \var_{C\gets (\cD,A)^n}\insquare{\tau_{\cP}}.
                \end{align*}
                Since $(\cD,A)$ is individualistic
                \begin{align*}
                    \Ex_{C\gets (\cD,A)^n}\insquare{\tau_{\cP}} = \Ex_{(X,Y,T)\gets (\cD,A)}\insquare{\phi(X,Y,T)}
                     \quadand  
                    \var_{C\gets (\cD,A)^n}\insquare{\tau_{\cP}} = \frac{1}{n}\var_{(X,Y,T)\gets (\cD,A)}\insquare{\phi(X,Y,T)}\,.
                \end{align*}
                Moreover, due to unconfoundedness, for any $z$ in the feature domain,
                \[
                    \Ex_{(X,Y,T)\gets (\cD,A)}\insquare{\phi(X,Y,T)\mid X=z} = \psi(z) \coloneqq \frac{e(z)\mu_1(z)}{e(z; \cP)} - \frac{(1-e(z))\mu_0(z)}{1-e(z; \cP)}\,.
                    \yesnum\label{eq:expectationWRTX}
                \]
                Recall that, in the above, we take $\mu_1(z) = \Ex\insquare{YT\mid X = z} = \Ex\insquare{Y^1\mid X=z}$ and $\mu_0(z) = \Ex[Y(1-T) \mid X=z] = \Ex\insquare{Y^0 \mid X=z}$, where the equalities are due to unconfoundedness.
                
                Henceforth, we omit the subscript $(X, Y, T)\gets (\cD, A)$ when it is clear from context. 
    
                Given the above expressions, we can now compute $\var\insquare{\phi(X, Y, T)}$ as follows. First, via the law of total variance, i.e., using that $\var(B)=\var[\Ex(B\mid A)]+\Ex[\var(B\mid A)]$, we have that
                \[
                \var\insquare{\phi(X,Y,T)} =
                \var_X\insquare{\Ex_{Y,T}\inparen{\phi(X,Y,T) \mid X}}
                    +\Ex_X\insquare{\var_{Y,T}\inparen{\phi(X,Y,T) \mid X}}\,.
                \]
                We next consider the second term.
                Using that $\var(A+B)=\var(A)+\var(B)+2\cov(A,B)$ with 
                $A = TY/e(X; \cP)$ and $B = (T-1)Y/(1-e(X; \cP))$, \mbox{we get that $\Ex_X\insquare{\var_{Y,T}\inparen{\phi(X,Y,T) \mid X}}$ is equal to}
                \[
                \Ex_X\insquare{
                        \var_{Y,T}\inparen{\frac{TY}{e(X; \cP)} \mid X}
                        +\var_{Y,T}\inparen{\frac{{-}(1-T)Y}{1-e(X; \cP)} \mid X}
                        +2\cov_{Y,T}\inparen{\frac{TY}{e(X; \cP)},\; \frac{{-}(1-T)Y}{1-e(X; \cP)} \mid X}}\,.
                \]
                Now by definition, it holds that
                $\cov(A,B)=\Ex[AB]-\Ex[A]\Ex[B]$ and since $T(1-T)=0$, we can simplify the covariance expression as
                \[
                \cov_{Y,T}\inparen{\frac{TY}{e(X; \cP)},\; \frac{{-}(1-T)Y}{1-e(X; \cP)} \mid X}
                =
                \Ex_{Y,T}\inparen{\frac{TY}{e(X; \cP)} \mid X}
                        \Ex_{Y,T}\inparen{\frac{(1-T)Y}{1-e(X; \cP)} \mid X}\,.
                \]
                Hence, in total, we have that $\var\insquare{\phi(X,Y,T)} =
            \var_X\insquare{\Ex_{Y,T}\inparen{\phi(X,Y,T) \mid X}} + F$, where
            \[
                \smallmath{
                F = \Ex_X\insquare{
                        \var_{Y,T}\inparen{\frac{TY}{e(X; \cP)} \mid X}
                        +\var_{Y,T}\inparen{\frac{(1-T)Y}{1-e(X; \cP)} \mid X}
                        {+}2\Ex_{Y,T}\inparen{\frac{TY}{e(X; \cP)} \mid X}
                        \Ex_{Y,T}\inparen{\frac{(1-T)Y}{1-e(X; \cP)} \mid X}}\,.
                }
            \]            
    We are now ready to compute the three terms of $F$. For this, we will make use of the fact that $T \in \{0,1\}$ and unconfoundedness.
    Note that for $W \in \zo$, 
    $\var[W B] = \Ex[W B^2]-\inparen{\Ex[W B]}^2$ and this means that, since $e(X) = \Pr[T=1 | X]$ and, e.g., $\Ex[(Y^1)^2 | X] = v_1(X) + \mu_1(X)^2$,
    we get
    \begin{align*}
        \smallmath{\Ex_X\insquare{
            \var_{Y,T}\inparen{\frac{TY}{e(X; \cP)} \mid X}}}
            &\smallmath{=\Ex_X\insquare{
            \frac{e(X)(v_1(X)+\mu_1(X)^2) 
            -
            e(X)^2\mu_1(X)^2}{e(X; \cP)^2} }}
            \,,\\
        \smallmath{\Ex_X\insquare{
            \var_{Y,T}\inparen{\frac{(1-T)Y}{1-e(X; \cP)} \mid X}}} 
            &
            \smallmath{=\Ex_X\insquare{\frac{(1-e(X))(v_0(X)+\mu_0(X)^2) - (1-e(X))^2\mu_0(X)^2}{(1-e(X; \cP))^2} }}
            \,,\,\text{\small and},\\
        \smallmath{\Ex_X\insquare{
    \Ex_{Y,T}\inparen{\frac{TY}{e(X; \cP)} \mid X}
                        \Ex_{Y,T}\inparen{\frac{(1-T)Y}{1-e(X; \cP)} \mid X}}}
                        &\smallmath{=
                        \Ex_X\insquare{
                        \frac{e(X)\mu_1(X)}{e(X; \cP)} 
                        \frac{(1-e(X))\mu_0(X)}{1-e(X; \cP)}}}\,.
    \end{align*}
    We now aggregate the above terms and have that $F$ reads as
    \begin{align*}
        F 
        &= 
        \Ex_X\insquare{
                        \frac{e(X)(v_1(X)+\mu_1(X)^2)}{(e(X; \cP))^2} } + \Ex_X\insquare{\frac{(1-e(X))(v_0(X)+\mu_0(X)^2)}{(1-e(X; \cP))^2} }\\
        &\quad 
        - 
                        \Ex_X\insquare{
                        \inparen{
                            \frac{e(X)\mu_1(X)}{e(X; \cP)} - \frac{(1-e(X))\mu_0(X)}{1-e(X; \cP)}
                        }^2
                    }\,.
                    \yesnum\label{eq:second_expression_of_f}
    \end{align*}
    Recall that $\var\insquare{\phi(X,Y,T)} =
            \var_X\insquare{\Ex_{Y,T}\inparen{\phi(X,Y,T) \mid X}} + F$ and $\psi(z)=\Ex_{Y,T}\insquare{\phi(X,Y,T) \mid X=z}$ for all $z$ in the feature domain and, hence, $\var\insquare{\phi(X,Y,T)} =
            \var_X\insquare{\psi(X)} + F$.
        Subsequently, observing that the first term of $F$ in \cref{eq:second_expression_of_f} is $-\Ex_X\insquare{\psi(X)^2}$ it follows that
        \begin{align*}
            \var\insquare{\phi(X,Y,T)} 
            &=\var_X\insquare{\psi(X)}
        +\Ex_X\insquare{
                        \frac{e(X)(v_1(X)+\mu_1(X)^2)}{(e(X; \cP))^2} } + \Ex_X\insquare{\frac{(1-e(X))(v_0(X)+\mu_0(X)^2)}{(1-e(X; \cP))^2} }\\
        &\quad 
        - 
                        \Ex_X\insquare{
                        {
                            \psi(X)
                        }^2
                    }\,.
        \end{align*}
        Now, since $\var[A]=\Ex[A^2]-\inparen{\Ex[A]}^2$, we can simplify the above as 
        
        \begin{align*}
            \var\insquare{\phi(X,Y,T)} 
            &= \Ex_X\insquare{
                            \frac{e(X)(v_1(X)+\mu_1(X)^2)}{(e(X; \cP))^2} } + \Ex_X\insquare{\frac{(1-e(X))(v_0(X)+\mu_0(X)^2)}{(1-e(X; \cP))^2} } - \inparen{\Ex_X\insquare{\psi(X)}}^2 
            \,.
        \end{align*}
        This completes the computation.
    \end{proof}
    
    \begin{remark}[{Recovering the Standard IPW Estimator}]
        As a sanity check, by picking $f$  to be 1-1, we can recover the variance of the IPW estimator. For simplicity, we will work under the assumption that, for all $z$ in the feature domain, 
            $v_1(z)=v_0(z).$
            Under this assumption, \citep{wager2020notes} claims that the variance of the IPW estimator is 
            \[
                \var\insquare{\ipw{}}
                = \var_X\insquare{\tau(X)}
                + \Ex_X\insquare{\frac{v_0(X)}{e(X)(1-e(X))}}
                +
                \Ex_X\insquare{
                    \frac{
                        \inparen{\mu_1(X)-e(X)\inparen{
                            \mu_1(X)-\mu_0(X)
                        }}^2
                    }{e(X)(1-e(X))}
                }\,.
            \]
            To simplify the above expression, observe that 
            ${\mu_1(z)-e(z)\inparen{
                            \mu_1(z)-\mu_0(z)
                        }}=\mu_1(z)(1-e(z))+\mu_0(z)e(z),$ and, hence,
                        for all $z$ in the feature domain
            \begin{align*}
                \frac{
                        \inparen{\mu_1(z)-e(z)\inparen{
                            \mu_1(z)-\mu_0(z)
                        }}^2
                    }{e(z)(1-e(z))}
                &=
                    \frac{\mu_1(z)^2(1-e(z))}{e(z)}
                    +\frac{\mu_0(z)^2e(z)}{1-e(z)}
                    +2\mu_1(z)\mu_0(z)\\
                &=
                    \frac{\mu_1(z)^2}{e(z)}
                    +\frac{\mu_0(z)^2}{1-e(z)}
                    -\inparen{\mu_1(z) - \mu_0(z)}^2.
            \end{align*}
            Therefore, it follows that
            \[
                \var\insquare{\ipw{}}
                = \var_X\insquare{\tau(X)}
                + \Ex_X\insquare{\frac{v_0(X)}{e(X)(1-e(X))}}
                +
                \Ex_X\insquare{ 
                    \frac{\mu_1(X)^2}{e(X)}
                    +\frac{\mu_0(X)^2}{1-e(X)} 
                }
                -\Ex_X\insquare{\inparen{\mu_1(X) - \mu_0(X)}^2}\,.
            \] 
            Under the above assumption, we should get the same value for $\var\insquare{\phi(X,Y,T)}$ when $f$ is 1-1.
            Note that in this case $e(z; \cP)=e(z)$ and $v_0(z)=v_1(z)$ for all $z$ in the feature domain.
            Moreover, these imply that $\psi(z)=\tau(z)$ for all $z$ in the feature domain.
            Therefore
            \[
                \var\insquare{\phi(X,Y,T)}
                =
                \Ex_X\insquare{
                    \frac{v_0(X)}{e(X)(1-e(X))}
                }
                +\Ex_X\insquare{
                    \frac{\mu_1(X)^2}{e(X)}
                    +\frac{\mu_0(X)^2}{1-e(X)}
                }
                - \inparen{\Ex_X\insquare{\tau(X)}}^2\,.
            \] 
            It follows that
            \begin{align*}
                \var\insquare{\phi(X,Y,T)}
                - \var\insquare{\ipw{}}
                &=
                -\inparen{\Ex_X\insquare{\tau(X)}}^2
                -\var_X\insquare{\tau(X)}
                +\Ex_X\insquare{\inparen{\mu_1(X)-\mu_0(X)}^2}\,.
            \end{align*}
            This is zero as $\mu_1(z)-\mu_0(z)=\tau(z)$ and $\var[W]=\Ex[W^2]-(\Ex[W])^2$.
    \end{remark}

\section{Formal Statements and Proofs of Hardness Results}
\subsection{Computational Hardness of Finding Minimum-RMSE CIPW Estimator}\label{proofOverivews:hardness}
        {In this section, we consider the computational complexity of solving $\argmin_{\cS, N}\; 
                                            \RMSE_{\cD}{(\tau_{\cS, N})}$.}
        Concretely, we consider the following decision version of it which considers distributions $\cD$ supported over $m$ covariates $x_1,x_2,\dots,x_m.$
        \begin{mdframed}[backgroundcolor=gray!5]
            \textbf{\problemRMSE{}$(U, n, e, \cD, \mu, v)$.}
            \begin{itemize}[leftmargin=15pt]
                \item \textbf{Input:} 
                Threshold $U \geq \Q_{+}$, 
                Integers $n, m \geq 1$, and for each $i\in[m]$, 
                    rationals $e(x_i),
                    \cD(x_i) \in \Q_+$ (s.t., $\sum_{i=1}^m \cD(x_i)=1$), 
                    $\mu_0(x_i),\mu_1(x_i)\in [-1,1]\cap \Q$, and 
                    $v_0(x_i),v_1(x_i)\in [0,1] \cap \Q$.
                \item \textbf{Output:} \textbf{\texttt{Yes}} if $\min_{\cS, N} \RMSE_{\cD}{(\tau_{\cS, N})}^2 = \min_{\cS, N} \MSE_{\cD}{(\tau_{\cS, N})} \leq U$ and \textbf{\texttt{No}} otherwise.
            \end{itemize}
        \end{mdframed}
        \noindent {Since it is more convenient to work with the MSE instead of the RMSE, above we phrase the condition on the square of the RMSE.}
        Under unconfoundedness all of the inputs to \problemRMSE{} are identifiable and, hence, can be estimated to arbitrary precision given sufficiently many samples.
        Since we are interested in the computational complexity, we assume all inputs are known exactly for now.
        Our main hardness result is as follows.
        
        \hardnessApproximation*
        
        \noindent The proof of \cref{thm:hardness:approximation} appears in \cref{sec:proofof:thm:hardness}.
        We continue with a technical overview below. %

        \subsubsection{Technical Overview} 
        {The hardness result follows from a reduction of \problemSS{} to the special case of \problemRMSE{} where $\mu_0(x)=v_0(x)=0$ and $v_1(x)+\mu_1(x)^2=C>0$ for all $x$ and some large constant $C$.
    
        Consider an instance $a_1,a_2,\dots,a_m$ and $V$ of \problemSS{}. The first idea in the hardness result is to select $n$ and $U$ such that such that any solution $(\cS, N)$ of $\problemRMSE{}$ must ensure that $\tau_{\cS, N}$ is unbiased and $N=\emptyset$.
        To see why this is relevant, consider the following equality, which follows from the expression of CIPW estimators bias (see \cref{lem:exp_of_mse})
        \[
                \mathbb{E}_{\dataset}[\tau_{\cS, N}(\dataset;e)] =  \sum_x
                            \frac{\cD_X(x)}{1-\cD(N)}
                            \left(
                            \frac{e(x)\mu_1(x)}{e(S_x)}
                            - \frac{\inparen{1-e(x)}\mu_0(x)}{1-e(S_x)}
                            \right
                                      )
                            \,,
                \]
        where $S_x$ is the unique $S\in \cS$ containing $x$. %
        This implies that $\tau_{\cS, N}$ is unbiased if and only if
        \[
              \tau\cdot (1-\cD(N))=\sum_{x\not\in N}
                        \cD_X(x) \tau_{\cS, N}(x; (e,\mu))\,,
        \]
        where $\tau_{\cS, N}(x; (e,\mu) ) = 
                            \frac{e(x)\mu_1(x)}{e(S_x)}
                            - \frac{\inparen{1-e(x)}\mu_0(x)}{1-e(S_x)}.$ 
        If we can ensure $N=\emptyset$, then this simplifies to
        \[
             \tau=\sum_{x}
                        \cD_X(x)\cdot  \tau_{\cS, N}(x; (e,\mu))\,.
        \] 
        Hence, selecting $\tau\propto V$ and $\cD_X(x_i)\tau_{\cS, N}(x_i; (e,\mu)) \propto a_i$ for all $1\leq i\leq m$, implies that $\tau_{\cS, N}$ is unbiased if and only if $\sum_{x\in [m]\backslash N} a_i=V.$
        
        The difficulty is that $\tau$ is dependent on the values of $\cD_X(x)$ and $\tau_{\cS, N}(x; (e,\mu))$ and, hence, cannot take an arbitrary value.
        To avoid this, one idea is to introduce a \toa{dummy} element $x_{m+1}$ which must be included in $N$ (the $\nll$ set) in any solution of \problemRMSE{}: this is useful as (1) we can set $\tau$ to be an arbitrary value by selecting an appropriate $\cD_X(x_{m+1})\tau_{\cS, N}(x_{m+1}; (\eps, \mu))$ and (2) since $x_{m+1}\in N$, it is irrelevant for the sum $\sum_{x\not\in N}
                        \cD_X(x) \tau_{\cS, N}(x; (\eps, \mu))$ which shows up in the bias.
        To be more precise, since $N\neq\emptyset$, we need to account for the factor $(1-\cD(N))$. We can side-step this by ensuring $1-\cD(N)\approx 1 - \cD_X(x_{m+1})\approx 1$.
    
        One idea to ensure that in any feasible solution satisfies $x_{m+1}\in N$ is to set a very propensity score for $x_{m+1}$, say, $e(x_{m+1})=\eps\ll 1$.
        This ensures that if $x_{m+1}\not\in N$ and if $x_{m+1}$ is also not merged with any other point, then the variance is at least $\Omega\inparen{\sfrac{1}{\eps}}$.
        However, it allows for $x_{m+1}\not\in N$, if $x_{m+1}$ is merged with other points.
        We ensure that merging $x_{m+1}$ with other points is not desirable by adding another dummy element $x_{m+2}$.
        We construct $x_{m+2}$ so that: 
        \begin{enumerate}
            \item $\mu_1(x_{m+2})-\mu_2(x_{m+1})$ is large (hence, merging $x_{m+1}$ and $x_{m+2}$ leads to large bias); and 
            \item $\cD_X(x_{m+2}) \gg \cD_X(x_{m+1})\gg \cD_X(x)$ for any other $x$ (this ensures that if $x_{m+1}$ is not merged with $x_{m+2}$, then the variance remains $\Omega(\poly(\sfrac{1}{{\eps}}))$).
        \end{enumerate}
        Finally, to prove the hardness of approximation, we extend the above reduction.
        Given any instance $\instanceSS{}$ of \problemSS{} with bit-complexity $b$ and a number $\eta>0$, we construct an instance \instanceMSE{} of \problemRMSE{} such that:
        \begin{enumerate}
            \item \instanceMSE{}'s bit complexity at most $b+O(\log\inparen{\sfrac{1}{\eta}})$ (i.e., polynomial in the input $\inangle{\instanceSS{}, \eta}$); and
            \item The MSE of the optimal and non-optimal solutions are separated by a multiplicative factor of $\sfrac{1}{\eta}$: concretely, let $(\cS^\star, N^\star)$ be any minimizer of $\MSE{}$ and let $(\cS', N')$ by any non-optimal solution, then we show that $\MSE{(\tau_{\cS^\star, N^\star})}\leq O(\eta)\cdot \MSE{(\tau_{\cS', N'})}$. 
        \end{enumerate}}

    \subsubsection{Proof of \cref{thm:hardness:approximation}}\label{sec:proofof:thm:hardness}
        In this section, we prove \cref{thm:hardness:approximation}.
        
    \subsubsection*{Reduction}
    To prove \cref{thm:hardness:approximation}, we construct a gap-inducing reduction from \problemSS{} to the special case of \problemRMSE{} where 
    \[
        \mu_0(x)=v_0(x)=0 \quad 
        \text{for all } x\,.
    \]
    In this special case of \problemRMSE{}, the expression of the MSE (see \cref{lem:exp_of_mse}) simplifies to:
    \begin{align*}
        \MSE{(\tau_{\cS, N})} 
        &= \inparen{
            \sum_{S\in \cS} \frac{\sum_{x\in S} e(x)\cD(x)\mu_1(x)}{
                e(S)\inparen{1-\cD(N)}
            } 
            - \sum_x \cD(x)\mu_1(x)
        }^2\\
        &+
        \frac{1}{n}\sum_{S\in \cS}
                \frac{
                    \sum_{x\in S} e(x) \cD(x)\inparen{v_1(x)+\mu_1(x)^2}
                }{
                    e(S)^2\inparen{1-\cD(N)}
                }
        -\frac{1}{n}\inparen{\sum_{S\in \cS} \frac{\sum_{x\in S} e(x)\mu_1(x)\cD(x)}{e(S)(1-\cD(N))}}^2
        \,.
    \end{align*}
    Where $e(S)$ is the average propensity score on $S$ and $\cD(S)$ is the total probability mass on $S$:
    \[
        e(S)\coloneqq \frac{\sum_{x\in S} e(x)\cD(x)}{\cD(S)}
         \quadand  
        \cD(S)\coloneqq \sum_{x\in S} \cD(x)\,.
    \]
    Thus, $\min_{\cS, N}\MSE_\cD(\tau_{\cS, N})$ is equivalent to the following program
    \[
        \text{\small$\min_{1\leq k\leq m}\quad 
        \min_{
            \substack{
                N\subseteq [m],\\
                \cS = \inbrace{S_1,\dots,S_k} \text{ partitions}\\
                \text{$[m]\backslash N$ into $k$ non-empty sets}
            }
        }\quad 
        \left(\begin{array}{c}
             \inparen{
                \sum_{S\in \cS} \frac{\sum_{x\in S} e(x)\cD(x)\mu_1(x)}{
                    e(S)\inparen{1-\cD(N)}
                } 
                - \sum_x \cD(x)\mu_1(x)
            }^2\\ 
            +
            \frac{1}{n}\sum_{S\in \cS}
                \frac{
                    \sum_{x\in S} e(x) \cD(x) \inparen{v_1(x)+\mu_1(x)^2}
                }{
                    e(S)^2\inparen{1-\cD(N)}
                }
            -\frac{1}{n}\inparen{\sum_{S\in \cS} \frac{\sum_{x\in S} e(x)\mu_1(x)\cD(x)}{e(S)(1-\cD(N))}}^2
        \end{array}\right)
        .$}
    \]

    \paragraph{Reduction.}
        Recall that \problemSS{} is defined as follows and is known to be $\mathsf{NP}$-hard \citep{gareyjohnson1990}.
        \begin{mdframed}[backgroundcolor=gray!5]
            \textbf{\problemSS{}.}
            \begin{itemize}
                \item Input: Positive integers $a_1,a_2,\dots,a_k$ and $T$
                \item Output: \texttt{Yes} if there exists $S\subseteq[k]$ such that $\sum_{i\in S}a_i = T$ and \texttt{No} otherwise.
            \end{itemize}
        \end{mdframed}
    Given an instance $\cI_{SS}(a,T)$ of \problemSS{}, define constants 
    \[
        A \coloneqq \sum_i a_i\,,\quad 
        {\eps \ll \frac{1}{(Am)^4}\,,} \quad
        \alpha = \eps^3\,,\quad 
        \beta = \eps^5\,,\quadand
        \Delta = \frac{1}{1+\eps+k\eps^3}\,.
    \]
    We construct an instance $\cI_{\rm MSE}(m,n,e,\mu,v,\cD,U)$ of \problemRMSE{} with 
    \[
        m = k+2\,,\quad 
        n = \eps^{-7}\,,\quadand
        U = {8A^2\eps^7}\,. %
    \]
    The first $k$ out of $k+2$ points are defined as follows: 
    for each $1\leq i\leq k$
    \[
        \cD(x_i) = \Delta\alpha\,,\quad 
        e(x_i) = 1\,,\quad 
        \mu_1(x_i) = a_i\,, \quadand  
        v_1(x_i) = 4A^2 - \mu_{1}(x_i)^2\,.
        \yesnum\label{eq:reduction:construction:firstkPt}
    \]
    The last two points are defined as follows:
    \begin{align*}
        &\cD(x_{m-1}) 
            = \Delta\,,\  
        e(x_{m-1}) = 1\,,\ 
        \mu_1(x_{m-1}) = 2A\alpha\,,
        \ \text{and}\  
        v_1(x_{m-1}) = {4A^2 - \mu_{1}(x_{m-1})^2}\,;
        \yesnum\label{eq:reduction:construction:secondLastPt}\\
        &\cD(x_m) 
            = \Delta\eps\,,\  
        e(x_m) = {\beta}\,,\ 
        \mu_1(x_m) = \inparen{\frac{T+2A}{\Delta} - 3A}\frac{\alpha}{\eps}\,, \ \text{and}\ 
        v_1(x_m) = 4A^2 - \mu_{1}(x_m)^2\,.
        \yesnum\label{eq:reduction:construction:lastPt}
    \end{align*}
    Recall that for all points $x$, we set $\mu_0(x)=v_0(x)=0$.
    This completes the construction of \instanceMSE{}.
    Clearly, \instanceMSE{} can be constructed in time polynomial in the bit complexity of \instanceSS{}.
    For \instanceMSE{} to be a valid instance of \problemRMSE{}, we need to ensure that $\mu_1,v_1\in\insquare{-1,1}$, which can be done by dividing $\mu_1$ by $A$ and $v_1$ and $U$ by $A^2$. 
    (Intuitively, this does not change the proof because $\MSE{(\tau_{\cS, N})}/U$ is invariant to scaling of the outcomes and $\sqrt{U}$.)

    \subsubsection*{Soundness} 
        Next, we prove that the reduction is sound.
    \begin{lemma}[{Soundness}]\label{lem:reduction:soundness}
        Consider any instance $\cI_{SS}(a,T)$ of \problemSS{} and the corresponding instance $\cI_{\rm MSE}(m,n,e,\mu,v,\cD,U)$ of \problemRMSE{}.
        For any partition $\inparen{\cS,N}$, the following holds
        \begin{enumerate}
            \item If there is $1\leq i\leq k$ such that $S_i = \inbrace{x_{m}}$, then $\MSE{\inparen{\tau_{\cS,N}}} > \frac{U}{\eps}$;
            \item If there is $1\leq i\leq k$ such that $x_m\in S_i$, $\abs{S_i}\geq 2$, and $\MSE{\inparen{\tau_{\cS,N}}} \leq \frac{U}{\sqrt{\eps}}$, then \instanceSS{} is a \texttt{Yes} instance;
            \item If there is $1\leq i\leq k$ such that $x_m\in N$ and $\MSE{\inparen{\tau_{\cS,N}}} \leq \frac{U}{\sqrt{\eps}}$, then \instanceSS{} is a \texttt{Yes} instance.
        \end{enumerate}
    \end{lemma}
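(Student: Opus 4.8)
The plan is to prove the three claims one at a time, exploiting two structural features of the instance. Since $\mu_0(x)=v_0(x)=0$ for all $x$ and $v_1(x)+\mu_1(x)^2 = 4A^2$ is constant, the formulas of \cref{lem:exp_of_mse} collapse: for any partition $(\cS,N)$,
$\var(\tau_{\cS,N}) = \frac{4A^2}{n(1-\cD(N))}\sum_{S\in\cS}\frac{\cD(S)}{e(S)}-\frac1n(\Ex[\tau_{\cS,N}])^2$, while the bias equals $|\tau-\Ex[\tau_{\cS,N}]|$ with $\Ex[\tau_{\cS,N}] = \frac{1}{1-\cD(N)}\sum_{x\notin N}\frac{\cD(x)e(x)\mu_1(x)}{e(S_x)}$. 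A direct computation from the construction gives $\tau=\sum_x\cD(x)\mu_1(x)=\alpha(T+2A)$ (this is where the target $T$ enters), and every set $S\in\cS$ not containing $x_m$ satisfies $e(S)=1$ (all its members have propensity $1$), so $e(x)\mu_1(x)/e(S_x)=\mu_1(x)$ there; hence whenever $x_m$ is a singleton of $\cS$ or lies in $N$ we have $\Ex[\tau_{\cS,N}]=\frac{1}{1-\cD(N)}\sum_{x\notin N}\cD(x)\mu_1(x)$.

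For claim (1), with $\{x_m\}\in\cS$ that set alone contributes $\frac{4A^2\cD(x_m)}{n(1-\cD(N))e(x_m)} = \frac{4A^2\Delta}{n\eps^4(1-\cD(N))}$ to the variance; since $x_m\notin N$ we have $1-\cD(N)\ge\cD(x_m)=\Delta\eps$, and the crude unconditional bound $|\Ex[\tau_{\cS,N}]|\le O(A\alpha)/(1-\cD(N))$ makes the subtracted term $\frac1n(\Ex[\tau_{\cS,N}])^2$ lower order. So $\MSE(\tau_{\cS,N})\ge\var(\tau_{\cS,N}) = \Omega(A^2\eps^3)$, which exceeds $U/\eps=8A^2\eps^6$ for $\eps$ small.

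For claims (2) and (3), start from $\MSE(\tau_{\cS,N})\le U/\sqrt\eps$, which forces $\bias(\tau_{\cS,N})\le\sqrt{U}\,\eps^{-1/4}=2\sqrt2\,A\eps^{13/4}$ — a quantity $\ll\alpha=\eps^3$ — and $\var(\tau_{\cS,N})\le U/\sqrt\eps$; write $I_N:=\{i\in[k]:x_i\in N\}$ and $\sigma:=\sum_{i\in I_N}a_i$. In claim (2) I would first show $x_{m-1}\in S_i$: otherwise $S_i$ consists of $x_m$ and $j\ge1$ of the first $k$ points, so $e(S_i)\le 2j\eps^2$ while $\cD(S_i)\ge\Delta\eps$, and then the variance term of $S_i$ alone is $\ge\frac{4A^2\Delta\eps}{n\cdot2j\eps^2}=\Omega(A^2\eps^6/m)$, which (the subtracted term being negligible by the bias bound) exceeds $U/\sqrt\eps$ — a contradiction. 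Given $x_{m-1}\in S_i$ one has $e(S_i)\in[1-2\eps,1]$, and expanding $\Ex[\tau_{\cS,N}]$ while tracking the $\Theta(\eps)$ deviations of $\Delta$, of $1-\cD(N)=1-\Delta\alpha|I_N|$, and of $1/e(S_i)$ from $1$, together with $\beta\eps\mu_1(x_m)=O(A\eps^8)$, yields $\Ex[\tau_{\cS,N}]=\alpha(3A-\sigma)+\alpha\theta$ with $|\theta|=O(A\eps)$. Hence $\bias(\tau_{\cS,N})=\alpha\,|T-A+\sigma-\theta|$; since this is $\ll\alpha$ and $T-A+\sigma\in\Z$, we must have $T-A+\sigma=0$, i.e.\ $\sum_{i\in[k]\setminus I_N}a_i=T$, so $\instanceSS$ is a \texttt{Yes} instance. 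Claim (3) is the same integrality argument with $x_m$ deleted: if $x_{m-1}\in N$ one computes $\Ex[\tau_{\cS,N}]=\frac{A-\sigma}{k-|I_N|}=\Omega(1/k)$, far from $\tau=\Theta(A\eps^3)$, contradicting the bias bound; otherwise $\Ex[\tau_{\cS,N}]=\frac{\alpha(3A-\sigma)}{1+(k-|I_N|)\eps^3}$ and again $\bias(\tau_{\cS,N})=\alpha\,|T-A+\sigma|\pm O(Ak\eps^3)\ll\alpha$ forces $\sum_{i\in[k]\setminus I_N}a_i=T$.

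The main obstacle is the error-term bookkeeping in claims (2)--(3): several quantities — $1-\Delta$, $1-e(S_i)$, $\cD(N)$, $\beta\eps\mu_1(x_m)$ — are nonzero, and one must verify that, after multiplication by the relevant probability masses, none of them reaches the scale $\alpha=\eps^3$ at which it could corrupt the integrality conclusion $T-A+\sigma=0$. This is exactly what dictates the choices $\eps\ll(Am)^{-4}$, $\alpha=\eps^3$, $\beta=\eps^5$, $n=\eps^{-7}$, $U=8A^2\eps^7$ (the slacks $1/\eps$ and $1/\sqrt\eps$ in the statement leave room for these errors). A minor routine point is that rescaling $\mu_1,v_1,U$ so the outcomes lie in $[-1,1]$ leaves every ratio $\MSE/U$ unchanged, so it suffices to argue with the unnormalized quantities.
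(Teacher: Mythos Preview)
Your proposal is correct and follows essentially the same approach as the paper: both arguments split each part on the location of $x_{m-1}$, use a variance lower bound to rule out the case $x_{m-1}\notin S_i$ (resp.\ a bias lower bound to rule out $x_{m-1}\in N$), and in the remaining case expand the bias to obtain the integer equation $\sum_{i\in[k]\setminus I_N}a_i=T$. The only cosmetic differences are that the paper first merges all sets not containing $x_m$ into one (harmless since they all have coarse propensity $1$), and in Part~2 it bounds the subtracted term $\frac1n(\Ex[\tau_{\cS,N}])^2$ directly rather than via the assumed bias bound as you do.
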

    To see how this implies soundness, observe that due to the first part of the above lemma, if \instanceMSE{} is a \texttt{Yes} instance with certificate $\inparen{\cS, N}$, then a special element $x_m$ satisfies: $x_m\in N$ or $x_m\in S_i$ with $\abs{S_i}\geq 2$.
    The second and third parts imply that if $x_m$ satisfies either of these conditions, then soundness holds.
    Therefore
    \[
        \instanceMSE{} \text{ is a \texttt{Yes} instance}
        \quad\implies\quad
        \instanceSS{} \text{ is a \texttt{Yes} instance}\,.
    \]
    \begin{proof}\proofof{\cref{lem:reduction:soundness}}
        The proof is divided into three cases, corresponding to the three parts of the statement. 

        \bigskip 

        \noindent \textit{Part 1 ($S_i=\inbrace{x_m}$):}
            Recall that $\MSE{\inparen{\tau_{\cS,N}}}\geq \variance\inparen{\tau_{\cS,N}}$.
            Our goal is to lower bound $\variance\inparen{\tau_{\cS,N}}.$
            We begin with the following equality:
            \begin{align*}
                \variance\inparen{\tau_{\cS,N}}
                &= \frac{1}{n} \frac{
                    \sum_{x\in S_i} e(x) \cD(x)\inparen{v_1(x)+\mu_1(x)^2}
                }{
                    e(S_i)^2\inparen{1-\cD(N)}
                }
                -
                \frac{1}{n}\inparen{\sum_{S\in \cS} \frac{\sum_{x\in S} e(x)\mu_1(x)\cD(x)}{e(S)(1-\cD(N))}}^2\,.
                \yesnum\label{eq:hardness:part1:tmp}
            \end{align*}
            Since $e(x)=1$ for all $x\not\in S_i$ and $S_i=\inbrace{x_m}$, the first term satisfies the following: %
            \begin{align*}
                \inparen{\sum_{S\in \cS} \frac{\sum_{x\in S} e(x)\mu_1(x)\cD(x)}{e(S)(1-\cD(N))}}^2
                &= \frac{1}{\inparen{1-\cD(N)}^2}
                \inparen{
                        \Delta\eps\mu_1(x_m)
                        + 
                    \sum_{S\neq S_i} {\sum_{x\in S} \mu_1(x)\cD(x)}
                    }^2\,.
            \end{align*}
            Further as $\cD(N)\leq 1-\cD(x_m)\leq 1-\Delta\eps$ and $\alpha=\eps^3$
            \begin{align*}
                \inparen{\sum_{S\in \cS} \frac{\sum_{x\in S} e(x)\mu_1(x)\cD(x)}{e(S)(1-\cD(N))}}^2
                &\leq 81A^2\eps^4\,. %
            \end{align*}
            Next, we lower-bound the first term in \cref{eq:hardness:part1:tmp} as follows
            \begin{align*}
                \frac{1}{n} \frac{
                    \sum_{x\in S_i} e(x) \cD(x)\inparen{v_1(x)+\mu_1(x)^2}
                }{
                    e(S_i)^2\inparen{1-\cD(N)}
                }
                &~~\stackrel{(S_i=\inbrace{x_m})}{=}~~ \frac{1}{n} \frac{
                    \cD(x_m) %
                    \inparen{v_1(x_m)+\mu_1(x_m)^2}
                }{
                    e(x_m)\inparen{1-\cD(N)}
                }
                \geq \frac{
                    4A^2 \Delta\eps 
                }{
                    n\beta
                }\,.
            \end{align*}
            Since $\beta=\eps^5$ and $n=\eps^{-7}$, it follows that
            \[
                \MSE{(\tau_{\cS, N})}\geq \variance{(\tau_{\cS, N})}\geq \Omega\inparen{A^2\eps^3}
                \,.
            \]
            This is a contradiction to $\MSE{(\tau_{\cS, N})}\leq \sfrac{U}{\sqrt{\eps}}= 8A^2\eps^{6.5}$. %

        \bigskip

        \noindent\textit{Part 2 ($x_m\in S_i$ and $\abs{S_i}\geq 2$):}
            Without loss of generality let $i=1$.
            Since $x_m\not\in S_2\cup S_3\cup \dots \cup S_k$, all items in $S_2\cup S_3\cup \dots \cup S_k$ have the same propensity score and hence, the partition $\inparen{\hypo{T}\coloneqq \inbrace{S_1, S_2\cup S_3\cup \dots \cup S_k}, N}$ has the same $\MSE{}$ as $\inparen{\cS,N}$:
            \[
                \MSE{\inparen{\cS,N}} = \MSE{\inparen{\tau_{\hypo{T}, N}}}\,.
            \]
            Consider two cases.

            \begin{itemize}
                \item \textbf{Case A ($x_{m-1}\in S_1$):}
                    Observe that $\cD(S_1) 
                            \in \Delta\inparen{1+\eps} \pm k\Delta\alpha$ and $\sum_{i\in S_1} e(x)\cD(x)
                            \in \Delta\inparen{1+\eps\beta}\pm k\Delta\alpha$, and, hence, 
                    \[
                        e(S_1) 
                        \in 1\pm O(\eps)\,.
                    \]
                    Where we also use that $k\alpha=O(\eps)$ and $\beta\leq 1$.
                    Moreover, we have that 
                    \[
                        \cD(N) \leq k\Delta\alpha = O(\eps)\,.
                    \]
                    Substituting the above in the expression of $\bias\inparen{\tau_{\hypo{T}, N}}$ implies that 
                    \begin{align*}
                        \bias\inparen{\tau_{\hypo{T}, N}}
                        &= \inparen{
                                \frac{\sum_{x\in S_1} e(x)\cD(x)\mu_1(x)}{
                                        \inparen{1\pm O(\eps)}\inparen{1-O(\eps)}
                                    } 
                                + \frac{\sum_{x\in S_2\cup\dots\cup S_k} e(x)\cD(x)\mu_1(x)}{
                                        e(S_2\cup\dots\cup S_k)\inparen{1-O(\eps)}
                                    } 
                                    - \sum_x \cD(x)\mu_1(x)
                            }^2\,.
                    \end{align*}
                    Further, as $e(x)=1$ for all $x\in S_2\cup\dots\cup S_k$, we get that 
                    \begin{align*}
                        \bias\inparen{\tau_{\hypo{T}, N}}
                        &= \inparen{
                                \frac{\sum_{x\in S_1} e(x)\cD(x)\mu_1(x)}{
                                        \inparen{1\pm O(\eps)}\inparen{1-O(\eps)}
                                    } 
                                + \frac{\sum_{x\in S_2\cup\dots\cup S_k} e(x) \cD(x)\mu_1(x)}{
                                        \inparen{1-O(\eps)}
                                    } 
                                    - \sum_x \cD(x)\mu_1(x)
                            }^2 \,.
                    \end{align*}
                    Define
                    \[
                        R = \inparen{S_1\cup S_2\cup\dots S_k} \cap [k]\,.
                    \]
                    Observe that  
                    \begin{align*}
                        \bias\inparen{\tau_{\hypo{T}, N}}
                            &= \inparen{
                                \frac{
                                        2A\Delta\alpha
                                        + \inparen{-3A+\nfrac{(T+2A)}{\Delta}}\Delta\eps\beta
                                        + \sum_{x\in R} e(x)\cD(x)\mu_1(x)
                                    }{
                                        1\pm O(\eps)
                                    } 
                                    - \sum_x \cD(x)\mu_1(x)
                            }^2\\
                            &= \inparen{
                                \frac{
                                        2A\Delta\alpha
                                        + \inparen{-3A+\nfrac{(T+2A)}{\Delta}}\Delta\eps\beta
                                        + \sum_{x\in R} e(x)\cD(x)\mu_1(x)
                                    }{
                                        1\pm O(\eps)
                                    } 
                                    - (T+2A)\alpha 
                            }^2\,.
                    \end{align*}
                    Since $\Delta\leq 1$ and $\Delta\in 1\pm O(\eps)$, the above expression simplifies to:
                    \begin{align*}
                        \bias\inparen{\tau_{\hypo{T}, N}}
                        =
                        \inparen{
                                \frac{
                                        \sum_{x\in R} e(x)\cD(x)\mu_1(x)
                                    }{
                                        1\pm O(\eps)
                                    } 
                                    - T\alpha 
                                    \pm O(A\eps(\alpha+\beta))
                            }^2\,.
                    \end{align*}
                    Substituting the values of $e(x)$, $\cD(x)$, and $\mu_1(x)$, we get 
                    \begin{align*} 
                        \bias\inparen{\tau_{\hypo{T}, N}}
                        &= \inparen{
                                \frac{
                                        \Delta\alpha \sum_{x\in R} a_i
                                    }{
                                        1\pm O(\eps)
                                    } 
                                    - T\alpha 
                                    \pm O(A\eps(\alpha+\beta))
                            }^2
                            \,.
                    \end{align*}
                    Using $\beta\leq \alpha$ and simplifying we get that 
                    \begin{align*}
                        \bias\inparen{\tau_{\hypo{T}, N}}
                        &= \alpha^2\inparen{
                                    \sum_{x\in R} a_i - T
                                    \pm O\inparen{A\eps}
                            }^2
                            \,.
                            \yesnum\label{eq:hardness:part2:caseb}
                    \end{align*}
                    Recall that $\bias\inparen{\hypo{T},N}\leq \MSE{}\inparen{\hypo{T},N}\leq O(\alpha^2\sqrt{\eps})$.
                    This combined with \cref{eq:hardness:part2:caseb} implies that $\sum_{x\in R} a_i = T$ as if not, then $\abs{\sum_{x\in R} a_i - T}^2\geq 1$ and, hence, $\bias\inparen{\hypo{T},N}\geq \alpha^2$.
                    Therefore, $R$ is a certificate that \instanceSS{} is a \texttt{Yes} instance.
                \item \textbf{Case B ($x_{m-1}\not \in S_1$):}
                    Observe that $\sum_{x\in S_1} e(x)\cD(x) \leq \Delta\alpha\inparen{\beta + k}$ and $\sum_{x\in S_1}\cD(x)\geq \Delta\inparen{\eps - k\alpha}$, and, hence
                    \[
                        e(S_1) 
                        \leq \frac{k\alpha+\alpha\beta}{\eps-k\alpha}
                        \leq \frac{k\alpha}{\eps} 
                            \inparen{
                                    1 
                                    +  O\inparen{\frac{k\alpha}{\eps}}
                                    +  O\inparen{\frac{\beta}{k}}
                            }
                        \leq \frac{3k\alpha}{\eps}\,.
                        \yesnum\label{eq:hardness:part2:caseb:ub_on_propensity}
                    \]
                    Where the last inequality holds because $k\alpha\leq 2\eps$ and $\beta\leq k$.
                    Recall that 
                    \begin{align*}
                        n\cdot \MSE{\inparen{\tau_{\cS,N}}}
                        \geq \variance\inparen{\tau_{\cS,N}}
                        = \frac{
                            \sum_{x\in S_i} e(x) \cD(x)\inparen{v_1(x)+\mu_1(x)^2}
                        }{
                            e(S_i)^2\inparen{1-\cD(N)}
                        }
                        -
                        \inparen{\sum_{S\in \cS} \frac{\sum_{x\in S} e(x)\mu_1(x)\cD(x)}{e(S)(1-\cD(N))}}^2\,.
                    \end{align*}
                    Since $1-\cD(N)\leq 1$ and each term in the sum is non-negative, the first term satisfies: %
                    \begin{align*}
                        \frac{1}{n}\frac{
                                \sum_{x\in S_1} e(x) \cD(x) \inparen{v_1(x)+\mu_1(x)^2}
                            }{
                                e(S_1)^2\inparen{1-\cD(N)}
                            }
                        &\geq \frac{1}{n}\frac{
                                \sum_{x\in S_1\backslash \inbrace{x_m}} e(x) \cD(x) \inparen{v_1(x)+\mu_1(x)^2}
                            }{
                                e(S_1)^2
                            }\,.%
                    \end{align*}
                    Further, \cref{eq:hardness:part2:caseb:ub_on_propensity} and $\sum_{x\in S_1\backslash\inbrace{x_m}}e(x)\cD(x)\inparen{v_1(x)+\mu_1(x)^2}\leq 4A^2\Delta\alpha k$ imply that 
                    \begin{align*}
                         \frac{1}{n}\frac{
                                \sum_{x\in S_1} e(x) \cD(x) \inparen{v_1(x)+\mu_1(x)^2}
                            }{
                                e(S_1)^2\inparen{1-\cD(N)}
                            }
                        \geq \frac{1}{n}\frac{4\eps A^2\Delta k}{9k^2\alpha}
                        &
                        \quad
                        \stackrel{(\Delta\geq \frac{1}{2},\ n=\eps^{-7})}{\geq}
                        \quad {\frac{A^2\eps^9}{2k^2\alpha}}\,. 
                        \yesnum\label{eq:hardness:part2:case2:lowerbound}
                    \end{align*}
                    Further, as $e(x)=1$ for all $x\not\in S_1$,  second term satisfies
                    \begin{align*}
                        \frac{1}{n}\inparen{\sum_{S\in \cS} \frac{\sum_{x\in S} e(x)\mu_1(x)\cD(x)}{e(S)(1-\cD(N))}}^2
                        &= 
                        \frac{1}{n}
                        \inparen{
                            \frac{\sum_{x\in S_1} e(x)\mu_1(x)\cD(x)}{e(S_1)(1-\cD(N))}
                            + \sum_{S\neq S_1} \frac{\sum_{x\in S} \mu_1(x)\cD(x)}{1-\cD(N)}
                        }^2\,.
                    \end{align*}
                    To simplify this, observe that:
                    First, $\sum_{S_1}e(x)\cD(x)=\Delta\eps\beta+\inparen{\abs{S_1}-1}\Delta\alpha$ and $\sum_{S_1}\cD(x)=\Delta\eps+\inparen{\abs{S_1}-1}\Delta\alpha$ and, hence, $e(S_1)=\frac{\eps\beta+\alpha\inparen{\abs{S_1}-1}}{\eps+\alpha\inparen{\abs{S_1}-1}}$.
                    Second, $\cD(N)\leq 1-\cD(x_m)=1-\Delta\eps$.
                    Combining these two implies that 
                    \begin{align*}
                        \frac{1}{n}\inparen{\sum_{S\in \cS} \frac{\sum_{x\in S} e(x)\mu_1(x)\cD(x)}{e(S)(1-\cD(N))}}^2
                        &\leq 
                        \frac{1}{n\Delta^2\eps^2}
                        \inparen{%
                            \begin{array}{c}
                                 \sum_{x\in S_1} e(x)\mu_1(x)\cD(x)\cdot 
                                 \frac{\eps+\alpha\inparen{\abs{S_1}-1}}{\eps\beta+\alpha\inparen{\abs{S_1}-1}}\\
                                 + \sum_{S\neq S_1} \sum_{x\in S} \mu_1(x)\cD(x)
                            \end{array}
                        }^2
                \end{align*}
                Substituting the values for different quantities implies that 
                \begin{align*}
                        \frac{1}{n}\inparen{\sum_{S\in \cS} \frac{\sum_{x\in S} e(x)\mu_1(x)\cD(x)}{e(S)(1-\cD(N))}}^2 %
                        &\leq 
                        \frac{9A^2\alpha^2}{n\eps^2}
                        \inparen{%
                             \frac{
                                \inparen{2\beta+\abs{S_1}-1}\inparen{\eps+\alpha\inparen{\abs{S_1}-1}}
                            }{\eps\beta+\alpha\inparen{\abs{S_1}-1}} 
                             + 1
                        }^2\\
                        &\leq 
                        \frac{9A^2\alpha^2}{n\eps^2}
                        \inparen{%
                             \frac{
                                \inparen{\eps+\alpha k}
                                \inparen{2\beta+\abs{S_1}-1}
                            }{\eps\beta+\alpha\inparen{\abs{S_1}-1}} 
                             + 1
                        }^2\,.
                        \tag{as $\abs{S_1}\leq k+1$}
                \end{align*}
                Since $\frac{z+x}{y+x}$ is an increasing function of $x$ for $y>z$ on $x>-y$ and $\abs{S_1}\leq k+1$
                \begin{align*}
                        \frac{1}{n}\inparen{\sum_{S\in \cS} \frac{\sum_{x\in S} e(x)\mu_1(x)\cD(x)}{e(S)(1-\cD(N))}}^2
                        &\leq 
                        \frac{9A^2\alpha^2}{n\eps^2}
                        \inparen{%
                             \frac{
                                \inparen{\eps+\alpha k}\inparen{2\beta+k}
                             }{
                                \eps\beta+\alpha k
                             } 
                             + 1
                        }^2 \\
                        &\leq 225A^2\eps^9\,.
                        \tagnum{since $\alpha\leq \eps$ and $n=\eps^{-7}$}{eq:hardness:part2:case2:upperbound}
                    \end{align*}
                    Combining Equations~\eqref{eq:hardness:part2:case2:lowerbound} and \eqref{eq:hardness:part2:case2:upperbound}, and using that $\alpha=\eps^3$ and $\eps\ll \sfrac{1}{(Ak)^2}$, implies that 
                    \[ 
                        \MSE{(\tau_{\cS, N})}
                        \geq \frac{A^2\eps^9}{2k^2\alpha} - 225 A^2\eps^9
                        \geq \frac{A^2\eps^6}{3k^2}\,.
                    \]
                    
                    Which is a contradiction since on the one hand $\MSE{(\tau_{\cS, N})}\leq \sfrac{U}{\sqrt{\eps}}= 8A^2\alpha^2\sqrt{\eps}$ and on the other hand $\MSE{(\tau_{\cS, N})}\geq \sfrac{A^2\eps^6}{(3k^2)}$.
                    (To see the contradiction, observe that as $\alpha=\eps^3$ and $\eps\ll k^{-4}$, $8A^2\alpha^2\sqrt{\eps}<\sfrac{A^2\eps^6}{(3k^2)}$.)
            \end{itemize}

        \bigskip 

        \noindent \textit{Part 3 ($x_m\in N$)}   
        We consider two cases. 
        \begin{itemize}
            \item \textbf{Case A ($x_{m-1}\not\in N$):}
                Since $x_m\in N$, all elements in $S_1\cup\dots\cup S_k$ have the same propensity score and, hence, 
                \begin{align*}
                    \bias\inparen{\tau_{\cS,N}}
                    = \inparen{
                            \frac{\sum_{x\not\in N} \mu_1(x)\cD(x)}{1-\cD(N)} 
                            - \sum_x \mu_1(x)\cD(x)
                        }^2
                    &= \inparen{
                            \frac{\sum_{x\not\in N} \mu_1(x)\cD(x)}{1-\cD(N)} 
                            - (T+2A)\alpha
                        }^2.
                \end{align*}
                Moreover, $1-\cD(N) \in 1 \pm \inparen{\eps + k\alpha}$.
                Therefore 
                \begin{align*}
                    \bias\inparen{\tau_{\cS,N}}
                    &= \inparen{
                            \frac{\sum_{x\not\in N} \mu_1(x)\cD(x)}{1 \pm \inparen{\eps + k\alpha}} 
                            - (T+2A)\alpha
                        }^2.
                \end{align*}
                Define
                \[
                    R\coloneqq \inparen{S_1\cup S_2\cup\dots\cup S_k}\cap [k]\,.
                \]
                It follows that 
                \begin{align*}
                    \bias\inparen{\tau_{\cS,N}}
                    &= \inparen{
                            \frac{
                                2A\Delta\alpha 
                                +
                                \sum_{i\in R} \mu_1(x_i)\cD(x_i)
                            }{1 \pm \inparen{\eps + k\alpha}} 
                            - (T+2A)\alpha
                        }^2 = \alpha^2 \inparen{
                                \sum_{i\in R} a_i - T
                                \pm O(A\inparen{\eps + k\alpha})
                        }^2.
                \end{align*}
                Since $\bias\inparen{\tau_{\cS, N}}=O(\alpha^2\sqrt{\eps})$ and if $\sum_{i\in R}a_i\neq T$ then $\abs{\sum_{i\in R}a_i - T}\geq 1$, it must be that $\sum_{i\in R} a_i = T.$
                Therefore, $R$ is a certificate that \instanceSS{} is a \texttt{Yes} instance. 
            \item \textbf{Case B ($x_{m-1} \in N$):}
                Define
                \[
                    R \coloneqq S_1\cup S_2\cup \dots \cup S_k\,.
                \]
                Since $x_m\in N$ all elements in $R=S_1\cup S_2\cup \dots \cup S_k$ have the same propensity score 
                \begin{align*}
                    \MSE\inparen{\tau_{\cS, N}}
                        &= \inparen{
                                \frac{\sum_{x\in R} \cD(x)\mu_1(x)}{
                                        1-\cD(N)
                                    } 
                                    - \sum_x \cD(x)\mu_1(x)
                            }^2.
                \end{align*}
                Moreover $1-\cD(N)=\cD(R) = \abs{R}\Delta\alpha$ and $\sum_{x\in R}\cD(x)\mu_1(x)=\Delta\alpha \sum_{i\in R} a_i$.
                Therefore 
                \begin{align*}
                    \MSE\inparen{\tau_{\cS, N}}
                        &= \inparen{
                                \frac{
                                    \sum_{i\in R} a_i
                                }{
                                    \abs{R}
                                } 
                                - \sum_x \cD(x)\mu_1(x)
                            }^2 = \inparen{
                                \frac{
                                    \sum_{i\in R} a_i
                                }{
                                    \abs{R}
                                } 
                                - O(A\alpha)
                            }^2\,.
                \end{align*}
                Since $\abs{R}\leq m$, $R$ is non-empty and $a\geq 1$, it follows that 
                \begin{align*}
                        \MSE\inparen{\tau_{\cS, N}}
                        &= \Omega\inparen{\frac{1}{m^2}}\,.
                \end{align*}
                This is a contradiction since $\MSE\inparen{\tau_{\cS,N}}\leq \sfrac{U}{\sqrt{\eps}}=O(\alpha^2\sqrt{\eps})=\eps^{6.5}$ and $\Omega\inparen{m^{-2}} \gg \eps.$
        \end{itemize}

    \end{proof}
    \subsubsection*{Completeness}\label{sec:proofof:thm:hardness:completeness}
    Finally, we prove completeness. 
    \begin{lemma}[{Completeness}]\label{lem:reduction:completeness}
        Consider any instance $\instanceSS{}(a,T)$ of \problemSS{} and the corresponding instance $\instanceMSE{}(m,n,e,\mu,v,\cD,U)$.
        If there is a subset $R\subseteq [k]$ such that $\sum_{i\in R}a_i=T$, then $\MSE\inparen{\tau_{\cS, N}}\leq U$ where $\cS = \inbrace{R\cup \inbrace{x_{m-1}}}$ and $N=\inbrace{x_{m}}\cup [k]\backslash R.$
    \end{lemma}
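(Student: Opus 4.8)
The plan is to substitute the proposed partition directly into the closed-form MSE expression from \cref{lem:exp_of_mse} (specialized, as in the reduction, to $\mu_0 \equiv v_0 \equiv 0$, so that $\MSE = \bias + \variance$ with $\bias(\tau_{\cS,N}) = \inparen{\tau - \Ex[\tau_{\cS,N}]}^2$) and to check that the result is at most $U = 8A^2\eps^7$. The starting observation is that $\cS$ consists of the single set $S_1 = R \cup \inbrace{x_{m-1}}$, and every covariate of $S_1$ has propensity $1$ by \eqref{eq:reduction:construction:firstkPt}--\eqref{eq:reduction:construction:secondLastPt}; hence $e(S_1) = 1$ and no $\sfrac{1}{(1-e(\cdot))}$ factors ever appear. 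I would also record $\cD(S_1) = |R|\Delta\alpha + \Delta$, $\cD(N) = \Delta\eps + (k-|R|)\Delta\alpha$, and note $\cD(S_1) + \cD(N) = \Delta(1+k\alpha+\eps) = 1$ by the choice $\Delta = \sfrac{1}{(1+\eps+k\eps^3)}$, so that $1-\cD(N) = \cD(S_1) = \Delta(1+|R|\alpha)$.

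For the bias: since $\mu_0 \equiv 0$, $\tau = \sum_x \cD(x)\mu_1(x)$, and substituting \eqref{eq:reduction:construction:firstkPt}--\eqref{eq:reduction:construction:lastPt} the contributions $\Delta\alpha A$, $2A\Delta\alpha$, and $-3A\Delta\alpha$ from the first $k$ points, from $x_{m-1}$, and from $x_m$ cancel, leaving $\tau = \alpha(T+2A)$ (independent of the individual $a_i$). On the other hand, by \cref{lem:exp_of_mse}, and since $e = e(S_1) = 1$ on $S_1$ while $x_m$ and $[k]\setminus R$ lie in $N$,
\[
    \Ex[\tau_{\cS, N}] = \frac{\sum_{x\in S_1}\cD(x)\mu_1(x)}{1-\cD(N)} = \frac{\Delta\alpha\inparen{\sum_{i\in R}a_i} + 2A\Delta\alpha}{\Delta(1+|R|\alpha)} = \frac{\alpha(T+2A)}{1+|R|\alpha}\,,
\]
the last equality being the only place the hypothesis $\sum_{i\in R}a_i = T$ is used. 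Hence $\bias(\tau_{\cS, N}) = \inparen{\alpha(T+2A)\cdot\sfrac{|R|\alpha}{(1+|R|\alpha)}}^2 \le |R|^2\alpha^4(T+2A)^2 \le 9A^2k^2\eps^{12}$, using $\alpha = \eps^3$, $|R| \le k$, and $T = \sum_{i\in R}a_i \le A$.

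For the variance, the construction ensures $v_1(x) + \mu_1(x)^2 = 4A^2$ for every $x$ with $e(x) = 1$, in particular on all of $S_1$; so the first variance term in \cref{lem:exp_of_mse} equals
\[
    \frac{1}{n}\cdot\frac{\sum_{x\in S_1} e(x)\cD(x)\inparen{v_1(x)+\mu_1(x)^2}}{e(S_1)^2(1-\cD(N))} = \frac{1}{n}\cdot\frac{4A^2\cD(S_1)}{1-\cD(N)} = \frac{4A^2}{n} = 4A^2\eps^7\,,
\]
and the remaining $-\sfrac{1}{n}(\cdot)^2$ term is nonpositive, so $\variance(\tau_{\cS, N}) \le 4A^2\eps^7$. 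Adding the two pieces, $\MSE(\tau_{\cS, N}) \le 4A^2\eps^7 + 9A^2k^2\eps^{12} \le 8A^2\eps^7 = U$, where the last step uses $\eps \ll (Am)^{-4}$ (so $9k^2\eps^5 \le 4$ and the bias term is of lower order); and since \instanceMSE{} is the rescaled instance (dividing $\mu_1$ by $A$ and $v_1, U$ by $A^2$), under which $\sfrac{\MSE}{U}$ is invariant, the bound transfers to the actual instance. I do not foresee a genuine obstacle here: the statement is a direct calculation, and the only delicate points are the exact cancellation yielding $\tau = \alpha(T+2A)$ and keeping careful track of the magnitudes of $\alpha,\beta,\eps,n$ so that the variance contributes exactly $\sfrac{U}{2}$ while the bias is negligible.
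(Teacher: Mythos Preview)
Your proof is correct and follows essentially the same approach as the paper's: both use that $e(S_1)=1$ to reduce the MSE to a tractable expression, bound the bias by $O(A^2k^2\alpha^4)$ using $\sum_{i\in R}a_i=T$, and show the variance is exactly $\sfrac{4A^2}{n}=4A^2\eps^7$. Your computation is slightly tighter than the paper's (you compute $1-\cD(N)=\Delta(1+|R|\alpha)$ and $\tau=\alpha(T+2A)$ exactly rather than using $\pm$-bounds), but the argument is the same in substance.
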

    \begin{proof}\proofof{\cref{lem:reduction:completeness}}
        Given a certificate $R$ that \instanceSS{} is a \texttt{Yes} instance, define 
        \[
            \cS = \inbrace{S = R\cup \inbrace{x_{m-1}}}
             \quadand 
            N=\inbrace{x_{m}}\cup [k]\backslash R\,.
        \]
        Since $x_m\in N$, all elements in $S$ have the same propensity score and, hence,
        \[
            \bias\inparen{\tau_{\cS, N}} = \inparen{
               \frac{\sum_{x\in S}\mu_1(x)\cD(x)}{1-\cD(N)} - \sum_x \mu_1(x) \cD(x)
            }^2\,.
        \]
        Moreover $1-\cD(N) = \cD(R)+\cD\inparen{x_{m-1}}\in \Delta \pm k\Delta\alpha$.
        Therefore 
        \begin{align*}
            \bias\inparen{\tau_{\cS, N}} 
                &= \inparen{
                   \frac{
                        \Delta\alpha \sum_{i\in R}a_i
                        + 2A\Delta\alpha
                    }{
                        \Delta\pm k\Delta\alpha
                    } 
                    - (T+2A)\alpha 
                }^2
                = O(A^2k^2\alpha^4)\,. \tag{as $\sum_{i\in R}a_i=T$}
        \end{align*}
        Further, since all $x\in S$ have $e(x)=1$
        \[
            \variance{\inparen{\tau_{\cS, N}}}
            = \frac{1}{n} \frac{
                    \sum_{x\in S} e(x) \cD(x) \inparen{v_1(x)+\mu_1(x)^2}
                }{
                    e(S)^2\inparen{1-\cD(N)}
                }
            = \frac{1}{n} \frac{
                    \sum_{x\in S} \cD(x) \inparen{v_1(x)+\mu_1(x)^2}
                }{
                    \inparen{1-\cD(N)}
                }\,.
        \]
        Further 
        \begin{align*}
            \variance{\inparen{\tau_{\cS, N}}}
            &\leq \frac{1}{n} \frac{
                    4A^2 \sum_{x\in S} \cD(x)
                }{
                    1-\cD(N)
                }
            = \frac{4A^2}{n}
            = 4A^2\eps^7\,.
        \end{align*}
        Thus, it follows 
        \begin{align*}
            \MSE\inparen{\tau_{\cS, N}}
            = \bias{\inparen{\tau_{\cS, N}}}^2 + \variance{\inparen{\tau_{\cS, N}}}
            \leq O(A^2k^2\alpha^4) + 4A^2\eps^7
            \leq 8A^2\eps^7\,.
        \end{align*}
    \end{proof}

\subsection{Statistical Hardness of Learning Good-Local Partitions} %
\label{sec:statisticalHardness}

    In this section, we prove \cref{lem:impossibility_of_learning_gl_partition_intro}.
    The formal version of \cref{lem:impossibility_of_learning_gl_partition_intro} is as follows.

            \begin{restatable}[{Impossibility of Learning a Good-Local Partition}]{lemma}{LemmaImpossibility}\label{lem:impossibility_of_learning_gl_partition}
                Fix $\alpha,\beta>0$ and $\gamma<\sfrac{1}{16}$. %
                Suppose \cref{asmp:lipschitzness} holds and there exists an \goodlocal{\alpha,\beta,\gamma} $(\cS, N)$.
                Further assume that $\cS\subseteq\hyH$ and $N\in \hyH$, where $\hyH$ is a hypothesis class.
                There is an $\hyH$ with $\vc{}(\hyH)=O(1)$ and, for any $n\geq 1$, an unconfounded distribution $\cD$, such that, no algorithm, given a censored dataset $\dataset\sim \cD$ of size $n$, outputs an \goodlocal{\alpha,\beta,\gamma} with probability $>\sfrac{1}{8}$.\footnote{We remind the reader of the definition of the VC dimension.
                    \begin{definition}[{VC Dimension}]
                        Consider any collection of subsets $\cH$ of $\R^d$.
                    	Given a finite set $S$, define the collection of subsets $\cH_S\coloneqq \{H\cap S\mid H\in \cH\}$.
                    	We say that $\cH$ shatters a set $S$ if $|\cH_{S}|=2^{|S|}$.
                    	The VC dimension of $\cH$, $\vc{}(\cH)\in \N$,  is the largest integer such that there exists a set $S$ of size $\vc{}(\cH)$ that is shattered by $\cH$.
                    \end{definition}}
            \end{restatable}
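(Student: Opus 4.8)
\textit{Proof plan.} The plan is to embed an agnostic PAC‑learning problem of \emph{unbounded} VC dimension into the task of finding a good‑local partition, so that the standard $\Omega(\vc/\gamma^{2})$ sample‑complexity lower bound for agnostic learning \citep{shalev2014understanding} transfers. Work on the line ($d=1$; the dimension is ours to choose) and take $\hyH$ to be the family of all intervals of $\R$, so $\vc(\hyH)=2=O(1)$; the good‑local partitions we build will use only intervals for their sets $S\in\cS$ and for the null set $N$. We may assume $\beta<1/4$, since otherwise no set can have coarse propensity product $\geq\beta$ and the hypothesis of the lemma (existence of an $(\alpha,\beta,\gamma)$‑good‑local partition) is vacuous, making the statement trivial.

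Fix $n$ and pick an integer $k=k(n)$ that is large (it suffices that $k$ dominates a fixed polynomial in $n$ and $1/\gamma$; one may even take $k=2^{n}$). We construct a family $\{\cD_\sigma\}$ of unconfounded distributions indexed by a hidden string $\sigma$ drawn from a suitable prior over $\{0,1\}^{k}$. All potential outcomes are set to $0$, so $\mu_0\equiv\mu_1\equiv 0$ is $0$‑Lipschitz (\cref{asmp:lipschitzness} holds, with $L=0$) and unconfoundedness holds trivially; in particular a censored sample $(x_i,y_i,t_i)$ carries exactly the information of the pair $(x_i,t_i)$ with $t_i\sim\mathrm{Bern}(e(x_i))$. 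The covariate line is carved into $\Theta(k)$ consecutive cells of width $\Theta(\alpha)$; inside each cell $e(\cdot)$ takes only the two extreme values $\beta/2$ and $1-\beta/2$ (both $\beta$‑outlier values) in a pattern dictated by $\sigma$, and the conditional masses are chosen so that the $\sigma$‑pattern can be tiled by diameter‑$\alpha$ intervals each of which sees equal mass of the two outlier types, hence coarse propensity $1/2$ and product $1/4\geq\beta$, with $N=\emptyset$. Thus for every $\sigma$ in the support there is an $(\alpha,\beta,\gamma)$‑good‑local partition with all sets (and the empty $N$) in $\hyH$, so the hypothesis of the lemma is satisfied by each $\cD_\sigma$.

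The crux is a \emph{rigidity} claim: for any fixed partition $P=(\cS,N)$ with $\cS\subseteq\hyH$ and $N\in\hyH$, if $P$ is $(\alpha,\beta,\gamma)$‑good‑local for $\cD_\sigma$ then $\sigma$ is forced on $\Omega(k)$ coordinates. Indeed, the diameter constraint prevents any $S\in\cS$ from spanning more than $O(1)$ cells, and within such a span the only way to get coarse propensity product $\geq\beta$ is to balance the two outlier types, which pins down the local pattern of $\sigma$; and the budget $\cD(N)\leq\gamma<1/16$ forbids parking a constant fraction of the cells in $N$. Now condition on $\dataset\sim\cD_\sigma^{\,n}$: this reveals $\sigma$ on at most $O(n)$ cells (those hit by a sample), so the conditional law of $\sigma$ still has entropy $\Omega(k)$; by rigidity, whatever partition the algorithm outputs is good‑local for $\cD_\sigma$ only on a set of $\sigma$'s of conditional probability $2^{-\Omega(k)}$. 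Averaging over the prior, the success probability of any randomized algorithm is $2^{-\Omega(k)}<\sfrac{1}{8}$ once $k$ is large enough in terms of $n$; equivalently, for every algorithm there is a concrete $\sigma$ on which it fails with probability $>\sfrac{7}{8}$. Since every $\cD_\sigma$ is unconfounded and satisfies \cref{asmp:lipschitzness}, this gives the lemma.

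The main obstacle is precisely this rigidity claim: unlike the earlier upper‑bound arguments, where a coarse‑enough set automatically attains a balanced coarse propensity, here the outlier pattern must be engineered so that every locally balanced diameter‑$\alpha$ window "reads off" the phase of $\sigma$ in its region, and so that the leftover fragments of individual cells cannot be balanced without consulting neighbouring cells. Choosing the cell widths, the conditional masses, and the prior on $\sigma$ so as to guarantee simultaneously (i) existence of a good‑local partition for every $\sigma$ in the support and (ii) rigidity is the technical heart of the proof; the remaining ingredients — trivial outcomes yield Lipschitzness and unconfoundedness for free, and the counting/entropy step is the usual agnostic lower bound — are routine.
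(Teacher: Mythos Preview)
Your high-level strategy---encode a hidden string $\sigma\in\{0,1\}^k$ with $k\gg n$ into the propensity pattern, show that any \goodlocal{\alpha,\beta,\gamma} must pin down $\Omega(k)$ bits of $\sigma$, and finish with an entropy/counting argument---is the same skeleton the paper uses, and your simplifications (trivial outcomes so Lipschitzness is free, $d=1$, intervals) are shared. The substantive difference is in how the hidden information is encoded. The paper places $k$ covariate points pairwise $\geq 4\alpha$ apart; a hidden subset (the ``negatives'') are made into outliers with $e(x)=\beta/2$, and each outlier $x_i$ is given a \emph{unique non-outlier anchor} $z_i$ within distance $\alpha$ (with $e(z_i)=1/2$). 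Rigidity is then automatic: any $S\in\cS$ containing an outlier $x_i$ must also contain its anchor $z_i$ (otherwise $e(S)(1-e(S))<\beta$), and since nothing else lies within $\alpha$ of either point, the non-singleton sets in $\cS$ identify the outliers exactly, modulo the $\leq\gamma$ mass parked in $N$. This reads off the hidden subset and feeds directly into the agnostic lower bound.

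Your construction instead makes \emph{every} point an outlier (propensity $\beta/2$ or $1-\beta/2$) and hopes that any balanced diameter-$\alpha$ window must ``read off'' the local phase of $\sigma$. But this rigidity claim is where the argument breaks, and it fails for the natural instantiations of your sketch. For instance, if each cell of width $\alpha$ is split into two equal-mass halves with opposite extreme propensities, the order determined by $\sigma_i$, then the single partition $\cS=\{\text{cell }i:i\in[k]\}$ has $e(S)=\tfrac{1}{2}$ for every cell \emph{regardless of $\sigma$}; it is therefore an \goodlocal{\alpha,\beta,0} for all $2^k$ strings simultaneously, so one fixed output succeeds with probability $1$ and your rigidity claim is false. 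You acknowledge that engineering the cell widths, masses, and prior to force rigidity is ``the technical heart of the proof'' and leave it open; that is precisely the gap. The paper's anchor-point device sidesteps this entirely: by pairing each outlier with a unique nearby non-outlier, the combinatorics of which points must be merged is forced, and no delicate balancing construction is needed.
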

            As mentioned in \cref{sec:overview}, the proof \cref{lem:impossibility_of_learning_gl_partition} utilizes the fact that $\cS$ can contain a large number of subsets.
            Using this, it reduces the problem of PAC-learning a union of $\abs{\cS}$ intervals (of width at most $\alpha$) up to $\gamma$-error in the agnostic setting to finding an \goodlocal{\alpha,\beta,\gamma}.
            The result follows that this class has a VC-dimension of at least $\abs{\cS}$ and, hence, if $\abs{\cS}\to\infty$, then it cannot be learned with any finite number of samples.

        \subsubsection{Technical Overview}
            In this section, we overview the proof of \cref{lem:impossibility_of_learning_gl_partition}.
            The complete proof appears in \cref{sec:statisticalHardness:proof}.

            Define $k \coloneqq \abs{\cS}$; this is a parameter in our proof and we select $k>n$.
            It will suffice to consider one-dimensional covariates, i.e., $d=1$.
            Let $\hyH$ be the set of all intervals of width at most $\alpha$.
            Let $\hyG$ to be the union of $k$ sets from $\hyH$, i.e., $\hyG \coloneqq \inbrace{H_1\cup H_2\cup \dots\cup H_k\colon H_1,H_2,\dots,H_k\in \hyH}$
            It is straightforward to see that $\hyG$ can shatter of set of $k$ points $X=\inbrace{x_1,x_2,\dots,x_k}\subseteq \R$  such that any pair of points in $X$ is at least $4\alpha$ apart.
            Hence, in particular, using well-known lower bounds, it follows that $\hyG$ cannot be PAC-learned with $k$ samples.\footnote{Recall that a set of $p$ points $X=\inbrace{x_1,x_2,\dots,x_p}\subseteq \R$ are shattered by a hypothesis class $\hyG\subseteq \zo^\R$ if for every vector $b\in \zo^{p}$, there is a set $G\in \hyG$ such that $x_i\in G$ if and only if $b_i=1$.}

            \paragraph{PAC-Learning Instance.} Fix any $X$ of size $k$ shattered by $\hyG$ such that points in $X$ are pairwise $4\alpha$ apart.
            Consider any distribution $\cP$ on $X\times \zo$.
            Let ${\rm neg}(X)=\inbrace{x\in X\colon \Pr_{(x,y)\sim \cP}[y=1] <\sfrac{1}{2}}$ and ${\rm pos}(X)=X\backslash {\rm neg}(X)$.
            Observe that any $G\in \hyG$ with error $\eps+\opt{}$ on $\cP$ must label all but $\eps$-fraction of points in ${\rm neg}(X)$ negatively and all but $\eps$-fraction of points in ${\rm pos}(X)$ positively.
        
            \paragraph{Reduction.}
            Given $X$ and $\cP$, we construct an unconfounded distribution $\cD=\cD_{\alpha,\beta,\gamma,\eps}$ such that any \goodlocal{\alpha,\beta,\gamma} w.r.t. to $\cD$ can be used to find a hypothesis with $\eps+\opt{}$ error on $\cP$ as follows:
                        given $(\cS, N)$, let $\cS_{>1}\subseteq \cS$ be the subset of $\cS$ where each set $S\in \cS_{>1}$ has more than one point, and let $G_\cS$ be the hypothesis that labels all points from $X$ in $\cS_{>1}$ negatively and all remaining points positively.

                \paragraph{Construction.}
                At a high level, start with covariates $X$ and add $\abs{{\rm neg}(X)}$ additional covariates $Z$:
                    for each $x_i\in {\rm neg}(X)$, we add a covariate $z_i$ very close to it.
                We select propensity scores so that each $x\in X$ has $e(x)=\sfrac{1}{2}$ and each $z\in Z$ is a $\eta$-outlier for $\eta\ll \beta$.
                The remainder of the reduction is independent of the distribution of the outliers and, hence, we can choose it to satisfy unconfoundedness and \cref{asmp:lipschitzness}.
                Further, observe that the above construction already satisfies \cref{asmp:sparsity,asmp:isolation} with constants $\alpha,\beta$.

                \paragraph{Overview of Soundness and Completeness.}
                For simplicity suppose $\gamma=0$.
                Now we show that any \goodlocal{\alpha,\beta,0} $(\cS, N)$ must cluster each $z_i\in Z$ with the only point within $\alpha$-distance of it, namely, $x_i\in {\rm neg}(X)$.
                If not, then $e(S) \leq \eta$ for the partition $S$ containing $z_i$. ($z_i$ cannot be clustered with another point as all other points at further than $\alpha$-away and $\diam{(S)}\leq\alpha$.)
                Hence, we can \textit{identify} all samples in ${\rm neg}(X)$ given $(\cS, N)$.
                When $\gamma>0$, then we can identify all but $O(\gamma)$ fraction of points in ${\rm neg}(X)$, which we show is sufficient to achieve $(\gamma\eps)$-accurate.

        \subsubsection{Proof of \cref{lem:impossibility_of_learning_gl_partition}}\label{sec:statisticalHardness:proof}

            \begin{proof}\proofof{\cref{lem:impossibility_of_learning_gl_partition}}
                Define $k \coloneqq \abs{\cS}$; this is a parameter in our proof and we select 
                \[
                    k > n\,.
                \]
                It will suffice to consider one-dimensional covariates, i.e., $d=1$.
                We will select $\hyH$ to be the set of all intervals of width at most $\alpha$, i.e., $\hyH = \inbrace{[a,b]\colon \abs{a-b} \leq \alpha}$.
                Note that any $S\in \hyH$ has $\diam{(S)}\leq \alpha$ as required by the definition of an \goodlocal{\alpha,\beta,\gamma}.

                \paragraph{Proof outline (Connection to Agnostic PAC Learning).}
                Recall that a set of $p$ points $X=\inbrace{x_1,x_2,\dots,x_p}\subseteq \R$ are shattered by a hypothesis class $\hyG\subseteq \zo^\R$ if for every vector $b\in \zo^{p}$, there is a set $G\in \hyG$ such that $x_i\in G$ if and only if $b_i=1$.
                It is well known that if a hypothesis class can shatter sets of size $p$ then it cannot be learned using $o(p)$ samples.
                We will use the following version of this fact in our proof. 
                \begin{fact}[Section 28.2.2 of \citet{shalev2014understanding}]\label{fact:lower_bound_sample_complexity}
                    For any $\eps>0$, 
                        hypothesis class $\hyG\subseteq\zo^\R$, and 
                        $p$ points $X=\inbrace{x_1,x_2,\dots,x_p}\subseteq\R$ that are shattered by $\hyG$, 
                    there is a distribution $\cP$ over $X\times \zo$, such that, 
                        no algorithm, given $8\eps^{-2}p$ samples from $\cP$ outputs a hypothesis $G\in \hyG$ such that the error of $G$, ${\rm Err}_\cP(G)$ is at most $\eps+\opt{}$.
                    Where ${\rm Err}_\cP(G) \coloneqq \Pr_{(x,y)\sim \cP}\insquare{y\neq \mathds{1}_{x\in G}}$ and $\opt{}\coloneqq \min_{G\in \hyG}{\rm Err}_\cP(G)$ is the minimum error of a hypothesis in $\hyG$.
                \end{fact}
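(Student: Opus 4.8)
The plan is to prove this as a minimax (No-Free-Lunch / Assouad-type) lower bound on the shattered set, reading the quantifier in the standard way: the hard instance is a uniformly random member of a family $\inbrace{\cP_b : b\in\inbrace{\pm 1}^p}$, and ``no algorithm succeeds'' means that for every (possibly randomized) algorithm there is a member of the family on which it fails with constant probability. (For a single fixed $\cP$ the literal reading is vacuously false, since the constant algorithm outputting $\argmin_G {\rm Err}_\cP(G)$ attains $\opt$, so this minimax reading is the intended one and is exactly what the downstream reduction needs.)

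First I would build the hard family. Put the uniform marginal on the $p$ shattered points, and for each sign vector $b\in\inbrace{\pm 1}^p$ let $\cP_b$ have conditional label law $\Pr_{\cP_b}[y=1\mid x=x_i]=\tfrac12+\tfrac12 b_i\rho$, for a noise level $\rho=\Theta(\eps)$ to be fixed. Because $X$ is shattered, the per-point majority-vote classifier is realized by some $G_b\in\hyG$, so the Bayes risk $\opt=\tfrac{1-\rho}{2}$ is attained in-class. Since $\cP_b$ is supported on $X$, any $G\in\hyG$ matters only through its induced labeling $\hat y_i=\mathds{1}_{x_i\in G}$, and the central bookkeeping identity is
\[
    {\rm Err}_{\cP_b}(G)-\opt \;=\; \frac{\rho}{p}\, d_H(\hat y,\tilde b),
    \qquad \tilde b_i \coloneqq \tfrac{1+b_i}{2}\in\zo,
\]
with $d_H$ the Hamming distance. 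Thus ``excess error $\le\eps$'' is exactly ``reconstruct $\tilde b$ within Hamming distance $\eps p/\rho$,'' converting learning into the estimation of $p$ independent coin-signs.

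Next I would lower bound this estimation problem. Place the uniform prior on $b$ and condition on the multiset of sample locations, which is independent of $b$; given the locations the labels are independent across points and the posterior on $b$ factorizes coordinatewise, so the Bayes-optimal reconstruction of $\tilde b_i$ depends only on the $k_i$ labels seen at $x_i$. For one coordinate the two candidate laws are $\mathrm{Ber}(\tfrac{1\pm\rho}{2})^{\otimes k_i}$, and Le Cam's two-point bound with Pinsker and KL-tensorization gives $\Pr[\hat y_i\neq\tilde b_i]\ge \tfrac12\inparen{1-c\,\rho\sqrt{k_i}}$ for an absolute constant $c$. Taking expectations, using $k_i\sim\mathrm{Bin}(n,1/p)$ and Jensen ($\Ex[\sqrt{k_i}]\le\sqrt{n/p}$), yields
\[
    \Ex_{b}\,\Ex_{S}\insquare{{\rm Err}_{\cP_b}(A(S))-\opt}
    \;\ge\; \frac{\rho}{2}\inparen{1-c\,\rho\sqrt{n/p}}.
\]
Choosing $\rho\asymp\eps$ makes the right-hand side a constant multiple of $\eps$ whenever $n=O(p/\eps^2)$; a reverse-Markov step (the excess error lies in $[0,\rho]$) converts this expectation bound into a constant-probability failure bound, and averaging over $b$ then extracts a single $\cP_b$ on which the given algorithm fails with constant probability.

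The main obstacle is quantitative rather than structural: matching the precise threshold $n=8\eps^{-2}p$ (i.e., the constant hidden in ``$O(p/\eps^2)$'') requires replacing the lossy Pinsker bound by a sharper per-coordinate coin-distinguishing estimate and optimizing $\rho$, since the crude chain above only certifies failure for $n$ below $p/(\Theta(1)\eps^2)$ with a suboptimal constant. This is exactly the careful bookkeeping carried out in \citet[Section~28.2.2]{shalev2014understanding}, whose sharpened lemma I would import to pin the constant. I note, however, that the conceptual content is entirely in the three steps above, and for the way the Fact is invoked here—where $p=\abs{\cS}$ is taken larger than $n$—the displayed bound already gives the needed qualitative impossibility ($n\to$ any finite value as $p\to\infty$) without the sharp constant.
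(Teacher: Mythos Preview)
The paper does not prove this statement: it is cited as a known fact from \citet[Section~28.2.2]{shalev2014understanding} and used as a black box inside the proof of \cref{lem:impossibility_of_learning_gl_partition}. Your sketch is exactly the textbook argument from that reference (the No-Free-Lunch/Assouad construction on a shattered set with $\tfrac{1\pm\rho}{2}$ label noise, reduction to per-coordinate sign estimation, Le Cam plus averaging), so there is nothing to compare against in the paper itself.

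Two remarks. First, your opening caveat about the minimax reading is correct and worth stating, since the paper's phrasing of the Fact literally fixes a single $\cP$; the downstream use only needs the weaker existential-over-$\cP$-for-each-algorithm form, which is what your argument delivers. Second, your assessment of the constant is right: the paper's application sets $p=k>n$, so only the qualitative $\Omega(p/\eps^2)$ dependence is used, and the crude Pinsker route you outline already suffices for what the paper needs, independently of whether the $8$ is sharp.
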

                We will choose $\hyG$ to be the union of $k$ sets from $\hyH$, i.e., 
                \[
                    \hyG \coloneqq \inbrace{H_1\cup H_2\cup \dots\cup H_k\colon H_1,H_2,\dots,H_k\in \hyH}\,.
                \]
                Observe that for any \goodlocal{\alpha, \beta, \gamma} $(\cS, N)$, $\cS\in \hyG$.
                Given any $\cS\subseteq \hyG$, divide it into two parts:
                $\cS$ into two parts $\cS_{1}$ such that each $S\in \cS_1$ contains exactly one point from $X$ and $\cS_{>1}$ where each $S\in \cS_{>1}$ contain at least 2 points from $X$.
                Let $G_\cS$ be the hypothesis that labels all points from $X$ in $\cS_{>1}$ negatively and all remaining points positively.
                In the remainder of the proof, we construct distribution $\cD$ where the following implication holds:
                \[
                    \text{if $(\cS, N)$ is an \goodlocal{\alpha,\beta,\gamma}}\,,\quadtext{then}
                    {\rm Err}(G_\cS) \leq \eps + \opt{}\,.
                    \yesnum\label{eq:sample_complexity:implication}
                \]
                Hence, any algorithm for learning an \goodlocal{\alpha,\beta,\gamma} implies an algorithm for finding a hypothesis, namely $G_\cS\in \hyG$, with near-optimal error.
                The result then follows because the sample complexity of finding a hypothesis with near-optimal error is at least $k$ is at least $k>n$ by \cref{fact:lower_bound_sample_complexity}.

                One way to see why the sample complexity of finding a hypothesis with near-optimal error is at least $k$, is to construct a set of $2k$ points shattered by $\hyG$.
                While such sets exist, more interestingly for us, there is a set of $k$ points $X=\inbrace{x_1,x_2,\dots,x_k}\subseteq \R$ that is shattered by $\hyG$ and has the property that pair of points in $X$ is at least $4\alpha$ apart, i.e., 
                \[
                    \min_{x,x'\in X;\ x\neq x'} \abs{x-x'} > 4\alpha\,.
                \]
                It suffices to select $X=\inbrace{5\alpha,10\alpha, \dots, 5k\alpha}$.
                Fix this $X$ for the remainder of the proof.
                
                In the remainder of the proof, we construct a distribution $\cD$ (based on the distribution $\cP$ promised to exist in \cref{fact:lower_bound_sample_complexity}) and prove \cref{eq:sample_complexity:implication}.

                \paragraph{Construction of unconfounded distribution $\cD$.}
                Fix the set $X=\inbrace{0,5\alpha, 10\alpha, \dots, 5k\alpha}$, and hypothesis classes $\hyH$ and $\hyG$ described in the overview.
                Let $\cP$ be the distribution promised \cref{fact:lower_bound_sample_complexity} supported on $X\times \zo$.
                Divide the samples in $X$ into two parts:
                \begin{align*}
                    {\rm neg}(X) &\coloneqq 
                        \inbrace{
                            x\in X\colon \Pr_{(x,y)\sim \cP}[y=1\mid x=x_i]< \nfrac{1}{2}
                        }\,,\\
                    {\rm pos}(X) &\coloneqq 
                        \inbrace{
                            x\in X\colon \Pr_{(x,y)\sim \cP}[y=1\mid x=x_i]\geq  \nfrac{1}{2}
                        }\,.
                \end{align*}
                A useful rule of thumb, which we formalize later is that any hypothesis with near-optimal error must label most of the points in ${\rm neg}(X)$ negatively.
                For each of notation, let $m=\abs{{\rm neg}(X)}$ and
                \[
                    {\rm neg}(X) = \inbrace{x_{i(1)}, x_{i(2)}, \dots, x_{i(m)}}\,.
                \]
                We will introduce a set of $m$ new points $Z=\inbrace{z_{i(1)},z_{i(2)},\dots,z_{i(m)}}$ that satisfy:
                \[
                        \text{for each $1\leq \ell\leq m$}\,,\quad 
                            0<\abs{z_{i(\ell)}-x_{i(\ell)}}<\alpha\,.
                \]
                (For instance, it suffices to let $z_{i(\ell)}=5i(\ell)\cdot \alpha - (\sfrac{\alpha}{2})$ for each $1\leq \ell\leq m$).
                Define $\cD_\cX$ to be the uniform mixture of $\cP_X$ (i.e., the projection of $\cP$ on $X$) and the uniform distribution on $Z$, i.e., 
                \[
                    \cD_\cX = \frac{1}{2}\inparen{\cP_X + \text{Unif}(Z)}\,.
                \]
                The distribution of outcomes at each $x$ is irrelevant for this proof.
                This is because the definition of a good-local partition (\cref{def:goodlocal}) does not depend on outcomes.
                Crucially, due to this, one can choose outcomes that satisfy \cref{asmp:lipschitzness}.
                
                To complete the construction of $\cD$, we select the distribution of treatments such that each point has the following propensity score:
                    for each $z\in Z$ and $x\in X$
                    \[
                        e(z)=\frac{1}{2}
                        \quad \text{and}\quad 
                        e(x) = \begin{cases}
                            \frac{\beta}{2} & \text{if } x\in {\rm neg}(X)\\% \Pr_{(X,Y)\sim \cP}[Y=1\mid X=x]<\frac{1}{2},\\
                            \frac{1}{2} & \text{otherwise}
                        \end{cases}\,.
                    \]
                In the above construction, all negative samples $x\in {\rm neg}(X)$ are outliers (i.e., have $e(x)(1-e(x))<\beta$) and vice versa.

                We claim that the following is a sufficient condition to ensure that the hypothesis $G\colon X\to \zo$ has error at most ${\rm Err}_\cP(G)\leq 16\eps\gamma + \opt{}$:
                \begin{enumerate}
                    \item For all but $\gamma k$ values of $x\in {\rm pos}(X)$, $G(x)=1$;
                    \item For all but $\gamma k$ values of $x\in {\rm neg}(X)$, $G(x)=0$.
                \end{enumerate}
                To see this, observe that the hypothesis $G^\star$ with error $\opt{}$ labels all points in ${\rm neg}(X)$ negatively and all points in ${\rm pos}(X)$ positively.
                Any hypothesis $G$ satisfying the above claim differs from $G^\star$  on at most $2\gamma k$ points.
                Since $\cP$ is guaranteed to satisfy $\Pr[y=1\mid x=x_i]\in \inbrace{\frac{1-8\eps}{2}, \frac{1+8\eps}{2}}$ and each point has mass $\frac{1}{m+k}$, the error of $G$ and $G^\star$ differ by at most $8\eps\times 2\gamma k\times \frac{1}{m+k}\leq 16\eps\gamma$.

                Now, we are ready to show that any \goodlocal{\alpha,\beta,\gamma} is sufficient to identify a hypothesis with error at most $8\eps\gamma+\opt{}$.
                Consider any \goodlocal{\alpha,\beta,\gamma} $(\cS, N)$.
                Divide $\cS$ into two parts $\cS_{1}$ such that each $S\in \cS_1$ contains exactly one point from $X$ and $\cS_{>1}$ where each $S\in \cS_{>1}$ contain at least 2 points from $X$.
                Since $(\cS, N)$ is an \goodlocal{\alpha,\beta,\gamma}, $\cD(N)\leq \gamma$ and, hence, $N$ contains at most $\gamma\inparen{k+m}\leq 2\gamma k$ points from $X\cup Z$ and, hence, at most $2\gamma k$ points from $X$.

                We make two observations.
                \begin{enumerate}
                    \item \textit{(All $x\in {\rm neg}(X)\backslash N$ are covered by $\cS_{>1}$)}
                        Any point $x\in {\rm neg}(X)$ that is not in $N$, must be covered by a set in $\cS_{>1}$: if not, then the set $S\in \cS$ covering $x$ must be a singleton and, hence, $e(S)=e(x)=\sfrac{\beta}{2}$, which contradicts the fact that $e(S)(1-e(S))\geq \beta$ as required by the definition of a good-local partition.
                    \item \textit{(All $x\in {\rm pos}(X)\backslash N$ are covered by $\cS_{1}$)}
                        Any point $x\in {\rm pos}(X)$ that is not in $N$, must be covered by a set in $\cS_{1}$ as for all $S\in \cS$, $\diam{(S)}\leq \alpha$ but there is no other point that is $\alpha$-close to $x$.
                \end{enumerate}
                Now consider the hypothesis $G$ that negatively labels all points covered by $\cS_{>1}$ and positively labels all remaining points.
                The above observations imply that 
                \begin{align*}
                    \abs{{\rm neg}(X)\cap \inbrace{x\in X\colon G(x)=1}}
                    \leq \abs{N}\leq \gamma k\quadand
                    \abs{{\rm pos}(X)\cap \inbrace{x\in X\colon G(x)=0}}
                    \leq \abs{N}\leq \gamma k\,.
                \end{align*}
                Therefore, from our claim above, it follows that ${\rm Err}(G)\leq 2\gamma k+\opt{}$.
                Thus, given an \goodlocal{\alpha,\beta,\gamma}, one can construct a hypothesis whose error is $(2\gamma k)$-close to the optimal.

                Finally, observe that $n$ draws from $\cP$ can be used to generate $\Theta(n)$ draws from $\cD$ (as $\cD$ is just a uniform mixture of $\cP$ and another distribution).
                Since no algorithm, given $n$ samples from $\cP$, can output a hypothesis with no error with significant probability, no algorithm given $2n$ samples from $\cD$, can output a good local partition either. 
            \end{proof}

\section{Properties of CIPW Estimators Based on Good-Local Partitions}\label{sec:asymptoticNormality}

    \subsection{Asymptotic Normality of CIPW Estimators Based on Good-Local Partitions}
        In this section, we show that CIPW estimators arising from good-local partitions are asymptotically normal.
        
        \begin{theorem}[{Asymptotic Normality}]\label{thm:asymptoticNormality}
            Fix any $\alpha>0$, $\beta, \gamma\in (0,1)$, and $0\leq \eps <\beta/2$.
            For any unconfounded distribution $\cD$, an \goodlocal{\alpha,\beta,\gamma} $(\cS, N)$ for $\cD$, and any (possibly inaccurate) propensity scores $\wh{e}\in B(e, \eps)$
            \[
                \inparen{
                    \tau_{\cS, N}(\dataset; \wh{e})  
                    - \Ex_\cD\insquare{
                        \tau_{\cS, N}(\dataset; \wh{e})
                        }
                }\cdot \sqrt{
                    \frac{n}{
                        \var_\cD\insquare{
                            \tau_{\cS, N}(\dataset; \wh{e}) 
                        }
                    }
                }
                ~\xrightarrow{\abs{\dataset}\to \infty}~
                \cN\inparen{0, 1}\,.
            \]
            where $\dataset$ is generated from $\cD$.
        \end{theorem}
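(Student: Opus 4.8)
The plan is to realize $\tau_{\cS,N}(\dataset;\wh e)$ as a ratio of empirical averages of i.i.d.\ \emph{bounded} summands and then combine the Lindeberg--Lévy CLT with Slutsky's theorem. For $\dataset=\{(x_i,y_i,t_i)\}_{i=1}^n$ drawn i.i.d.\ from $\cD$, write $S_x\in\cS$ for the unique set containing $x$ and set
\[
W_i \coloneqq \mathds{I}[x_i\notin N]\inparen{\frac{t_iy_i}{\wh e(S_{x_i})}-\frac{(1-t_i)y_i}{1-\wh e(S_{x_i})}}, \qquad V_i\coloneqq \mathds{I}[x_i\notin N],
\]
so that $\tau_{\cS,N}(\dataset;\wh e)=\bar W_n/\bar V_n$ with $\bar W_n\coloneqq\tfrac1n\sum_iW_i$ and $\bar V_n\coloneqq\tfrac1n\sum_iV_i$, where the pairs $(W_i,V_i)$ are i.i.d.; the estimator is undefined only when $\bar V_n=0$, which has probability $\cD(N)^n\le\gamma^n\to0$, so we may work on $\{\bar V_n>0\}$. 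The first thing to record is what the good-local hypothesis and $\eps<\beta/2$ buy us: $e(S)(1-e(S))\ge\beta$ forces $e(S)\in[\beta,1-\beta]$, and $\wh e\in B(e,\eps)$ gives $|\wh e(S)-e(S)|\le\eps<\beta/2$, hence $\wh e(S)\in[\beta/2,1-\beta/2]$ for every $S\in\cS$; therefore $|W_i|\le4/\beta$ almost surely, so $W_i,V_i$ have finite second moments, and $\Ex[V_i]=1-\cD(N)\ge1-\gamma>0$.

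Next, put $\tau_0\coloneqq\Ex[W_i]/\Ex[V_i]$ (so $\Ex[W_i-\tau_0V_i]=0$), $\sigma^2\coloneqq\var(W_i-\tau_0V_i)$, and $\sigma_\star^2\coloneqq\sigma^2/\Ex[V_i]^2$; we may assume $\sigma^2>0$, since otherwise $\tau_{\cS,N}$ is a.s.\ constant, $\var_\cD[\tau_{\cS,N}]=0$, and the statement is vacuous. By Lindeberg--Lévy, $\tfrac1{\sqrt n}\sum_i(W_i-\tau_0V_i)\xrightarrow{d}\cN(0,\sigma^2)$, while the SLLN gives $\bar V_n\to\Ex[V_i]$ a.s. Since
\[
\sqrt n\inparen{\tau_{\cS,N}(\dataset;\wh e)-\tau_0}=\frac{\tfrac1{\sqrt n}\sum_i(W_i-\tau_0V_i)}{\bar V_n},
\]
Slutsky's theorem yields $\sqrt n\,(\tau_{\cS,N}(\dataset;\wh e)-\tau_0)\xrightarrow{d}\cN(0,\sigma_\star^2)$. (Equivalently one may apply the multivariate CLT to $(\bar W_n,\bar V_n)$ and the delta method to $g(a,b)=a/b$, whose gradient at $(\Ex[W_i],\Ex[V_i])$ gives asymptotic variance exactly $\sigma_\star^2$; and if one works instead with the $\tfrac1n$-normalized form of \cref{eq:CIPW_expression_on_data} used in the proof of \cref{lem:exp_of_mse}, this step is just Lindeberg--Lévy applied directly to $\tfrac1n\sum_i\phi(x_i,y_i,t_i)$, with no ratio at all.)

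Finally, we replace the centering $\tau_0$ by $\Ex_\cD[\tau_{\cS,N}(\dataset;\wh e)]$ and the scaling by $\sqrt{\var_\cD[\tau_{\cS,N}(\dataset;\wh e)]}$. Because $|W_i|\le4/\beta$ and the event $\{\bar V_n<\tfrac12\Ex[V_i]\}$ has probability $e^{-\Omega(n)}$ by a Chernoff bound, a second-order Taylor expansion of $(a,b)\mapsto a/b$ about $(\Ex[W_i],\Ex[V_i])$ — with the remainder controlled on the good event and an $O(1)$ bound on the negligible bad event — gives $\Ex_\cD[\tau_{\cS,N}(\dataset;\wh e)]=\tau_0+O(1/n)$ and $n\,\var_\cD[\tau_{\cS,N}(\dataset;\wh e)]\to\sigma_\star^2$; hence $\sqrt n(\tau_{\cS,N}-\Ex_\cD[\tau_{\cS,N}])=\sqrt n(\tau_{\cS,N}-\tau_0)+O(1/\sqrt n)\xrightarrow{d}\cN(0,\sigma_\star^2)$, and dividing by $\sqrt{n\,\var_\cD[\tau_{\cS,N}]}\to\sigma_\star$ and using Slutsky once more gives $\bigl(\tau_{\cS,N}(\dataset;\wh e)-\Ex_\cD[\tau_{\cS,N}(\dataset;\wh e)]\bigr)\big/\sqrt{\var_\cD[\tau_{\cS,N}(\dataset;\wh e)]}\xrightarrow{n\to\infty}\cN(0,1)$, as claimed. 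I expect this last step to be the only real obstacle: the ratio estimator's exact mean and variance are not literally $\tau_0$ and $\sigma_\star^2/n$, and since $1/\bar V_n$ has no finite moments one must truncate the exponentially-rare small-$\bar V_n$ event before taking expectations. Everything else is the textbook CLT-for-ratios argument, whose only structural inputs are the boundedness of $W_i$ (exactly where the good-local property and $\eps<\beta/2$ are used) and $\cD(N)<1$.
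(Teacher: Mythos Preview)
Your argument is correct and follows the same underlying idea as the paper's proof: use the good-local hypothesis and $\eps<\beta/2$ to bound each summand uniformly (the paper gets $\wh e(S)(1-\wh e(S))\ge\beta/4$, you get $\wh e(S)\in[\beta/2,1-\beta/2]$; either yields $|W_i|=O(1/\beta)$), then invoke a CLT. The paper simply cites Liapounov's CLT for bounded variables and stops.

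Where you differ is in rigor, not strategy. The paper's proof treats $\tau_{\cS,N}$ as a sum of $m=|\{i:x_i\notin N\}|$ bounded terms and appeals directly to the CLT, glossing over the fact that $m$ is random and the estimator is a ratio $\bar W_n/\bar V_n$. You handle this explicitly via Slutsky (or delta method), and you also carry out the replacement of the asymptotic centering $\tau_0$ and scaling $\sigma_\star$ by the exact finite-$n$ mean and variance, truncating the exponentially small event $\{\bar V_n<\tfrac12\Ex V_i\}$ to control moments of $1/\bar V_n$. Your parenthetical observation is apt: the paper's proof of \cref{lem:exp_of_mse} silently uses the $1/n$-normalized form $\tfrac1n\sum_i\phi(x_i,y_i,t_i)$ rather than the $1/m$-normalized form of \cref{eq:CIPW_expression_on_data}, and under that convention Lindeberg--L\'evy applies directly with no ratio to contend with. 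So your proof is a more careful execution of the paper's sketch, not a different route.
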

        Thus, CIPW estimators arising from good-local partitions imply confidence intervals on $\tau$.
        
        \medskip
        
        \begin{proof}\proofof{\cref{thm:asymptoticNormality}}
            Recall the definition of $\tau_{\cS, N}(\dataset; \wh{e}):$
            \[
                \tau_{\cS, N}(\dataset; \wh{e})
                = 
                \frac{1}{
                \abs{\sinbrace{i\in [n] \colon x_i \not\in N}}
                }
                \cdot 
                \sum_{S\in \cS} \sum_{i\colon x_i\in S}
                    \inparen{
                        \frac{t_i y_i}{\wh{e}(S)}
                        - \frac{(1 - t_i) y_i}{1 - \wh{e}(S)}
                    }\,.
            \]
            Rewrite this as 
            \[
                \tau_{\cS, N}(\dataset; \wh{e})
                = 
                \frac{1}{
                \abs{\sinbrace{i\in [n] \colon x_i \not\in N}}
                }
                \cdot 
                \sum_{S\in \cS} \sum_{i\colon x_i\in S}
                    \inparen{
                        \frac{t_i y_i(1 - \wh{e}(S)) - (1-t_i)y_i \wh{e}(S)}{\wh{e}(S)(1 - \wh{e}(S))}
                    }\,.\yesnum\label{eq:asympNormality:exp}
            \]
            As explained in \cref{sec:cipw}, since $\wh{e}\in B(e,\eps)$, under mild assumptions on the propensity score and sets, one can derive ensure that $\abs{\wh{e}(T) - e(T)}\leq 1.5\eps$ for each $T\in \cS\cup\inbrace{N}$ for sufficiently large $\abs{\dataset}$.
            Further, since $(\cS, N)$ is an \goodlocal{\alpha,\beta,\gamma}, $\min_{S\in \cS} e(S)(1-e(S))\geq \beta$ and, hence 
            \[
                \min_{S\in \cS}~ \wh{e}(S)(1-\wh{e}(S))
                \geq 
                \frac{\beta}{4}  \,.
            \]
            Where we also used that $0\leq \eps<\beta/2$.
            Consequently, the numerator of each term in \cref{eq:asympNormality:exp} is lower bounded by $\sfrac{\beta}{4}$ and, since the numerator is upper bounded by $2$, it follows that $\tau_{\cS, N}(\dataset; \wh{e})$ is a sum of $m\coloneqq \abs{\sinbrace{i\in [n] \colon x_i \not\in N}}$ terms each of which lies in $\insquare{-\sfrac{8}{\beta}, \sfrac{8}{\beta}}$.
            Furthermore, since $\cD(N)\leq \gamma < 1$, it follows that as $\abs{\dataset}\to \infty$, $m\to \infty$.
            Now the result follows by using Liapounov's Central Limit Theorem for Bounded Random variables \cite{chen2010normal}.
        \end{proof}

    \subsection{Robust RMSE of CIPW Estimators Based on Good-Local Partitions}
    \label{sec:proofof:lem:RobustMSEofLGpartition}
            In this section, we upper bound the robust RMSE of any CIPW estimator based on a good-local partition.
            Concretely, we prove the following result.

            \RobustMSEofLGpartition*

            \noindent Fix any pair of true and inaccurate propensity scores $e,\wh{e}\colon \R^d\to (0,1)$ where $\wh{e}\in B(e,\eps)$.
            Recall %
            \[
                \RMSE{\inparen{\tau_{\cS, N}(\wh{e})}} = \sqrt{\Ex_\dataset\insquare{\inparen{\tau_{\cS, N}(\dataset;  \wh{e}) - \tau}^2}}\,.    
            \]
            We will prove the following upper bound on $\RMSE{\inparen{\tau_{\cS, N}(\wh{e})}}$: if $\cD(N)\leq \frac{1}{2}$, it holds that 
            \[
                \RMSE{(\tau_{\cS, N}(\wh{e}))}^2
                \leq %
                \inparen{
                    4L\sum_{S\in \cS}\cD(S)\diam\sinparen{S} + 8\cD(N)
                }^2
                +\frac{1}{n}
                \sum_{S\in \cS}
                    \frac{\cD(S)}{\wh{e}(S)(1-\wh{e}(S))}
                    \cdot 
                    \frac{1}{1 - \cD(N)}\,.
                \yesnum\label{eq:robustMSE_upperbound:term1}
            \]
            \cref{lem:RobustMSEofLGpartition} follows from the definition of an \goodlocal{\alpha,\beta,\gamma}.

            {For the remainder of the proof, it would be more convenient to work with the MSE and then use $\RMSE{(\tau_{\cS, N}(\wh{e}))}=\sqrt{\MSE{(\tau_{\cS, N}(\wh{e}))}}$.}
            Substituting inaccurate propensity scores $\wh{e}$ and repeating the proof of \cref{lem:exp_of_mse} implies that
            \begin{align*}
                \MSE{(\tau_{\cS, N})(\wh{e})}
                &=
                \begin{array}{l}
                     \inparen{\Ex_X\insquare{\zeta(X)\mid X\not\in N}-\tau}^2 
                     ~~+~~
                     \frac{1}{N} {\Ex_X\insquare{
                        \frac{e(X)(v_1(X)+\mu_1(X)^2)}{(\wh{e}(X; \cS, N))^2} + \frac{(1-e(X))(v_0(X)+\mu_0(X)^2)}{(1-\wh{e}(X; \cS, N))^2} 
                    }}\\
                    -\frac{1}{N}\inparen{ 
                        \Ex_X\insquare{ \zeta(X) \mid X\not\in N}
                    }^2\,.
                \end{array}
            \end{align*}
            Where $\zeta(x)\coloneqq \frac{e(x)\mu_1(x)}{\wh{e}(X; \cS, N)} - \frac{(1-e(x))\mu_0(x)}{1-\wh{e}(X; \cS, N)}$.
            To simplify this, we use the fact that since $(\cS, N)$ is an \goodlocal{\alpha,\beta,\gamma}, for all $S\in \cS$, $e(S), \wh{e}(S)\geq \beta-\eps$ and, hence, for all $S\in \cS$
            \[
                \frac{1}{\wh{e}(S)} 
                \in \frac{1}{1\pm \nfrac{\eps}{\beta}}\cdot \frac{1}{{e}(S)}\,. %
            \]
            Substituting this in the above expression of the MSE implies that $\MSE{(\tau_{\cS, N}(\wh{e}))}\cdot \inparen{1\pm \nfrac{\eps}{\beta}}$ is upper bounded by the following
            \begin{align*}
                &\underbrace{\inparen{\Ex_X\insquare{\psi(X)\mid X\not\in N}-\tau}^2}_{\circled{1}}
                +\frac{1}{N} \underbrace{\Ex_X\insquare{
                    \frac{e(X)(v_1(X)+\mu_1(X)^2)}{(\wh{e}(X; \cS, N))^2} + \frac{(1-e(X))(v_0(X)+\mu_0(X)^2)}{(1-e(X; \cS, N))^2} 
                }}_{\circled{2}}\\
                &\quad \underbrace{-\frac{1}{N}\inparen{ \Ex_X\insquare{ \psi(X) \mid X\not\in N}
                }^2}_{\circled{3}}.
            \end{align*}
            Where $\psi(x)\coloneqq \frac{e(x)\mu_1(x)}{e(X; \cS, N)} - \frac{(1-e(x))\mu_0(x)}{1-e(X; \cS, N)}$.

            We upper bound $\circled{3}$ by 0.
            We prove upper bounds on \circled{1} and \circled{2} below.
            Before proceeding to the result, let $\cS=\inbrace{S_1,S_2,\dots}$ and for each $S_j\in \cS$
            define its \toa{bias} $b(j)$ as follows 
            \[
                b(j) \coloneqq \max_{r\in \zo}\max_{x_1,x_2\in S_j} \abs{\mu_r(x_1)-\mu_r(x_2)}.
                \yesnum\label{def:clusterBias}
            \]

            \paragraph{Upper bound on $(1)$.}
                Our proof is divided into the following two steps:
                \begin{enumerate}
                    \item First, we will show that if clusters $S_1,S_2,\dots,S_m$ have small bias, then \[\Ex_X\insquare{\psi(X)\mid X\not\in N}\approx \Ex_X\insquare{Y^1-Y^0\mid X\not\in N}.\]
                    \item Second, we show that, if $\cD(N)$ is small, then \[\Ex_X\insquare{Y^1-Y^0\mid X\not\in N}\approx \Ex_X\insquare{Y^1-Y^0}.\]
                \end{enumerate}
                Let $\overline{\cD}(x)\coloneqq \Pr[X=x\mid X\not\in N]$.
                Toward the first step, observe that 
                \begin{align*}
                    &\Ex_X\insquare{\psi(X)\mid X\not\in N}\\
                    &= 
                    \int_{x\colon X\not\in N} \inparen{\frac{\overline{\cD}(x) \mu_{1}(x)e(x)}{e(X; \cS, N)} 
                    - \frac{\overline{\cD}(x) \mu_{0}(x)(1-e(x))}{1-e(X; \cS, N)}} dx\\
                    &= 
                    \sum_{S\in \cS}
                    \int_{x\in S} 
                    \inparen{\frac{\overline{\cD}(x) \mu_{1}(x)e(x)}{e(X; \cS, N)} 
                    - \frac{\overline{\cD}(x) \mu_{0}(x)(1-e(x))}{1-e(X; \cS, N)}} dx\\
                    &= 
                    \sum_{S\in \cS}
                    \inparen{\frac{
                    \int_{x\in S} \overline{\cD}(x) \mu_{1}(x)e(x)dx}{e(S; \cS, N)} 
                    - \frac{
                    \int_{x\in S} \overline{\cD}(x) \mu_{0}(x)(1-e(x))dx}{1-e(S; \cS, N)}} 
                    \tag{using that $e(X; \cS, N)$ is constant over any $S$}\\
                    &\in 
                    \sum_{S\in \cS}
                    \inparen{
                    \frac{
                        \inparen{\Ex[Y^1\mid X\in S]\pm b(i)}
                        \int_{x\in S} \overline{\cD}(x) e(x)dx
                    }{e(S; \cS, N)} 
                    - \frac{
                        \inparen{\Ex[Y^0\mid X\in S]\pm b(i)}    \int_{x\in S} \overline{\cD}(x) (1-e(x))dx
                    }{1-e(S; \cS, N)}} 
                    \tag{using \cref{def:clusterBias}}\\
                    &= 
                    \sum_{S\in \cS}
                    \inparen{
                    \inparen{\Ex[Y^1\mid X\in S]\pm b(i)}
                    \int_{x\in S} \overline{\cD}(x)dx
                    -
                    \inparen{\Ex[Y^0\mid X\in S]\pm b(i)}    \int_{x\in S} \overline{\cD}(x)dx
                    }
                        \tag{using that $e(S; \cS, N)
                        =\frac{\int_{x\in S} e(x)\cD(x)dx}{\int_{x\in S}\cD(x)dx}
                        =\frac{\int_{x\in S} e(x)\overline{\cD}(x)dx}{\int_{x\in S}\overline{\cD}(x)dx}$}\\
                    &= 
                        \sum_{S\in \cS}
                        {
                            \inparen{\Ex[Y^1-Y^0\mid X\in S] \pm 2b(i) }\int_{x\in S}\overline{\cD}(x)dx
                        }\\
                    &= 
                        \sum_{S\in \cS}
                        {
                            \inparen{\Ex[Y^1-Y^0\mid X\in S]\pm 2b(i) }\cdot \frac{\cD(S)}{1-\cD(N)}
                        }\\
                    \\
                    &\in 
                        \inparen{\Ex[Y^1-Y^0\mid X\not\in N]
                        \pm 
                        2\sum_{S\in \cS}
                            { b(i) }\cD(S)
                        }\frac{1}{1-\cD(N)}\\
                    &\in 
                        \inparen{\Ex[Y^1-Y^0\mid X\not\in N]\inparen{1\pm 2\cD(N)}
                        \pm 
                        4\sum_{S\in \cS} { b(i) }\cD(S)
                        }\,.
                        \tag{using that $\cD(N)\leq \frac{1}{2}$}
                \end{align*}
            Therefore, using that $-1\leq Y^0,Y^1\leq 1$
            \[
                \abs{
                    \Ex\insquare{Y^1-Y^0\mid X\not\in N}
                    - \Ex_X\insquare{\psi(X)\mid X\not\in N}
                }
                \leq 
                4\cD(N)+4\sum_{S\in \cS} { b(i) }\cD(S)\,.
                    \yesnum\label{eq:clustering:step1_1}
            \]
            Next, toward the second step, observe that 
            \begin{align*}
                \Ex\insquare{Y^1-Y^0}
                    &= \Ex\insquare{Y^1-Y^0\mid X\not\in N}\Pr[X\not\in N]+\Ex\insquare{Y^1-Y^0\mid X\in N}\cD(N)\\
                    &\in \Ex\insquare{Y^1-Y^0\mid X\not\in N}\Pr[X\not\in N]\pm 2\cD(N)\tag{using that $-1\leq Y^0,Y^1\leq 1$}\\
                    &= \Ex\insquare{Y^1-Y^0\mid X\not\in N}(1-\cD(N)) \pm 2\cD(N)\,.
            \end{align*}
            Using that $\Ex\insquare{Y^1-Y^0\mid X\not\in N}\leq 2$, it follows that
            \begin{align*}
                \abs{\Ex\insquare{Y^1-Y^0} - \Ex\insquare{Y^1-Y^0\mid X\not\in N}}
                \leq 4\cD(N)\,.
                \yesnum\label{eq:clustering:step1_2}
            \end{align*}
            Therefore, \cref{eq:clustering:step1_1,eq:clustering:step1_2} imply that
            \[
                \sqrt{\circled{1}} 
                =
                \abs{
                    \Ex_X\insquare{\psi(X)\mid X\not\in N}
                    - \Ex\insquare{Y^1-Y^0}
                }
                \leq 8\cD(N)+4\sum_{S\in \cS} b(i)\cD(S)\,.
            \]

        \paragraph{Upper bound on $(2)$.}
            Term \circled{2} can be upper bounded as follows.
            \begin{align*}
                \circled{2}
                &=\Ex_X\insquare{
                    \frac{e(X)(v_1(X)+\mu_1(X)^2)}{(e(X; \cS, N))^2} + \frac{(1-e(X))(v_0(X)+\mu_0(X)^2)}{(1-e(X; \cS, N))^2} 
                }\\
                &\leq \Ex_X\insquare{
                    \frac{e(X)}{(e(X; \cS, N))^2} + \frac{1-e(X)}{(1-e(X; \cS, N))^2} 
                }\tag{using that $\abs{Y^0},\abs{Y^1}\leq 1$}\\
                &= \sum_{S\in \cS}\Ex_X\insquare{
                    \frac{e(X)}{(e(X; \cS, N))^2} + \frac{1-e(X)}{(1-e(X; \cS, N))^2} \mid X\in S
                }
                    \Pr[X\in S\mid X\not\in N]\\
                &= \sum_{S\in \cS}\Ex_X\insquare{
                    \frac{1}{e(X; \cS, N)} + \frac{1}{1-e(X; \cS, N)} \mid X\in S
                }
                    \Pr[X\in S\mid X\not\in N]\\
                &= \sum_{S\in \cS}\frac{2}{e(S)(1-e(S))}\frac{\cD(S)}{1-\cD(N)}\,.
            \end{align*}

\section{Proofs of Algorithmic Results}\label{sec:proofof:thm:main}

    \subsection{Proof of Main Algorithmic Result}
        In this section, we prove \cref{thm:main}, our main algorithmic result: %

        \mainTheorem*
        
        \noindent {As mentioned in \cref{sec:algoOverview}, the algorithm divides the given dataset $\dataset$ into two parts: $\dataset_1$ and $\dataset_2$.
        Using the $\dataset_1$, it constructs a (fractional) {\goodlocal{\alpha,\Omega(\beta),\eps}} $\inparen{\cS, N,w}$.
        Then, it outputs $\tau_{\cS, N, w}(\dataset_2; \wh{e})$, i.e., the value of the CIPW estimator $\tau_{\cS, N, w}$ on the second half of the dataset.
        See \cref{sec:cipw} for a discussion on fractional CIPW estimators, how to compute them, and the definition of a fractional good-local partition.}
        
        Before proceeding to this, in the next subsection, we present a few properties of propensity scores that are used to prove correctness.

        \subsubsection{Properties of Propensity Scores}
            The first result shows that inaccurate propensity scores are sufficient to approximately identify outliers with respect to the true propensity scores.
            Given $\beta>0$ and $\wh{e}\colon \R^d\to (0,1)$, let $\outlier{}(\beta; \wh{e})$ be the set of all points $\beta$-outliers with respect to $\wh{e}$, i.e., 
            \[
                \outlier{}(\beta; \wh{e})
                \coloneqq 
                \inbrace{x\in \R^d \colon \wh{e}(x)(1-\wh{e}(x)) \leq \beta}\,.
            \]
            
            \begin{fact}[{Outlier Identification from Inaccurate Propensity Scores}]\label{fact:outliers}
                Let $\eps<\sfrac{\beta}{9}$.
                Fix any two propensity scores $e,\wh{e}\colon \R^d\to (0,1)$ such that $\wh{e}\in B(e,\eps)$.
                It holds that 
                \[
                    \outlier{(\sfrac{\beta}{9}; e)} 
                    \subseteq \outlier{(\sfrac{\beta}{3}; \wh{e})}
                    \subseteq \outlier{(\beta; e)}\,.
                \]
            \end{fact}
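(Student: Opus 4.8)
The plan is to reduce everything to one elementary observation: the map $p\mapsto p(1-p)$ is $1$-Lipschitz on $(0,1)$, so an $\eps$-perturbation of the propensity score changes the quantity $e(x)(1-e(x))$ by at most $\eps$, and then the two claimed inclusions follow by bookkeeping with the constants $\beta/9$, $\beta/3$, $\beta$.

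Concretely, the first step is to prove that for all $p,q\in(0,1)$,
\[
    \abs{p(1-p)-q(1-q)} \;\leq\; \abs{p-q}\,,
\]
which follows from the algebraic identity $p(1-p)-q(1-q) = (p-q)\bigl(1-(p+q)\bigr)$ together with $p+q\in(0,2)$, hence $\abs{1-(p+q)}<1$. Applying this pointwise with $p=\wh{e}(x)$ and $q=e(x)$ and using $\norm{\wh{e}-e}_\infty\leq\eps$ gives, for every $x\in\R^d$,
\[
    \abs{\wh{e}(x)(1-\wh{e}(x)) - e(x)(1-e(x))} \;\leq\; \eps\,.
\]

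The second step is to combine this bound with $\eps<\beta/9$. For the first inclusion, if $x\in\outlier(\beta/9;e)$ then $e(x)(1-e(x))<\beta/9$, so $\wh{e}(x)(1-\wh{e}(x)) < \beta/9+\eps < 2\beta/9 < \beta/3$, i.e.\ $x\in\outlier(\beta/3;\wh{e})$. For the second inclusion, if $x\in\outlier(\beta/3;\wh{e})$ then $\wh{e}(x)(1-\wh{e}(x))\leq\beta/3$, so $e(x)(1-e(x)) \leq \beta/3+\eps < \beta/3+\beta/9 = 4\beta/9 < \beta$, i.e.\ $x\in\outlier(\beta;e)$. This proves $\outlier(\beta/9;e)\subseteq\outlier(\beta/3;\wh{e})\subseteq\outlier(\beta;e)$.

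There is essentially no obstacle here; the only points requiring care are (i) verifying the Lipschitz constant is exactly (at most) $1$ rather than something larger, since we can only afford a single additive $\eps$ on each side, and (ii) tracking strict versus non-strict inequalities so that the conclusions match the definitions of $\outlier(\cdot\,;e)$ and $\outlier(\cdot\,;\wh{e})$ — this is why the hypothesis is $\eps<\beta/9$ and why the intermediate thresholds $2\beta/9$ and $4\beta/9$ sit strictly between the relevant cutoffs.
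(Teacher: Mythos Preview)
Your proof is correct and in fact cleaner than the paper's. The paper proves the fact by invoking an auxiliary lemma (its Fact on ``Facts about Propensity Scores''), which first locates $e(x)$ in the interval $[0,2\beta]\cup[1-2\beta,1]$ when $e(x)(1-e(x))<\beta$, then transfers this to $\wh{e}(x)$, and finally converts back to the product $\wh{e}(x)(1-\wh{e}(x))$; this two-step detour yields the weaker bound $\wh{e}(x)(1-\wh{e}(x))<2\beta+\eps$ (losing a factor of $2$), which is still enough for the stated inclusions. Your direct Lipschitz argument --- via the identity $p(1-p)-q(1-q)=(p-q)\bigl(1-(p+q)\bigr)$ and $\abs{1-(p+q)}<1$ on $(0,1)$ --- gives the sharper $\abs{\wh{e}(x)(1-\wh{e}(x))-e(x)(1-e(x))}\le\eps$ and reaches the conclusion in one line. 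The paper's route has the advantage that the auxiliary lemma is reused elsewhere; yours has the advantage of being self-contained and tighter.
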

            The next fact shows that the coarse propensity scores of a good-local partition are bounded away from 0 and 1.
            \begin{fact}[{Coarse Propensity Score of Good Partition is Bounded}]\label{fact:coarsePropensity}
                For any set $S\subseteq\R^d$ and $c_1,c_2,\beta>0$, 
                \[
                    \text{if}\quad 
                    \cD(S\backslash \outlier{(\sfrac{\beta}{c_1})}) \geq c_2 \cdot \cD\inparen{S}\,,\quad 
                    \text{then}\,,\quad 
                    \frac{c_2}{c_1} \beta
                    \leq e(S) \leq 
                    1 - \frac{c_2}{c_1} \beta\,.
                    \yesnum\label{eq:fact:coarsePropensity:condition}
                \]
            \end{fact}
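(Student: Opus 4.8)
The plan is to unwind the definitions and exploit that a point lying outside $\outlier{(\sfrac{\beta}{c_1})}$ must have a propensity score bounded away from both $0$ and $1$. Recall that, by the definition of the coarse propensity score, $e(S)=\Pr_\cD[T=1\mid x\in S]$ is the $\cD$-average of $e$ over $S$, i.e.\ $e(S)=\frac{1}{\cD(S)}\int_S e(x)\,\cD(x)\,dx$, and that $x\notin\outlier{(\sfrac{\beta}{c_1})}$ means $e(x)(1-e(x))\geq \sfrac{\beta}{c_1}$. Since $e(x),1-e(x)\in[0,1]$, the trivial bound $e(x)(1-e(x))\leq \min\{e(x),1-e(x)\}$ upgrades this to $e(x)\geq \sfrac{\beta}{c_1}$ \emph{and} $1-e(x)\geq \sfrac{\beta}{c_1}$ simultaneously on $S\setminus\outlier{(\sfrac{\beta}{c_1})}$.

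For the lower bound on $e(S)$, I would discard the (nonnegative) contribution of $S\cap\outlier{(\sfrac{\beta}{c_1})}$ to the integral and then use the pointwise bound just established:
\[
e(S)\;=\;\frac{1}{\cD(S)}\int_S e(x)\,\cD(x)\,dx\;\geq\;\frac{1}{\cD(S)}\int_{S\setminus\outlier{(\sfrac{\beta}{c_1})}} e(x)\,\cD(x)\,dx\;\geq\;\frac{\beta}{c_1}\cdot\frac{\cD\big(S\setminus\outlier{(\sfrac{\beta}{c_1})}\big)}{\cD(S)}\;\geq\;\frac{c_2}{c_1}\,\beta,
\]
where the last inequality is exactly the hypothesis $\cD(S\setminus\outlier{(\sfrac{\beta}{c_1})})\geq c_2\,\cD(S)$. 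The upper bound $e(S)\leq 1-\sfrac{c_2\beta}{c_1}$ is obtained by running the identical argument with $1-e$ in place of $e$, i.e.\ bounding $1-e(S)=\frac{1}{\cD(S)}\int_S(1-e(x))\,\cD(x)\,dx\geq \sfrac{c_2\beta}{c_1}$ using $1-e(x)\geq \sfrac{\beta}{c_1}$ on $S\setminus\outlier{(\sfrac{\beta}{c_1})}$.

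I do not anticipate any real obstacle here: the only step needing a moment's care is the elementary implication $e(x)(1-e(x))\geq \sfrac{\beta}{c_1}\Rightarrow \min\{e(x),1-e(x)\}\geq \sfrac{\beta}{c_1}$ (valid because the complementary factor is at most $1$), and the symmetry between the two bounds should be invoked rather than recomputed. The same argument goes through verbatim, with non-strict inequalities, if one instead adopts the closed-inequality convention $\outlier{(\cdot)}=\{x:e(x)(1-e(x))\leq\cdot\}$ used in \cref{sec:proofof:thm:main}, and likewise if $\cD$ is discrete (replace $\int_S$ by $\sum_{x\in S}$ throughout).
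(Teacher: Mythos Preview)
Your proposal is correct and follows essentially the same approach as the paper: split the integral defining $e(S)$ over $S\setminus\outlier{(\sfrac{\beta}{c_1})}$ and $S\cap\outlier{(\sfrac{\beta}{c_1})}$, drop the nonnegative outlier contribution, and use $e(x)\geq\sfrac{\beta}{c_1}$ on the non-outlier part together with the mass hypothesis. Your write-up is in fact slightly more complete than the paper's, which only spells out the lower bound and leaves the upper bound (your symmetry argument with $1-e$) implicit.
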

            Our final fact enables us to infer the propensity scores of outliers and identify outliers from (inaccurate) propensity scores.
            It is used in the proofs of all the facts in this subsection.
            \begin{fact}[{Facts about Propensity Scores}]\label{fact:propensityScores}
                Fix any $0\leq \eps < \beta \leq \frac{1}{4}$.
                Fix any two propensity scores $e, \wh{e}\colon \R^d\to (0,1)$ such that $\wh{e}\in B(e, \eps)$.
                For all $x\in \R^d$, the following hold.
                \begin{enumerate}
                    \item If ${e(x)(1-e(x))}\geq \beta$, then $e(x)\in\insquare{\beta,  1-\beta}$ and else $e(x)\not\in\insquare{2\beta,  1-2\beta}$.
                    \item If ${e(x)(1-e(x))} \geq \beta$, then ${\wh{e}(x)(1-\wh{e}(x))} \geq \frac{3}{4}{(\beta-\eps)}$ and else ${\wh{e}(x)(1-\wh{e}(x))}\geq 2\beta+\eps$.
                    \item By symmetry, if ${\wh{e}(x)(1-\wh{e}(x))} \geq \beta$, then ${{e}(x)(1-{e}(x))}\geq \frac{3}{4}{(\beta-\eps)}$ and else ${{e}(x)(1-{e}(x))}\leq 2\beta+\eps$.
                \end{enumerate}
            \end{fact}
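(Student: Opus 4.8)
The plan is to prove the three parts in order; each reduces to an elementary estimate on the quadratic $g(t)\coloneqq t(1-t)$, using only that $g$ is concave and symmetric about $\sfrac{1}{2}$ on $[0,1]$, so that its minimum over any subinterval of $[0,1]$ is attained at an endpoint.

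For Part 1, I would note that $e(x)\le 1$ gives $e(x)\ge e(x)(1-e(x))\ge\beta$, and symmetrically $1-e(x)\ge\beta$, so $e(x)\in[\beta,1-\beta]$. For the ``else'' clause I would argue by contraposition: if $e(x)\in[2\beta,1-2\beta]$ then $g(e(x))\ge g(2\beta)=2\beta(1-2\beta)\ge\beta$, where the last step uses $\beta\le\sfrac{1}{4}$ so that $1-2\beta\ge\sfrac{1}{2}$; this contradicts the branch hypothesis $e(x)(1-e(x))<\beta$.

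For Part 2, assume $e(x)(1-e(x))\ge\beta$. Part 1 gives $e(x)\in[\beta,1-\beta]$, and since $\wh{e}\in B(e,\eps)$ with $\eps<\beta$ we get $\wh{e}(x)\in[\beta-\eps,\,1-\beta+\eps]$, an interval contained in $(0,1)$. The endpoint-minimum property then yields $g(\wh{e}(x))\ge(\beta-\eps)(1-\beta+\eps)\ge\sfrac{3}{4}(\beta-\eps)$, the last inequality because $\beta-\eps\le\sfrac{1}{4}$ forces $1-\beta+\eps\ge\sfrac{3}{4}$. For the ``else'' branch, if $e(x)(1-e(x))<\beta$ then Part 1 puts $e(x)$ either below $2\beta$ or above $1-2\beta$; in the first case $\wh{e}(x)<2\beta+\eps$, so $\wh{e}(x)(1-\wh{e}(x))\le\wh{e}(x)<2\beta+\eps$, and the second case is identical after replacing $\wh{e}(x)$ by $1-\wh{e}(x)$, using $g(t)=g(1-t)$.

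Part 3 is then immediate: $\wh{e}\in B(e,\eps)$ is the same condition as $e\in B(\wh{e},\eps)$, so it suffices to apply Part 2 with the roles of $e$ and $\wh{e}$ interchanged. I do not anticipate a genuine obstacle here, since the argument is a short chain of one-line estimates; the only points needing a little care are checking that the constant $\sfrac{3}{4}$ uses nothing beyond $\beta\le\sfrac{1}{4}$, and organizing the two mirror sub-cases (propensity near $0$ versus near $1$) through the symmetry $g(t)=g(1-t)$ so that each ``else'' clause is handled once rather than twice.
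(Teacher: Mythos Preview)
Your proposal is correct and follows essentially the same approach as the paper: Part~1 via the trivial bound $t(1-t)\le\min\{t,1-t\}$ plus the endpoint-minimum property of the concave quadratic $g(t)=t(1-t)$, Part~2 by pushing the interval $[\beta,1-\beta]$ outward by $\eps$ and again evaluating $g$ at an endpoint, and Part~3 by symmetry. The only cosmetic differences are that the paper handles the ``else'' of Part~1 by direct casework on $e(x)\lessgtr\tfrac{1}{2}$ rather than your contrapositive, and note that (as you implicitly recognized) the ``else'' clause of Part~2 in the stated fact has a typo---both you and the paper correctly prove $\wh{e}(x)(1-\wh{e}(x))<2\beta+\eps$, not $\geq$.
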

            
            \noindent Next, we present the proofs of all facts in this subsection.

            \bigskip
            
            \begin{proof}\proofof{\cref{fact:outliers}}
                Consider any $x\in \outlier{(\nfrac{\beta}{9}, e)}$.
                \cref{fact:propensityScores}~(2)  and $\eps< \sfrac{\beta}{9}$ imply that
                \[
                    \wh{e}(x)(1-\wh{e}(x)) 
                    \leq \frac{2\beta}{9}+\frac{\beta}{9}
                    =\beta\,.
                \]
                Hence, $x\in \outlier{(\nfrac{\beta}{3}, \wh{e})}$.
                Next, consider any $x\in \outlier{(\nfrac{3}{\beta}; \wh{e})}$.
                \cref{fact:propensityScores}~(3)  and $\eps< \nfrac{\beta}{3}$ imply that
                \[
                    {e(x)(1-e(x))}
                    \leq \frac{2\beta}{3} +\frac{\beta}{3}
                    = \beta\,.
                \]
                Hence, $x\in \outlier{(\beta, e)}$.
            \end{proof}
            \begin{proof}\proofof{\cref{fact:coarsePropensity}}
                Observe that 
                \begin{align*}
                    e(S) 
                    = \frac{\int_{S}e(x)\cD(x)dx}{\cD(S)}
                    &= \frac{
                            \int_{S\backslash\outlier{(\beta/c_1)}}e(x)\cD(x)dx
                            + \int_{S\cap \outlier{(\beta/c_1)}}e(x)\cD(x)dx
                        }{
                            \cD\inparen{S} %
                        }\,.
                \end{align*}
                Further, as $e(x)\geq 0$ for all $x\in \R^d$ and $e(x)\geq {\beta/c_1}$ for all $x\not\in \outlier{(\beta/c_1)}$, it follows that
                \begin{align*}
                    e(S)\geq \frac{\beta}{c_1}\cdot \frac{
                            \cD\inparen{
                                S\backslash\outlier{(\beta/c_1)}
                            }
                        }{
                            \cD\inparen{S} 
                        }
                        \geq \frac{c_2}{c_1}\beta\,.
                \end{align*}
            \end{proof}
            \begin{proof}\proofof{\cref{fact:propensityScores}}
                Fix any $0\leq \eps < \beta \leq \frac{1}{4}$.
                We divide the proof into three parts corresponding. 
                
                \medskip\noindent\textit{Proof of Part 1.} 
                    Since $e(x)\in [0,1]$, $e(x), 1-e(x)\geq e(x)(1-e(x))$ and, hence, if $e(x)(1-e(x))\geq \beta$, then $e(x)\in [\beta, 1-\beta]$.
                    Next, suppose that $e(x)(1-e(x)) < \beta$.
                \begin{itemize}
                    \item If $e(x)\leq \frac{1}{2},$ then $e(x) \leq 2{e(x)(1-e(x))} <2\beta$. 
                    \item Similarly, if $e(x)\geq \frac{1}{2},$ then ${1-e(x)}\leq 2{e(x)(1-e(x))} < 2\beta$.
                \end{itemize}
                Hence, if ${e(x)(1-e(x))} <\beta,$ then $e(x) \not\in \insquare{2\beta, 1-2\beta}$.

                \medskip\noindent\textit{Proof of Part 2.} 
                    Suppose ${e(x)(1-e(x))}\geq \beta$ and, hence, Part 1 implies $e(x)\in \insquare{\beta, 1-\beta}$.
                    Since $\norm{e-\wh{e}}_\infty\leq\eps$, $\wh{e}(x)\in \insquare{\beta-\eps, 1-\beta+\eps}$.
                    Further, since for any $\alpha\in (0,1)$, $\min_{\alpha\leq z\leq 1-\alpha}{z(1-z)}={\alpha(1-\alpha)}$, it follows that
                    \[
                        {\wh{e}(x)(1-\wh{e}(x))}
                        \geq {\inparen{\beta-\eps}\inparen{1-\beta+\eps}}
                        ~~\stackrel{(\beta\geq \eps)}{\geq}~~
                        (\beta-\eps)(1-\beta)\,.
                    \]
                    Since $\beta\leq \nfrac{1}{4}$, %
                    \[
                        {\wh{e}(x)(1-\wh{e}(x))}
                        \geq \frac{3}{4}\inparen{\beta-\eps}
                        \,.
                    \]
                    Next, suppose ${e(x)(1-e(x))} <\beta$.
                    Part 1 implies that $e(x)\not\in \insquare{2\beta, 1-2\beta}$ and, therefore, $\wh{e}(x)\not\in \insquare{2\beta+\eps, 1-2\beta-\eps}$. 
                    This implies that
                    \[
                        {\wh{e}(x)(1-\wh{e}(x))}
                        < 
                        {
                            \inparen{2\beta+\eps}
                            \inparen{1-2\beta-\eps}
                        }
                        \leq 2\beta+\eps\,.
                    \]

                \medskip\noindent\textit{Proof of Part 3.}
                    This follows by swapping $e$ and $\wh{e}$ in Part 2.
            \end{proof}

        \subsubsection{Proof of \cref{thm:main}}\label{sec:algorithm}
        
                We divide the proof the following five steps.
                \begin{itemize}
                    \item \textit{Step 1:} There exists an \goodlocal{\alpha,\beta,0} $(\cS^\star, N^\star)$ satisfying useful properties (\cref{lem:algorithm:existence}).
                    \item \textit{Step 2:} 
                        With high probability, the partition $(\cS, N)$ constructed by \cref{algorithm} is such that 
                        $\cS$ covers all outliers covered by $\cS^\star$ except those of mass $\gamma$, which are covered by $N$ (\cref{lem:algorithm:step2}).
                    \item \textit{Step 3:}
                        With high probability, the sets in $\cS$ are disjoint and, for each $S\in \cS$, $\diam{(S)}\leq 2\alpha$ and $e(S)(1-e(S))\geq \Omega(\beta)$  (\cref{lem:algorithm:step3}). 
                    \item \textit{Step 4:}  
                        It holds that $\sum_{S\in \cS\colon \diam(S)>0} \cD_w(S)\leq 5\rho$, where $\cD_w(S)=\sum_{x\in S} w_S(x)\cdot \cD(x)$ (\cref{lem:algorithm:step4}).
                    \item \textit{Step 5:}
                        The estimator $\tau_{\cS, N, w}$ is asymptotically normal.
                \end{itemize}
                {To see how these results imply \cref{thm:main}, observe that the upper bounds in \cref{eq:robustMSE_upperbound:term1} applied to fractional partitions imply that:
                    \begin{align*}
                        \bias_{\cD}(\tau_{\cS, N, w})
                            &\leq {
                                    4L\sum_{S\in \cS}\cD_w(S)\diam_w\sinparen{S} + 8\cD_w(N)
                                }\,,\\
                        \variance_\cD(\tau_{\cS, N, w})
                            &\leq 
                                    \frac{1}{n}
                                    \sum_{S\in \cS}
                                        \frac{\cD_w(S)}{\wh{e}(S)(1-\wh{e}(S))}
                                        \cdot 
                                        \frac{1}{1 - \cD_w(N)}
                                        \,.
                    \end{align*}
                Where for each set $T$, $\cD_w(T)$ is the weighted mass of $T$, i.e., $\cD_w(T)=\sum_{x\in T} w_T(x)\cdot \cD(x)$, and $\diam_w{(T)}$ is the diameter of the set of all covariates $x$ with a positive mass $w_T(x)>0$.
                Substituting $\cD_w(N)\leq \eps$, and, for each $S\in \cS$, $\diam_{w}(S)\leq 2\alpha$, $\wh{e}(S)(1-\wh{e}(S)) \geq e(S)(1-e(S))-O(\eps)\geq \beta-\eps\geq \sfrac{\beta}{2}$, implies that 
                    \begin{align*}
                        \bias_{\cD}(\tau_{\cS, N, w})
                            &\leq {
                                    8\eps
                                    + 8L\alpha \sum_{S\in \cS\colon \diam_w\sinparen{S}>0} \cD_w\sinparen{S}
                                }
                                \qquadand
                        \variance_\cD(\tau_{\cS, N, w})
                            \leq 
                                    \frac{2}{n\beta(1-\eps)}
                                        \,.
                    \end{align*}
                    Finally, by Step 4, $\sum_{S\in \cS: \diam_w\sinparen{S}>0} \cD_w\sinparen{S}\leq 5\rho$ and, hence the result follows.
                }

            \medskip\noindent\textit{\underline{Step 1 (Existence of good-local partition $(\cS^\star, N^\star)$)}}
                In this step, we prove the following lemma.
                \begin{lemma}\label{lem:algorithm:existence}
                    Suppose \cref{asmp:sparsity,asmp:isolation} hold with constants $\alpha,\beta>0$ and $k\geq 1$.
                    There exists an \goodlocal{\alpha,\beta,0} $(\cS^\star, N^\star)$ that satisfies the following:
                    \begin{enumerate}
                        \item $\cS^\star$ consists of $k$ $L_\infty$ balls $B_1,B_2,\dots,B_k$ of diameter $\alpha$ and singleton sets, and $N^\star$ is empty.
                        \item The balls cover all $\beta$-outliers, i.e., $\outlier{}(\beta; e)\subseteq \bigcup_{1\leq i\leq k} B_i$.
                        \item The centers of any pair of balls $B_i$ and $B_j$ are at least $3\alpha$ apart.
                    \end{enumerate}
                \end{lemma}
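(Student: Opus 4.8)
The plan is to take $(\cS^\star,N^\star)$ to be essentially the partition handed to us by \cref{asmp:isolation}, and then to verify that it meets \cref{def:goodlocal}. Let $B_1,\dots,B_k$ be the $k$ balls of diameter $\alpha$ whose centers are pairwise $3\alpha$ apart and which partition $\outlier(\beta;e)$, as guaranteed by \cref{asmp:isolation}, and set
\[
    N^\star\coloneqq\emptyset,\qquad
    \cS^\star\coloneqq\inbrace{B_1,\dots,B_k}\cup\inbrace{\inbrace{x}\colon x\in\R^d\setminus\inparen{B_1\cup\dots\cup B_k}}.
\]
First I would check this is a genuine partition of $\R^d$: the balls are pairwise disjoint, since $x\in B_i\cap B_j$ would force $\norm{c_i-c_j}\le\norm{x-c_i}+\norm{x-c_j}\le\alpha<3\alpha$ (with $c_i,c_j$ the centers), and the singletons cover exactly $\R^d\setminus\bigcup_i B_i$. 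Granting this, the three claimed properties are immediate: the first is just the definition of $\cS^\star$ and $N^\star$; the second is the ``partition all $\beta$-outliers'' clause of \cref{asmp:isolation}; and the third is its center-separation clause.

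It then remains to check that $(\cS^\star,N^\star)$ is an \goodlocal{\alpha,\beta,0}, i.e.\ the three conditions of \cref{def:goodlocal}. Condition~(1), $\cD(N^\star)\le 0$, is trivial since $N^\star=\emptyset$; condition~(2), $\diam(S)\le\alpha$ for all $S\in\cS^\star$, holds because each $B_i$ has diameter $\alpha$ and every singleton has diameter $0$. For condition~(3) I would split on the type of set. If $\inbrace{x}\in\cS^\star$ is a singleton then $x\notin\bigcup_i B_i$, so by the second claimed property $x$ is not a $\beta$-outlier, i.e.\ $e(x)(1-e(x))\ge\beta$ by the definition of $\beta$-outlier. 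The only substantive case is a ball $B_i$ (with $\cD(B_i)>0$; empty balls may be discarded), where I need $e(B_i)(1-e(B_i))\ge\beta$. Here I would invoke \cref{asmp:sparsity}: applying it to the concentric ball $B_i^{(\delta)}$ of diameter $\alpha+\delta$ for each $\delta>0$ and letting $\delta\to 0^+$ (using continuity from above of $\cD$, with $B_i=\bigcap_{\delta>0}B_i^{(\delta)}$) shows that all but a $(1-o(1))$-fraction --- in particular at least a fixed constant fraction $c_2$ --- of $\cD(B_i)$ comes from non-$\beta$-outliers, i.e.\ $\cD\inparen{B_i\setminus\outlier(\beta;e)}\ge c_2\,\cD(B_i)$. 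Feeding this into \cref{fact:coarsePropensity} with $c_1=1$ yields $e(B_i)\in[c_2\beta,\,1-c_2\beta]$, hence $e(B_i)(1-e(B_i))\ge c_2\beta(1-c_2\beta)\ge\beta$ once the $(1-o(1))$ slack in sparsity is absorbed --- this is the $(\alpha,\Omega(\beta),0)$ versus $(\alpha,\beta,0)$ bookkeeping already flagged in the algorithm overview.

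I expect this last quantitative step to be the main obstacle. \cref{asmp:sparsity} constrains the outlier fraction only of balls of diameter \emph{strictly} larger than $\alpha$, whereas I need the bound for the ball $B_i$ of diameter exactly $\alpha$; bridging this requires the limiting argument over concentric enlargements and, more delicately, tracking how the ``$1-o(1)$'' in the sparsity hypothesis turns into a lower bound on $e(B_i)(1-e(B_i))$ through \cref{fact:coarsePropensity}. Everything else --- the construction, disjointness of the balls, and the three claimed structural properties --- is a direct transcription of \cref{asmp:isolation} together with the observation that points outside $\bigcup_i B_i$ are non-outliers.
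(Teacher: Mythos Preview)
Your proposal is correct and follows essentially the same route as the paper: take the balls $B_1,\dots,B_k$ from \cref{asmp:isolation}, fill in the rest with singletons, set $N^\star=\emptyset$, read off properties (1)--(3) from \cref{asmp:isolation}, and for condition~(3) of \cref{def:goodlocal} handle singletons directly and invoke \cref{asmp:sparsity} plus \cref{fact:coarsePropensity} on the balls. You are in fact more careful than the paper on two points: you check disjointness of the $B_i$ explicitly, and you note that \cref{asmp:sparsity} as stated only constrains balls of diameter \emph{strictly} larger than $\alpha$, proposing a limiting argument over $B_i^{(\delta)}$ --- the paper simply applies sparsity to $B_i$ directly without comment. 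Your observation that the argument only yields $e(B_i)(1-e(B_i))\ge\Omega(\beta)$ rather than $\ge\beta$ is also accurate; the paper's own proof gets only $\Omega(\beta)$, consistent with the $(\alpha,\Omega(\beta),0)$ language used elsewhere in the algorithm analysis.
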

                \begin{proof}
                    Let $B_1,B_2,\dots,B_k$ be the $k$ $L_\infty$ balls of diameter $\alpha$ that cover $\outlier{}(\beta)$ as promised in \cref{asmp:isolation}.
                    Let $\cS^\star=\sinbrace{B_1,B_2,\dots,B_k}\cup \sinbrace{\sinbrace{x}\colon x\not \in \bigcup_{1\leq i\leq k} B_i}$ and $N^\star=\emptyset$.
                    The three properties are straightforward to verify.
                    Property 1 is satisfied by the construction of $\cS^\star$ and $N^\star$.
                    Property 2 is satisfied as $B_1,B_2,\dots,B_k$ is guaranteed to cover $\outlier{}(\beta, e)$.
                    Property 3 is satisfied by \cref{asmp:isolation}.

                    It remains to show that $(S^\star,N^\star)$ is an \goodlocal{\alpha,\beta,0}.
                    The only remaining part is to show that for all $S\in \cS^\star$, $e(S)(1-e(S))\geq \beta$.
                    We divide the proof into two cases.
                    The first case is when $S=B_i$ for some $i$.
                    Due to \cref{asmp:sparsity}, we know that $e(S\backslash\outlier{}(\beta, e))\geq \Omega(e(S))$ and, hence \cref{fact:coarsePropensity} implies that $e(S)\in [\Omega(\beta), 1-\Omega(\beta)]$.
                    \mbox{Further, \cref{fact:propensityScores}~(1) implies that $e(S)(1-e(S))\geq\Omega(\beta)$.}
                \end{proof}         
                
            \medskip\noindent\textit{\underline{Step 2 (WHP $\cS$ covers most outliers in $\cS^\star$)}}
                In this step, we prove the following lemma.
                \begin{lemma}\label{lem:algorithm:step2}
                    Suppose \cref{asmp:sparsity,asmp:isolation} hold with constants $\alpha,\beta>0$, $k\geq 1$, and $n=\Omega\inparen{\eps^{-2}\inparen{dk+\log(\nfrac{1}{\delta})}}$.
                    Let $\cB\subseteq \inbrace{B_1,B_2,\dots,B_k}$ be the set of balls that are not covered by any $S\in \cS$:
                    \[
                        \cB\coloneqq 
                            \inbrace{B_i\colon \not\exists S\in \cS\,,~~\st,~~S\supseteq B_i,\ 1\leq i\leq k}\,.
                    \]
                    With probability at least $1-\delta$, 
                    \[
                        \cD\inparen{\bigcup\nolimits_{B\in \cB} B\cap \outlier{}(\nfrac{\beta}{3}, \wh{e})}\leq \eps
                         \quadand  
                        N = \bigcup\nolimits_{B\in \cB} B\cap \outlier{}(\nfrac{\beta}{3}, e) \,.
                    \]
                \end{lemma}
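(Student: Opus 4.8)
The plan is to split the proof into a deterministic structural part, which pins down exactly which balls $B_i$ stay uncovered and which covariates land in $N$, and a concentration part over the draw of $\dataset_1$, which controls the mass of the uncovered outliers. Throughout I use the balls $B_1,\dots,B_k$ of the partition $(\cS^\star,N^\star)$ from \cref{lem:algorithm:existence}: each is an $\ell_\infty$-ball of diameter $\alpha$, their centers are pairwise $3\alpha$-apart (so the $B_i$ lie at pairwise distance $\ge 2\alpha$ and are disjoint), and $\bigcup_i B_i\supseteq\outlier{}(\beta;e)$. Since $\eps<\beta/9$, \cref{fact:outliers} gives $\outlier{}(\beta/3;\wh e)\subseteq\outlier{}(\beta;e)\subseteq\bigcup_i B_i$, and likewise $\outlier{}(\beta/3;e)\subseteq\bigcup_i B_i$; in particular every point treated as an outlier by the first loop of \cref{algorithm} lies in a unique $B_i$.

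The key geometric observation is that whenever \cref{algorithm} forms a ball $S=\{z:\norm{z-x}_\infty\le\alpha\}$ around a $\dataset_1$-point $x\in B_i$ with center $c_i$, we have $B_i\subseteq S$ and $S\cap B_j=\emptyset$ for all $j\ne i$. Both follow from the triangle inequality in $\ell_\infty$: $\norm{x-c_i}_\infty\le\alpha/2$, so $\norm{z-x}_\infty\le\alpha$ for every $z\in B_i$, while for $z\in S$ and $w\in B_j$ we get $\norm{z-w}_\infty\ge\norm{c_i-c_j}_\infty-\norm{c_i-x}_\infty-\norm{x-z}_\infty-\norm{w-c_j}_\infty\ge 3\alpha-\tfrac{\alpha}{2}-\alpha-\tfrac{\alpha}{2}=\alpha>0$. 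Thus every ball created by the algorithm meets exactly one $B_i$ and contains it entirely. This yields $\cB=\{B_i:\dataset_1\cap B_i\cap\outlier{}(\beta/3;\wh e)=\emptyset\}$: if $\dataset_1$ does contain an approximate outlier in $B_i$, the first one the loop reaches is either already covered --- in which case so is $B_i$, by the observation --- or triggers creation of a ball $\supseteq B_i$; conversely a created ball contains $B_i$ only if it was built around an approximate outlier lying in $B_i$. The same observation shows the approximate outliers (w.r.t.\ $e$, and also w.r.t.\ $\wh e$) covered by the balls of $\cS$ are precisely those in $\bigcup_{B_i\notin\cB}B_i$, so the uncovered ones are $\bigcup_{B\in\cB}B\cap\outlier{}(\beta/3;e)$. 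Recalling that the extension of $(\cS,N)$ to $\R^d$ makes every remaining non-outlier a singleton --- which by \cref{fact:propensityScores} still has coarse propensity bounded away from $0$ and $1$ --- and puts the remaining outliers into $N$, this identifies $N=\bigcup_{B\in\cB}B\cap\outlier{}(\beta/3;e)$, the second conclusion.

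For the mass bound, set $U_i:=B_i\cap\outlier{}(\beta/3;\wh e)$; these are deterministic (as $\wh e$ is an input) and pairwise disjoint, and $\bigcup_{B\in\cB}B\cap\outlier{}(\beta/3;\wh e)=\bigcup_{B_i\in\cB}U_i$. Since $B_i\in\cB$ forces $\dataset_1\cap U_i=\emptyset$, it suffices that, with probability $\ge 1-\delta$, no $I\subseteq[k]$ with $\cD\big(\bigcup_{i\in I}U_i\big)>\eps$ is missed entirely by $\dataset_1$. For a fixed such $I$, $\Pr[\dataset_1\cap\bigcup_{i\in I}U_i=\emptyset]=\big(1-\cD(\bigcup_{i\in I}U_i)\big)^{\abs{\dataset_1}}\le e^{-\eps\abs{\dataset_1}}$, so a union bound over the $\le 2^k$ index sets gives failure probability $\le 2^k e^{-\eps\abs{\dataset_1}}\le\delta$ once $\abs{\dataset_1}=\Omega(\eps^{-1}(k+\log(1/\delta)))$, which is implied by $n=\Omega(\eps^{-2}(dk+\log(1/\delta)))$ since $\eps\le 1$ and $d\ge 1$. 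On that event $\cD\big(\bigcup_{B_i\in\cB}U_i\big)=\sum_{B_i\in\cB}\cD(U_i)\le\eps$, the first conclusion.

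The main obstacle I anticipate is the identification of $N$: carefully tracking how the finite-sample partition output by \cref{algorithm} is completed to a partition of all of $\R^d$ and checking that the outliers it discards coincide exactly with $\bigcup_{B\in\cB}B\cap\outlier{}(\beta/3;e)$. The geometry and the concentration bound are both short once the balls $B_i$ and the ball-construction rule of \cref{algorithm} are in hand.
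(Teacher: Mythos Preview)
Your proposal is correct and follows essentially the same approach as the paper: the geometric containment/disjointness of the created balls relative to the $B_i$, the resulting characterization of $\cB$ and of $N$, and the union bound over the $2^k$ subsets of $\{B_1,\dots,B_k\}$ all appear in the paper's proof in the same form. One minor discrepancy is that the paper's own proof concludes $N=\bigcup_{B\in\cB}B\cap\outlier{}(\beta/3;\wh e)$ (matching the algorithm, which thresholds on $\wh e$) rather than the $\outlier{}(\beta/3;e)$ written in the statement; your argument works equally well for either version.
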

                \begin{proof}
                    It will be sufficient to restrict our attention to only the sets in $\cS$ that were added in the first for-loop and contain at least 1 outlier in $\outlier{}(\nfrac{\beta}{3};\wh{e})$.
                    Let the collection of these sets be $\cS_1$.
                    (In particular, $\cS_1$ includes the $\ell_\infty$-balls $S$ around outliers, but not the collection $\hypo{T}$ of non-outliers in $S$.)
                    Observe that each set $S\in \cS_1$, is defined by a covariate $x$ and contains all points $\alpha$-close to $x$.

                    We claim that, for all $1\leq i\leq k$, if there is any covariate $x\in \outlier{}(\beta, e)\cap B_i$ that appears in $\dataset_1$, then there is a set $S\in \cS_1$ such that $S \supseteq B_i$.
                    To see this, fix any $B_i$ and consider any $x\in \outlier{}(\nfrac{\beta}{3}, \wh{e})\cap B_i$.
                    Since $x\in \outlier{}(\nfrac{\beta}{3}, \wh{e})$, it must be covered by the end of the first for-loop.
                    We consider two cases.
                    \begin{itemize}
                        \item \textbf{Case A (There is a set $S_x\in \cS$):}
                            If there is a set $S_x\in \cS$, then $S_x$ is centered at $x$ and contains all points $\alpha$ close to $x$ and, since $\diam{(B_i)}=\alpha$ and $x\in B_i$, it follows that $S_x$ contains all points in $B_i$.
                        \item \textbf{Case B (There is no set $S_x\in \cS$):}
                            Since $S_x\not\in \cS$ and $x$ was covered at the end of the first for-loop, there must be some other $z\in \outlier{}(\nfrac{\beta}{3}, \wh{e})$ such that $x\in S_z$ and, hence, $\norm{x-z}\leq \alpha$.
                            Since $z\in \outlier{}(\nfrac{\beta}{3}, \wh{e})$, $z\in \outlier{}(\beta, e)$ by \cref{fact:outliers} and, hence $z$ must belong to one of the balls $B_1,B_2,\dots,B_k$.
                            Moreover, since $\norm{x-z}\leq \alpha$ and the centers of any two balls is at least $3\alpha$ apart, it must hold that $z$ belongs to the same ball as $x$, i.e., $z\in B_i$.
                            Thus, we have found a point in $S\cap B_i$ which was covered in the first for-loop, we already checked this case above. %
                    \end{itemize}
                    The above observation implies that $\cS_1$ does not contain any points from $\bigcup\nolimits_{B\in \cB} B\cap \outlier{}(\nfrac{\beta}{3}, \wh{e})$ (even if $S\in \cS_1$ contained one point form $B\cap \outlier{}(\nfrac{\beta}{3}, \wh{e})$, it would contain the entire $B$ and, hence $B$ would not have been in $\cB$).
                    Further, since all points in $\bigcup\nolimits_{B\in \cB} B\cap \outlier{}(\nfrac{\beta}{3}, \wh{e})$ are in $\outlier{}(\nfrac{\beta}{3}, \wh{e})$, they are all included in $N$ in the second for loop and, hence, $N=\bigcup\nolimits_{B\in \cB} B\cap \outlier{}(\nfrac{\beta}{3}, \wh{e})$.

                    It remains to show that the mass of $\bigcup\nolimits_{B\in \cB} B\cap \outlier{}(\nfrac{\beta}{3}, \wh{e})$ is at most $\eps$ with high probability.
                    Let ${\rm BAD}$ be the set of all subsets of $\inbrace{B_1, B_2, \dots, B_k}$ such that $\bigcup_{B\in {\rm BAD}} B$ has mass at least $\eps$.
                    It suffices to show that the following event happens with probability at least $1-\delta$:
                    \[
                        \forall_{\cB\in {\rm BAD}}\,,\quad
                        \inparen{\bigcup\nolimits_{S\in \cS_1} S}
                        \cap 
                        \inparen{\bigcup\nolimits_{B\in \cB} B}
                        = \emptyset\,.
                    \]
                    Fix any $\cB\in {\rm BAD}$.
                    It holds that 
                    \[
                        \Pr\insquare{
                            \inparen{\bigcup\nolimits_{S\in \cS_1} S}
                        \cap 
                        \inparen{\bigcup\nolimits_{B\in \cB} B}
                        }
                        \leq 
                        \inparen{1-\cD{\inparen{\bigcup\nolimits_{B\in \cB} B}}}^{{n/2}}
                        \leq \inparen{1-\eta}^{{n/2}}
                        \,.
                    \]
                    Where we used the fact that $\abs{\dataset_1}=\sfrac{n}{2}.$
                    Taking the union bound over all $\cB\in {\rm BAD}$, it follows that 
                    \[
                        \Pr\insquare{
                            \forall_{\cB\in {\rm BAD}}\,,\quad
                            \inparen{\bigcup\nolimits_{S\in \cS_1} S}
                            \cap 
                            \inparen{\bigcup\nolimits_{B\in \cB} B}
                            = \emptyset
                        }
                        \leq 2^k \inparen{1-\eps}^{{n/2}}
                        <\delta\,.
                    \]
                    Where the final inequality follows because $n=\Omega\inparen{\eps^{-2}\inparen{k+\log(\nfrac{1}{\delta})}}$.
                \end{proof}

            \medskip\noindent\textit{\underline{Step 3 ($\cS$ is disjoint and, all $S\in \cS$, have small diameter and non-extreme propensity scores):}}
                In this step, we prove the following lemma.
                \begin{lemma}\label{lem:algorithm:step3}
                    Suppose \cref{asmp:sparsity,asmp:isolation} hold with constants $\alpha,\beta>0$ and $k\geq 1$.
                    All sets in $(\cS, N)$ are disjoint.
                    Further, for each $S\in \cS$, $\diam{}(\cS)\leq 2\alpha$ and $e(S)(1-e(S))\geq \Omega(\beta)$.
                \end{lemma}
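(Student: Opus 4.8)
The plan is to prove the three assertions of \cref{lem:algorithm:step3} --- pairwise disjointness, the diameter bound $\diam(S)\le 2\alpha$, and $e(S)(1-e(S))=\Omega(\beta)$ --- separately, noting first that this is a deterministic structural statement: it holds for every $\dataset_1$ and every $\wh e\in B(e,\eps)$ with $\eps\le\beta/10$, so no concentration or union bound enters. Recall the sets produced by \cref{algorithm}: (i) the $\ell_\infty$-balls $S=\{z\colon\|x-z\|_\infty\le\alpha\}$ of radius $\alpha$ placed around the uncovered outliers $x$ (those with $\wh e(x)(1-\wh e(x))<\beta/3$) met in the first \textbf{for}-loop, together with the singletons $\hypo T$ of non-outliers inside each such $S$; (ii) the singletons $\{x\}$ of uncovered non-outliers of $\dataset_2$; and (iii) the uncovered outliers of $\dataset_2$, which make up $N$.

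For disjointness the key step is to show the first-loop balls are pairwise disjoint. By \cref{fact:outliers}, every $x$ with $\wh e(x)(1-\wh e(x))<\beta/3$ is a $\beta$-outlier with respect to $e$, so by \cref{asmp:isolation} (as packaged in \cref{lem:algorithm:existence}) it lies in one of the $k$ $\ell_\infty$-balls $B_1,\dots,B_k$ of diameter $\alpha$ with centers pairwise $3\alpha$ apart. I would then observe that if the center $x$ of a first-loop ball $S$ lies in some $B_i$, then $S\supseteq B_i$ (since $\diam(B_i)=\alpha$ and $S$ has radius $\alpha$), so the moment $S$ is added every covariate of $B_i$ --- in particular every outlier of $B_i$ --- is marked covered; hence no two first-loop balls have centers in the same $B_i$. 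If $S,S'$ have centers $x\in B_i$, $x'\in B_j$ with $i\ne j$, then $\|x-x'\|_\infty\ge\|c_i-c_j\|_\infty-\tfrac{\alpha}{2}-\tfrac{\alpha}{2}\ge 2\alpha$, so $S\cap S'=\emptyset$ (using the separation of \cref{asmp:isolation} with strict inequality, so the two radius-$\alpha$ cubes are at distance $>2\alpha$). The remaining disjointness is immediate: a $\dataset_2$-covariate is handled in the second loop only when \emph{uncovered}, i.e.\ outside every first-loop ball, so its singleton or its membership in $N$ is disjoint from all first-loop balls; distinct singletons are trivially disjoint; and the nested singletons $\hypo T\subseteq S$ become disjoint in the extended covariate domain on which the fractional estimator lives (a non-outlier $z\in S$ is split into a copy $z_S$ of weight $w_S(z)$ and a copy $z_{\{z\}}$ of weight $1-w_S(z)$), which is exactly the sense in which $(\cS,N,w)$ is a genuine fractional partition.

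The diameter bound is then immediate: each $S\in\cS$ is either a radius-$\alpha$ $\ell_\infty$-ball, with $\diam(S)=2\alpha$, or a singleton, with $\diam(S)=0$; in the fractional picture $\diam_w(S)\le\diam(S)\le 2\alpha$ since only covariates lying in $S$ receive positive $w_S$-weight. For the coarse-propensity bound I would split into two cases. If $S=\{z\}$ is a singleton, then $z$ is a non-outlier, so $\wh e(z)(1-\wh e(z))\ge\beta/3$, and \cref{fact:propensityScores}~(3) (applied with $\beta/3$ in place of $\beta$, legitimate since $\eps\le\beta/10<\beta/3$ and $\beta\le\tfrac14$) gives $e(S)(1-e(S))=e(z)(1-e(z))\ge\tfrac34(\beta/3-\eps)=\Omega(\beta)$. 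If $S$ is a first-loop ball, then $\diam(S)=2\alpha>\alpha$, so by \cref{asmp:sparsity} the $\beta$-outliers do not occupy a $(1-o(1))$-fraction of $S$; thus $\cD\bigl(S\setminus\outlier(\beta;e)\bigr)\ge c_2\,\cD(S)$ for an absolute constant $c_2>0$, and \cref{fact:coarsePropensity} with $c_1=1$ yields $e(S)\in[c_2\beta,\,1-c_2\beta]$, hence $e(S)(1-e(S))\ge c_2\beta(1-c_2\beta)=\Omega(\beta)$ --- the very argument already used inside \cref{lem:algorithm:existence}. (If instead one works with the $w$-weighted coarse propensity score of the extended domain, the $\mathsf{UpdateWeight}$ routine was designed precisely so that $S$ carries near-equal weighted mass of $\beta/3$-$\wh e$-outliers and of non-outliers; passing to $\beta/9$-$e$-outliers via \cref{fact:outliers} and applying \cref{fact:coarsePropensity} with $c_1=9$ gives the same $\Omega(\beta)$ bound.)

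I expect the main obstacle to be the disjointness of the first-loop balls: the diameter and propensity-score parts are bookkeeping on top of \cref{fact:outliers,fact:coarsePropensity,fact:propensityScores} and the argument of \cref{lem:algorithm:existence}, whereas disjointness genuinely fuses the ``covered'' marking of \cref{algorithm} with \cref{asmp:isolation} and needs a careful accounting of constants (radius $\alpha$ for the algorithm's cubes versus diameter $\alpha$ and $3\alpha$-separation for the $B_i$) to land the balls strictly more than $2\alpha$ apart. I would be explicit that the separation in \cref{asmp:isolation} is used strictly (or that the cubes are taken open), since this is also what makes the fractional weights consistent: an outlier double-covered by two balls would be assigned total weight $2$.
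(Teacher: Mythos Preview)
Your proposal is correct and follows the paper's line: the same three-way split (disjointness, diameter, coarse-propensity), the same use of \cref{fact:outliers} to place first-loop centers inside the $B_i$'s, the same triangle-inequality argument to separate balls with centers in distinct $B_i$, and the same appeal to \cref{asmp:sparsity} plus \cref{fact:coarsePropensity} for the propensity bound on balls. For disjointness the paper argues contrapositively (if $z\notin S_x$ then $\|x-z\|_\infty>\alpha$, hence $x,z$ lie in different $B_i$), whereas you argue directly ($S_x\supseteq B_i$, so every later outlier of $B_i$ is already covered); these are equivalent.

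One point of comparison worth noting. For the first-loop balls the paper actually proves the stronger statement that the \emph{$w$-weighted} coarse propensity on the extended domain is $\Omega(\beta)$-bounded --- the quantity that the fractional estimator uses --- by splitting into two subcases according to whether $\cD\bigl(S\setminus\outlier(\beta/3;\wh e)\bigr)$ exceeds $\rho/k$ and invoking uniform convergence to pass from the empirical $\eta,\eta_S$ in $\mathsf{UpdateWeight}$ to their population counterparts. That subargument is therefore a high-probability statement, so your opening claim that the lemma is ``a deterministic structural statement'' is accurate for disjointness, diameter, and the unweighted $e(S)$ bound (which is what the lemma literally asserts), but not for the weighted version the paper establishes in Case~A. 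Your parenthetical correctly identifies that this is where the design of $\mathsf{UpdateWeight}$ enters; the paper simply carries it out in more detail than your sketch.
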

                \begin{proof}
                    First, by construction, it holds that, for each $S\in \cS$, $\diam{}(\cS)\leq 2\alpha$.
                    Next, to see that $e(S)(1-e(S))\geq \Omega(\beta)$, we consider three cases:
                    
                \begin{itemize}
                    \item \textbf{Case A ($S$ was added in first for-loop):}
                        Let $x$ be the covariate that generated $S$.
                        Since $S$ was added in the first for loop, it must be the case that $x\in \outlier{}(\nfrac{\beta}{3}, \wh{e})$ (i.e., ${\wh{e}(x)(1-\wh{e}(x))} <\nfrac{\beta}{3}$) and, hence, by \cref{fact:outliers}, $x\in \outlier{}(\beta, e)$.
                        Therefore, \cref{asmp:sparsity} and the fact that $\diam{(S)}\geq \alpha$, implies that $\cD\inparen{
                                S\backslash \outlier{\inparen{\beta}}
                            }
                            \geq 
                            \Omega(1) \cdot  \cD\inparen{S }$.
                        Now, we further divide into two cases.
                        \begin{enumerate}
                            \item \textbf{Case A.I ($\cD(S\backslash \outlier{}(\nfrac{\beta}{3};\wh{e})\leq \sfrac{\rho}{k}$):}
                                Uniform convergence over the collection of $\ell_p$ balls with respect to distributions (1) $\cD$ and (2) $\cD_w$ implies that: for any $x\not\in \outlier{}(\nfrac{\beta}{3};\wh{e})$
                                \[
                                    w_{S}(x) = \frac{
                                        \cD(S\cap \outlier{}(\nfrac{\beta}{3};\wh{e})) \pm \frac{\rho\eps}{k} + \frac{\rho}{k}
                                    }{
                                        \cD(S\backslash \outlier{}(\nfrac{\beta}{3};\wh{e})) \pm \frac{\rho\eps}{k} + \frac{\rho}{k}
                                    }
                                    \geq \Omega(1)\,.
                                \]
                                Therefore, as $\outlier{}(\beta)\supseteq \outlier{}(\nfrac{\beta}{3};\wh{e})$ and $\cD(S\backslash\outlier{}(\beta))\geq \Omega(1)\cD(S)$, 
                                \[
                                    \cD_w(S\backslash \outlier{}(\beta))
                                    \geq \Omega(1)\cdot \cD(S\backslash \outlier{}(\beta))
                                    \geq \Omega(1)\cD(S)
                                    \geq \Omega(1) \cD_w(S)\,.
                                \]
                            \item \textbf{Case A.II ($\cD(S\backslash \outlier{}(\nfrac{\beta}{3};\wh{e})\leq \sfrac{\rho}{k}$):}
                                In this case, for any $x\not\in\outlier{}(\nfrac{\beta}{3};\wh{e})$
                                \begin{align*}
                                    \cD_w(S\backslash \outlier{}(\nfrac{\beta}{3};\wh{e})) 
                                    &=w_S(x)\cdot \cD(S\backslash \outlier{}(\nfrac{\beta}{3};\wh{e}))\\
                                    &= 
                                    \frac{
                                        \cD(S\cap \outlier{}(\nfrac{\beta}{3};\wh{e})) \pm \frac{\rho\eps}{k} + \frac{\rho}{k}
                                    }{
                                        \cD(S\backslash \outlier{}(\nfrac{\beta}{3};\wh{e})) \pm \frac{\rho\eps}{k} + \frac{\rho}{k}
                                    }
                                    \cdot \cD(S\backslash \outlier{}(\nfrac{\beta}{3};\wh{e}))\\
                                    &\geq \Omega(1)\cdot{\cD(S\cap \outlier{}(\nfrac{\beta}{3};\wh{e}))}\,.
                                \end{align*}
                                Since $\outlier{}(\nfrac{\beta}{9})\subseteq \outlier{}(\nfrac{\beta}{3}; \wh{e})$, it follows that 
                                \begin{align*}
                                    \cD_w(S\backslash \outlier{}(\nfrac{\beta}{9}))
                                    \geq \Omega(1)\cdot \cD(S\cap \outlier{}(\nfrac{\beta}{9}))
                                    \,.
                                \end{align*}
                        \end{enumerate}
                        In both cases, \cref{fact:coarsePropensity} implies that $e(S)\in \insquare{\Omega(\beta), 1-\Omega(\beta)}$ and, hence, by \cref{fact:propensityScores}~(1), $e(S)(1-e(S))\geq \Omega(\beta)$.
                    \item \textbf{Case B ($S$ was added in second for-loop):}
                        Since $S$ was added in the second for-loop, it is of the form $S=\inbrace{x}$ where $x\not\in \outlier{}(\nfrac{\beta}{3}, \wh{e})$ and, hence, by \cref{fact:outliers}, $x\not\in \outlier{}(\nfrac{\beta}{9}, e)$.
                        The claim follows as, since $S$ is a singleton, $e(S)(1-e(S))=e(x)(1-e(x))\geq \nfrac{\beta}{9}$.
                \end{itemize}
                    Finally, we show that all sets in $(\cS, N)$ are disjoint.
                    Since $N$ is composed of points that were not included in any set in $\cS$, it is disjoint by construction.
                    Further, since sets added in the second for-loop are singletons, they are disjoint from each other, and since they only contain points that were not covered in the first for-loop, they are also disjoint with sets added in the first for-loop.
                    It remains to show that no two sets added in the first for-loop are disjoint.

                    Toward this, consider two sets $S_x$ and $S_z$ added in the first for-loop, and generated from covariates $x$ and $z$ respectively.
                    Without loss of generality, suppose $S_x$ was added first and, hence $z\not\in S_x$.
                    Since $S_x$ contains all points $w$, such that, $\norm{w-x}_\infty\leq \alpha$, $z$ is more than $\alpha$ away from $x$, which implies that $x$ and $z$ do not belong to the same ball among $B_1,B_2,\dots,B_k$.
                    Then, as the centers of any pairs of balls $B_i$ and $B_j$ are more than $3\alpha$ apart and the diameter of $B_i$ and $B_j$ is $\alpha$, $x$ and $z$ must in fact be $2\alpha$ apart, which implies that $S_x$ and $S_z$ do not overlap as their radius is $\alpha$.
                \end{proof}

            \noindent \medskip\noindent\textit{\underline{Step 4 (Non-singleton sets in $\cS$ have mass at most $2\rho+k\eps$):}}
                We prove the following lemma.
                \begin{lemma}\label{lem:algorithm:step4}
                    Suppose \cref{asmp:sparsity,asmp:isolation} hold with constants $\alpha,\beta>0$ and $k\geq 1.$
                    It holds that $\sum_{S\in \cS\colon \diam(S)>0} \cD_w(S)\leq 5\rho$, where $\rho=\cD(\outlier{}(\beta,e))$.
                \end{lemma}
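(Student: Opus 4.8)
The plan is to reduce the sum to the $\ell_\infty$-balls created in the first \textbf{for}-loop of $\mathsf{FindGoodPartition}$ and bound, for each such ball $S$, its weighted mass $\cD_w(S)$ by (roughly) twice its $\tfrac{\beta}{3}$-outlier mass plus a $\rho/k$ slack term. First I would note that every set added to $\cS$ in the second \textbf{for}-loop, as well as every set in a collection $\hypo{T}$, is a singleton, hence has $\diam=0$; so the sets $S\in\cS$ with $\diam(S)>0$ are all among the $\ell_\infty$-balls created in the first \textbf{for}-loop. Next I would show there are at most $k$ such balls and they are pairwise disjoint: disjointness is immediate from \cref{lem:algorithm:step3}, and for the count, such a ball $S$ is centered at a covariate $x_0$ with $\wh{e}(x_0)(1-\wh{e}(x_0))<\tfrac{\beta}{3}$, so $x_0\in\outlier(\tfrac{\beta}{3};\wh{e})\subseteq\outlier(\beta;e)$ by \cref{fact:outliers} (using $\eps\le\beta/10<\beta/9$); by \cref{lem:algorithm:existence} we then have $x_0\in B_i$ for one of the $k$ balls $B_1,\dots,B_k$, and since $\diam(B_i)=\alpha$ while $S$ contains every point within $\ell_\infty$-distance $\alpha$ of $x_0$, we get $B_i\subseteq S$. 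As distinct first-loop balls are disjoint and each $B_i$ is nonempty, no $B_i$ lies in two of them, so the number of first-loop balls is at most $k$.

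For a fixed first-loop ball $S$, write $p_S\coloneqq\cD(S\cap\outlier(\tfrac{\beta}{3};\wh{e}))$ and $q_S\coloneqq\cD(S\setminus\outlier(\tfrac{\beta}{3};\wh{e}))$. The $\mathsf{UpdateWeight}$ routine assigns weight $1$ to every $\tfrac{\beta}{3}$-outlier in $S$ (w.r.t.\ $\wh{e}$) and the common weight $w_S=\min\{\frac{\eta_S+k^{-1}\eta}{1-\eta_S+k^{-1}\eta},\,1\}$ to every non-outlier in $S$, so $\cD_w(S)=p_S+w_S q_S$. The crux is the elementary inequality $w_S\,\widehat{q}_S\le\widehat{p}_S+\widehat{\rho}/k$, where $\widehat{p}_S,\widehat{q}_S,\widehat{\rho}$ denote the empirical masses under $\dataset_1$ of $S\cap\outlier(\tfrac{\beta}{3};\wh{e})$, $S\setminus\outlier(\tfrac{\beta}{3};\wh{e})$, and $\outlier(\tfrac{\beta}{3};\wh{e})$, respectively (so that $\eta=\widehat{\rho}$ and $\eta_S=\widehat{p}_S/(\widehat{p}_S+\widehat{q}_S)$): after clearing the positive denominator, this reduces to $\widehat{\cD}(S)\,(\widehat{q}_S-\widehat{p}_S-\tfrac{\widehat\rho}{k})\le\widehat{q}_S$, which holds since $\widehat{\cD}(S)=\widehat{p}_S+\widehat{q}_S\le 1$. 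I would then transfer this bound to the true distribution by standard uniform convergence over $\ell_\infty$-balls (which have VC dimension $O(d)$): on an event of probability $\ge 1-\delta$, the errors $\lvert\widehat{p}_S-p_S\rvert$, $\lvert\widehat{q}_S-q_S\rvert$, $\lvert\widehat{\rho}-\cD(\outlier(\tfrac{\beta}{3};\wh{e}))\rvert$ are all bounded by a single quantity $\nu$ which the lower bound on $n$ in \cref{thm:main} forces to be at most $\rho/(3k)$. Since $w_S\le 1$ and $\cD(\outlier(\tfrac{\beta}{3};\wh{e}))\le\cD(\outlier(\beta;e))=\rho$ by \cref{fact:outliers}, this gives $w_S q_S\le p_S+\rho/k+3\nu$, hence $\cD_w(S)\le 2p_S+\rho/k+3\nu$.

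Finally I would sum over the at most $k$ disjoint first-loop balls. Disjointness gives $\sum_S p_S=\sum_S\cD(S\cap\outlier(\tfrac{\beta}{3};\wh{e}))\le\cD(\outlier(\tfrac{\beta}{3};\wh{e}))\le\rho$, so
\[
    \sum_{S\colon\diam(S)>0}\cD_w(S)\ \le\ 2\sum_S p_S\ +\ k\cdot\frac{\rho}{k}\ +\ 3k\nu\ \le\ 2\rho+\rho+\rho\ =\ 4\rho\ \le\ 5\rho\,.
\]
The step I expect to be the main obstacle is getting the per-ball weight inequality with the right constant: the whole point of $\mathsf{UpdateWeight}$ is to shrink the non-outlier contribution $w_S q_S$ down to essentially $p_S$ (rather than $q_S$), so that the coefficient of $\sum_S p_S$ stays bounded and the regularizer terms $k^{-1}\eta$ contribute only $k\cdot(\rho/k)=\rho$ in aggregate; once that inequality and the $\le k$-ball count are in place, the remainder is routine bookkeeping with the uniform-convergence slack absorbed by the sample size of \cref{thm:main}.
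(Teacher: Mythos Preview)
Your proposal is correct and follows essentially the same route as the paper: restrict to the at most $k$ first-loop $\ell_\infty$-balls, split each $\cD_w(S)$ into its $\outlier(\tfrac{\beta}{3};\wh{e})$-part and its complement, control the latter using the definition of $w_S$, and pass from empirical to true masses by uniform convergence over balls. The one noteworthy difference is how you handle the per-ball weight bound: the paper expresses $\cD_w(S\setminus\outlier(\tfrac{\beta}{3};\wh{e}))$ as $\cD(S\setminus\outlier)\cdot\tfrac{\cD(S\cap\outlier)\pm\rho\eps/k+\rho/k}{\cD(S\setminus\outlier)\pm\rho\eps/k+\rho/k}$ and then does a case split on whether $\cD(S\setminus\outlier(\tfrac{\beta}{3};\wh{e}))\le\rho/k$, whereas you prove the single algebraic inequality $w_S\,\widehat{q}_S\le\widehat{p}_S+\widehat{\rho}/k$ directly (which, after clearing denominators, reduces to $\widehat{\cD}(S)(\widehat{q}_S-\widehat{p}_S-\widehat{\rho}/k)\le\widehat{q}_S$ and holds since $\widehat{\cD}(S)\le 1$). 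Your version is a bit cleaner and even yields the slightly better constant $4\rho$; otherwise the arguments are interchangeable.
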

                \begin{proof}
                    Since for any set $T$, $\cD_w(T)=\cD_w(T\cap \outlier{}(\beta))+\cD_w(T\backslash \outlier{}(\beta))$, and $\cD(\outlier{}(\beta))\leq \rho$, it suffices to upper bound the sum where $\cD_w(S)$ replaced by $\cD_w(S\backslash\outlier{}(\beta))$ by $\rho+O(k\eps)$.
                    Moreover as $\outlier{}(\beta)\supseteq \outlier{}(\nfrac{\beta}{3};\wh{e})$, it suffices to upper bound the sum where $\cD_w(S\backslash\outlier{}(\beta))$ is replaced by $\cD_w(S\backslash\outlier{}(\nfrac{\beta}{3};\wh{e}))$
                    By using uniform convergence over the distribution of the collection of $\ell_p$ balls with respect to (1) the distribution $\cD$ supported on $\outlier{}(\nfrac{\beta}{3}; \wh{e})$ and (2) distribution $\cD$, it follows that 
                    \[
                        \cD_w\inparen{
                                        S\backslash\outlier{}\inparen{
                                        \nfrac{\beta}{3};\wh{e}
                                        }
                                    }
                                = \cD\inparen{
                                        S\backslash\outlier{}\inparen{
                                        \nfrac{\beta}{3};\wh{e}
                                        }
                                    }
                                    \cdot 
                                    \frac{
                                        \cD(S\cap \outlier{}(\nfrac{\beta}{3};\wh{e})) \pm \frac{\rho\eps}{k} + \frac{\rho}{k}
                                    }{
                                        \cD(S\backslash \outlier{}(\nfrac{\beta}{3};\wh{e})) \pm \frac{\rho\eps}{k} + \frac{\rho}{k}
                                    }\,.
                    \]
                    Next, we consider two cases.
                    \begin{enumerate}
                        \item \textbf{Case A ($\cD(S\backslash\outlier{}(\nfrac{\beta}{3};\wh{e}))\leq \sfrac{\rho}{k}$):}
                            Since $\eps\leq \frac{1}{2}$, it holds that
                            \[
                                \cD\inparen{
                                        S\backslash\outlier{}\inparen{
                                        \nfrac{\beta}{3};\wh{e}
                                        }
                                    }
                                    \cdot 
                                    \frac{
                                        \cD(S\cap \outlier{}(\nfrac{\beta}{3};\wh{e})) \pm \frac{\rho\eps}{k} + \frac{\rho}{k}
                                    }{
                                        \cD(S\backslash \outlier{}(\nfrac{\beta}{3};\wh{e})) \pm \frac{\rho\eps}{k} + \frac{\rho}{k}
                                    }
                                \leq 2\cD(S\cap \outlier{}(\nfrac{\beta}{3};\wh{e}))  + \frac{3\rho}{k}\,.
                            \]
                        \item \textbf{Case B ($\cD(S\backslash\outlier{}(\nfrac{\beta}{3};\wh{e})\geq \sfrac{\rho}{k}$):}
                            Since $\eps\leq \frac{1}{2}$, it holds that
                            \[
                                \cD\inparen{
                                        S\backslash\outlier{}\inparen{
                                        \nfrac{\beta}{3};\wh{e}
                                        }
                                    }
                                    \cdot 
                                    \frac{
                                        \cD(S\cap \outlier{}(\nfrac{\beta}{3};\wh{e})) \pm \frac{\rho\eps}{k} + \frac{\rho}{k}
                                    }{
                                        \cD(S\backslash \outlier{}(\nfrac{\beta}{3};\wh{e})) \pm \frac{\rho\eps}{k} + \frac{\rho}{k}
                                    }
                                \leq \cD(S\cap \outlier{}(\nfrac{\beta}{3};\wh{e}))  + \frac{3\rho}{2k}\,.
                            \]
                    \end{enumerate}
                    Now, since $\outlier{}(\nfrac{\beta}{3}; \wh{e}) \subseteq \outlier{}(\beta)$ and $\abs{\cS}\leq k$, it follows that 
                    \[
                        \sum_{S\in\cS\colon \diam{(S)}>0} \cD_w\inparen{
                                        S\backslash\outlier{}\inparen{
                                        \nfrac{\beta}{3};\wh{e}
                                        }
                                    }
                        \leq \sum_{S\in\cS\colon \diam{(S)}>0} \cD\inparen{S\cap \outlier{}(\beta)} + \frac{3\rho}{k}
                        \leq 2\rho+ 3\rho\,.
                    \] 
                    
                \end{proof}

            \noindent \medskip\noindent\textit{\underline{Step 5 (Asymptotic normality):}}
                In this step, we show that $\tau_{\cS, N, w}$ is asymptotically normal:
                it follows because $\tau_{\cS, N, w}$ is a CIPW estimator on the extended domain (\cref{sec:cipw}) and, for any set $T$, $e(T)$ is the same in the original and extended domains, and, hence by \cref{lem:algorithm:step3}, $e(S)(1-e(S))\geq \sfrac{\beta}{9}$ in the extended domain.
                \cref{thm:asymptoticNormality} (which is applicable as $\eps\leq \beta/10$) implies asymptotic normality.

\subsection{Plug-In Rates for Doubly Robust Estimators}
        \label{sec:alg:plugin}
            Recall that doubly robust estimators use two nuisance parameters, the propensity scores and estimates of the expected outcomes $\mu_0,\mu_1\colon\R^d\to [-1,1]$.
            Since they depend on estimates of $\mu_0,\mu_1\colon\R^d\to [-1,1]$, they do not fit the family of CIPW estimators (\cref{sec:cipw}). 
            However, for a partition $(\cS, N)$, one can consider the following generalization of a doubly robust estimator:
                given propensity scores $e\colon \R^d\to (0,1)$ and conditional means $\mu_0,\mu_1\colon \R^d\to [-1,1]$, define
            \[
                    \tau_{{\rm DR}, \cS, N}{(\dataset; \mu_0, \mu_1, e)}
                    \coloneqq \frac{1}{\abs{\inbrace{i\in [n]\colon x_i\not\in N}}} \sum_{\substack{S\in \cS\\i\colon x_i\in S}}  \inparen{
                        \frac{
                            \inparen{t_i-e(S)}\inparen{y-\mu_1(x_i)}
                        }{e(S)}
                            -
                        \frac{
                            \inparen{e(S)-t_i}\inparen{y-\mu_0(x_i)}
                        }{1-e(S)}
                    }
                \,.
            \]
            This is the doubly robust estimator defined on the domain $\cS$ (where, for each $S\in \cS$, all elements of $S$ are assumed to have the same covariate).
            
            We show that when $(\cS, N)$ is a good-local partition then it has a small robust MSE.
            \begin{proposition}[{Robust MSE of Coarse DR Estimator}]\label{prop:plugInRate}
                Suppose \Cref{asmp:lipschitzness} holds.
                For any $\eps\in [0, \sfrac{\beta}{2}]$ and an \goodlocal{\alpha,\beta,\gamma} $\inparen{\cS, N}$ for some $\alpha,\beta > 0$ and $\gamma\in[0,1/2]$, 
                \[
                    \RMSE_{\cD, \epsilon}{\inparen{\tau_{{\rm DR}, \cS, N}{(\mu_0, \mu_1)}}}
                        \leq O\inparen{\alpha L + \frac{\eps}{\beta} + \gamma  + \frac{1}{\sqrt{\beta n}}} %
                        \,.
                \]
            \end{proposition}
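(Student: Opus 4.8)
The plan is to mirror the proof of \cref{lem:RobustMSEofLGpartition}, substituting the (smaller) variance of the doubly-robust summand for that of the CIPW summand and exploiting the fact that the doubly-robust residual structure makes the coarse estimator conditionally unbiased at every covariate, no matter which coarse propensity scores are plugged in. Throughout, $\mu_0,\mu_1$ denote the true conditional means. Fix any $\wh{e}\in B(e,\eps)$. As in \cref{lem:exp_of_mse}, write $\tau_{{\rm DR},\cS,N}(\dataset;\wh{e})=\tfrac1m\sum_{i\colon x_i\notin N}\phi(x_i,y_i,t_i)$ with $m=\abs{\{i\colon x_i\notin N\}}$, where, using $(t-e(S))/e(S)=t/e(S)-1$, the summand takes its AIPW form $\phi(x,y,t)=\tfrac{t(y-\mu_1(x))}{\wh{e}(S_x)}-\tfrac{(1-t)(y-\mu_0(x))}{1-\wh{e}(S_x)}+\mu_1(x)-\mu_0(x)$, with $S_x\in\cS$ the set containing $x$. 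I would bound the mean-squared error by the sum of the squared bias and the variance, and then take the maximum over $\wh{e}$; since every bound below is uniform in $\wh{e}\in B(e,\eps)$, this last step is free.

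First I would establish the doubly-robust cancellation. By unconfoundedness, for any $z$ one has $\Ex[T(Y-\mu_1(z))\mid X{=}z]=\Ex[T(Y(1)-\mu_1(z))\mid X{=}z]=e(z)\cdot\bigl(\mu_1(z)-\mu_1(z)\bigr)=0$, and symmetrically $\Ex[(1-T)(Y-\mu_0(z))\mid X{=}z]=0$, so $\Ex[\phi\mid X{=}z]=\mu_1(z)-\mu_0(z)=\tau(z)$ --- crucially independent of the possibly inaccurate value $\wh{e}(S_z)$. Re-running the expectation computation of \cref{lem:exp_of_mse} with this identity gives $\Ex[\tau_{{\rm DR},\cS,N}(\dataset;\wh{e})]=\Ex[\tau(X)\mid X\notin N]$, and then, exactly as in \cref{eq:clustering:step1_2}, $\abs{\Ex[\tau(X)\mid X\notin N]-\tau}\le \tfrac{2\cD(N)}{1-\cD(N)}\le 4\gamma$ using $\cD(N)\le\gamma\le\tfrac12$. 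Hence $\bias_\cD(\tau_{{\rm DR},\cS,N})\le 4\gamma$, which is within the claimed bound; the $\alpha L$ and $\eps/\beta$ terms appear only for uniformity with \cref{lem:RobustMSEofLGpartition}, since, unlike for CIPW, the residual annihilates both the within-set variation of $\mu$ (so \cref{asmp:lipschitzness} is not even needed for the bias) and the propensity inaccuracy.

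For the variance I would use the law of total variance, following \cref{lem:exp_of_mse}. The ``between'' part is $\var_X[\Ex[\phi\mid X]\mid X\notin N]=\var_X[\tau(X)\mid X\notin N]\le 4$, contributing $O(1/n)$. For the ``within'' part, at $z\in S$ the two inverse-weighted residuals are conditionally uncorrelated --- their product carries the factor $T(1-T)=0$ and each has conditional mean $0$ --- so $\var[\phi\mid X{=}z]=\tfrac{e(z)v_1(z)}{\wh{e}(S)^2}+\tfrac{(1-e(z))v_0(z)}{(1-\wh{e}(S))^2}\le \tfrac{e(z)}{\wh{e}(S)^2}+\tfrac{1-e(z)}{(1-\wh{e}(S))^2}$ since $v_0,v_1\le1$. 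The key point is to average over $z$ \emph{before} bounding: since $\Ex_X[e(X)\mid X\in S]=e(S)$, the contribution of $S$ equals $\tfrac{\cD(S)}{1-\cD(N)}\bigl(\tfrac{e(S)}{\wh{e}(S)^2}+\tfrac{1-e(S)}{(1-\wh{e}(S))^2}\bigr)$, and because $(\cS,N)$ is an $(\alpha,\beta,\gamma)$-good-local partition, $e(S),1-e(S)\ge\beta\ge 2\eps$, whence $\wh{e}(S)\ge e(S)-\eps\ge e(S)/2$ (and symmetrically), so $\tfrac{e(S)}{\wh{e}(S)^2}\le\tfrac{4}{e(S)}\le\tfrac4\beta$. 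Summing over $S\in\cS$ with $\sum_{S\in\cS}\cD(S)=1-\cD(N)$, the ``within'' part is $O(1/\beta)$, giving $\variance_\cD(\tau_{{\rm DR},\cS,N})\le\tfrac1n\bigl(O(1/\beta)+O(1)\bigr)=O\!\bigl(1/(\beta n)\bigr)$. Combining, the MSE is $O(\gamma^2)+O(1/(\beta n))$, hence $\RMSE_{\cD,\eps}(\tau_{{\rm DR},\cS,N})\le O(\gamma+1/\sqrt{\beta n})\le O(\alpha L+\eps/\beta+\gamma+1/\sqrt{\beta n})$.

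The main obstacle is precisely this variance estimate: a naive pointwise bound ($e(z)\le1$, $\wh{e}(S)\ge\beta/2$) only gives $O(1/(\beta^2 n))$, which is too weak, so one genuinely needs the averaging identity $\Ex_X[e(X)\mid X\in S]=e(S)$ to trade a $1/\beta$ for an $e(S)$ --- the same mechanism behind the IPW/DR variance scaling with $\Ex[1/(e(X)(1-e(X)))]$ rather than its supremum. A secondary, routine subtlety, handled exactly as in \cref{lem:exp_of_mse}, is making the random normalization $1/m$ precise in the bias and variance formulas.
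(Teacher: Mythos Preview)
Your proposal is correct and, in fact, more detailed than the paper's own proof sketch. The approaches differ: the paper argues by reduction --- it observes that replacing each outcome $y$ by the centered residual $y-\mu_t(x)$ turns $\tau_{{\rm DR},\cS,N}$ into (essentially) the CIPW estimator $\tau_{\cS,N}$ on a new distribution $\cD'$ whose conditional means $\mu'_0,\mu'_1$ vanish identically, and then invokes \cref{lem:RobustMSEofLGpartition} on $\cD'$. You instead work directly with the AIPW summand, exploiting the doubly-robust cancellation $\Ex[\phi\mid X{=}z]=\tau(z)$ to kill both the within-$S$ variation and the $\wh{e}$-dependence in the bias, and then carry out the variance computation by hand. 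Your route makes explicit why the $\alpha L$ and $\eps/\beta$ terms are slack here (the bias is really $O(\gamma)$), whereas the paper's reduction recovers the same fact implicitly because $\mu'_t\equiv 0$ is $0$-Lipschitz. The averaging step you flag as the ``main obstacle'' --- trading a pointwise $e(z)/\wh{e}(S)^2\le O(1/\beta^2)$ for the averaged $e(S)/\wh{e}(S)^2\le O(1/\beta)$ --- is exactly the mechanism already used inside the proof of \cref{lem:RobustMSEofLGpartition}, so both routes ultimately rest on the same inequality.
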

            Thus, combining this result with the efficient algorithm for finding a good-local partition in \cref{thm:main}, we get an efficient method for constructing a coarse doubly robust estimator with a small robust MSE.
            This estimator has a small robust MSE because of the same reason why CIPW estimators defined by a good-local partition have a small robust MSE: the MSE of a (standard) doubly robust estimator is $\propto \Ex\insquare{\frac{1}{e(x)(1-e(x))}}$ while the MSE of the above coarse doubly robust estimator is $\propto \Ex\insquare{\frac{1}{e(S)(1-e(S))}}$.  
            The latter is small as $e(S)(1-e(S))\geq \beta$ for all $S\in \cS$ in a good-local partition.
            Moreover, the bias introduced is also small because of Lipschitzness and the fact that each $S\in \cS$ has a small diameter. 
            The same approach can also be used to construct coarse variants of other estimators: given an estimator $E$ that depends on the propensity scores, evaluate $E$ on the coarse domain $\cS$ with the coarse propensity scores $\inbrace{e(S)\colon S\in \cS}$.
            This should significantly improve the robust MSE of $E$ whenever $\variance{(E)}\propto \Ex\insquare{\frac{1}{e(x)(1-e(x))}}$ or $\propto \max_x \frac{1}{e(x)(1-e(x))}$.

            \medskip 
            
            \begin{proof}\proofsketch{\cref{prop:plugInRate}}
                Consider the distribution $\cD'$ from which we sample as follows:
                    first draw a sample $(x,y,t)\sim \cD$ and, subsequently, let $(x,y-\mu_t(x), t)$ be the sample from $\cD'$.
                That is we \toa{center} the outcome at each covariate $x$.
                Observe that $\tau_{{\rm DR}, \cS, N}{(\dataset; \mu_0, \mu_1, e)}$ is exactly the IPW estimator with respect to $\cD'$ defined by the partition $(\cS, N)$.
                Since $(\cS, N)$ is a good-local partition, the result follows from the upper bound on the $\eps$-Robust RMSE of CIPW estimators arising from good local partitions.
            \end{proof}

\section{Proofs of Results Comparing CIPW to Baselines}
    
    \subsection{Formal Statement and Proof of \cref{infthm:2}}\label{sec:proofof:prop:comparisonToBaselines}
        In this section, we prove \cref{infthm:2}.
        Its formal statement is as follows.
        \begin{theorem} 
        \label{prop:comparisonToBaselines2}
            For any $\eta,\eps>0$ and $n\geq 1$, there is an unconfounded distribution $\cD$ satisfying \cref{asmp:lipschitzness,asmp:sparsity,asmp:isolation} with parameters $(\alpha, \beta, L, k)=(\eta,\sfrac{1}{9},3,3)$ such that given inaccurate propensity scores $\wh{e}$ with $\norm{\wh{e}-e}_\infty\leq \eps$ and $n=\Omega(d/\eps^2)$ samples %
            \begin{itemize}
                \item[$\triangleright$] The RMSE of IPW and doubly robust estimators (which are given correct conditional outcomes $\mu_0$ and $\mu_1$) is $\Omega\left(\sfrac{1}{{\eta}} \right)$; %
                \item[$\triangleright$] The RMSE of $\eps$-Trimmed IPW estimator (which removes all $\eps$-outliers) is $\Omega(1)$; 
                \item[$\triangleright$] The RMSE of the Estimator \cref{infthm:1} is $O(\eps+{\sfrac{1}{\sqrt{n}}})$.
            \end{itemize}
        \end{theorem}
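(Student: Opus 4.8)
The plan is to exhibit a single explicit instance and then check the four assertions against it. Place the covariates in $\R^d$ with a probability measure $\cD$ built from three pieces. An \emph{outlier region} $O$: the $\ell_\infty$-ball of radius $\eta/10$ about the origin, carrying a constant fraction of the mass ($\cD(O)=\tfrac18$, say), with propensity $e\equiv\eps$ on $O$ (we may assume $\eps\le\beta/10=\tfrac1{90}$, since otherwise $O(\eps+\tfrac1{\sqrt n})=\Omega(1)$ and the last bullet is vacuous), conditional means $\mu_1\equiv\tfrac12,\ \mu_0\equiv0$, outcome $Y(1)\in\{-1,1\}$ with $\Pr[Y(1){=}1]=\tfrac34$ (so $v_1\equiv\tfrac34$ on $O$), and $Y(0)\equiv0$. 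A \emph{near-inlier region} $I$: the annulus $\eta/10<\|x\|_\infty\le\eta/2$, with $\cD(I)=\tfrac18$, $e\equiv\tfrac12$, $\mu_1\equiv\tfrac12$, $\mu_0\equiv0$, deterministic outcomes. A \emph{far atom} $x_{\mathrm{far}}$ at $\ell_\infty$-distance $\ge1$ from the origin, carrying the remaining mass $\tfrac34$, with $e\equiv\tfrac12$, $\mu_1\equiv-1$, $\mu_0\equiv0$. Between $O\cup I$ and $x_{\mathrm{far}}$ let $\mu_1$ interpolate with Lipschitz constant $\le 3$ (feasible since the values differ by $\tfrac32$ over distance $\ge\tfrac12$), keeping $\mu_1\equiv\tfrac12$ on the whole radius-$2\eta$ ball about the origin and putting no mass on the transition region. (If $k=3$ must be met exactly, add two more radius-$(\eta/10)$ outlier clusters, far from the origin and from each other, each wrapped by a comparable inlier annulus with the same locally-constant outcomes; they are handled exactly as $O$ below.) Then $\rho:=\cD(\outlier{}(\beta;e))=\Theta(1)$ and $\tau=\sum_x\cD(x)\mu_1(x)=\tfrac18\cdot\tfrac12+\tfrac18\cdot\tfrac12-\tfrac34=-\tfrac58$.

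The assumptions are then quick to verify with $(\alpha,\beta,k,L)=(\eta,\tfrac19,3,3)$. Lipschitzness with $L=3$ holds by construction ($\mu_0$ constant, $\mu_1$ is $3$-Lipschitz). The $\tfrac19$-outliers are exactly $O$ (and the dummy clusters): $\eps(1-\eps)<\tfrac19$, whereas every other covariate has $e=\tfrac12$, $e(1-e)=\tfrac14$. Isolation with $(\alpha,k)=(\eta,3)$ is immediate: one (three) $\ell_\infty$-ball(s) of diameter $\alpha$, pairwise $3\alpha$-separated, partition the outliers. Sparsity holds because $O$ is wrapped by a comparable inlier mass $I$ at distances $(\eta/10,\eta/2]$, so every $\ell_\infty$-ball of diameter $>\alpha=\eta$ that meets $O$ also captures a constant share of $I$; the radii and masses are chosen so that this geometric fact holds with a concrete constant, whence the outlier-fraction of any such ball is bounded away from $1$.

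For the lower bounds I would use the adversarial $\wh e\in B(e,\eps)$ allowed in \eqref{eq:robustRMSE}. For plain IPW, keep $\wh e=e$ off $O$ and set $\wh e\equiv c\,\eps$ on $O$ for small $c\in(0,2)$; by the bias formula of \cref{lem:exp_of_mse} (and $\mu_0\equiv0$), $\bias(\tau_{\mathrm{IPW}})=\big|\cD(O)\,\mu_1(O)\,(\tfrac1c-1)\big|=\tfrac1{16}(\tfrac1c-1)$, which is $\Omega(\tfrac1\eta)$ (take $c=\eta$) and in fact unbounded as $c\to0$, so the robust RMSE of IPW is $\Omega(\tfrac1\eta)$. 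The doubly robust estimator (given the true $\mu_0,\mu_1$) is unbiased for every $\wh e$, so I instead blow up its variance: conditioning on $X\in O$ with $\wh e(O)$ small, the AIPW summand is $\tfrac{(1-\wh e(O))(Y(1)-\tfrac12)}{\wh e(O)}+Y(1)$ when $T=1$ and $\tfrac12$ when $T=0$, hence has conditional variance $\ge(1-o(1))\,\eps\,v_1(O)/\wh e(O)^2=\Omega(\eps/\wh e(O)^2)$; choosing $\wh e(O)=\eta\sqrt{\eps/n}\in(0,2\eps]$ (valid because $n=\Omega(d/\eps^2)$ is large) gives $\var(\tau_{\mathrm{DR}})\ge\Omega\big(\tfrac{\cD(O)}{n\,\wh e(O)^2}\big)=\Omega(\tfrac1{\eta^2})$, so its RMSE is $\Omega(\tfrac1\eta)$. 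For $\eps$-Trimmed IPW: already under the true $e$, $O$ satisfies $e(1-e)=\eps(1-\eps)<\eps$, so it is removed for every $\wh e\in B(e,\eps)$; the estimator is then CIPW with $N=O$ and singletons elsewhere, and by \cref{lem:exp_of_mse} its expectation is $\tfrac{\sum_{x\notin O}\cD(x)\mu_1(x)}{1-\cD(O)}=\tfrac{1/16-3/4}{7/8}=-\tfrac{11}{14}$, so its bias is $\big|{-}\tfrac58+\tfrac{11}{14}\big|=\tfrac9{56}=\Omega(1)$.

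Finally, for our estimator, run \cref{algorithm} with the given $\wh e$ and the hypothesized sample size (which is of the order required by \cref{thm:main} once $\rho,k=\Theta(1)$). By Steps 1--5 of the proof of \cref{thm:main}, with high probability it returns a fractional good-local partition $(\cS,N,w)$ with $\cD_w(N)\le\eps$, all coarse propensities $\wh e(S)(1-\wh e(S))=\Omega(\beta)$, and — the key point — every positive-diameter $S\in\cS$ is an $\ell_\infty$-ball of radius $\alpha=\eta$ about a point of $O$ (it is built around an $\outlier{}(\beta/3;\wh e)$-point, which by \cref{fact:outliers} is a true $\beta$-outlier, hence lies in $O$), so $S$ is contained in the radius-$2\eta$ ball where $\mu_0\equiv0$ and $\mu_1\equiv\tfrac12$. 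Thus the per-set ``cluster bias'' $b(\cdot)$ of \eqref{def:clusterBias} vanishes on every set of the partition, and the refined bias bound used inside the proof of \cref{lem:RobustMSEofLGpartition} (which controls the bias by $O(\cD_w(N))+4\sum_S b(S)\cD_w(S)$ rather than by $O(\cD_w(N))+O(L)\sum_S\diam_w(S)\cD_w(S)$) gives $\bias=O(\eps)$; the variance term is $O(\tfrac1{n\beta})=O(\tfrac1n)$ and the inaccuracy overhead is $O(\eps/\beta)=O(\eps)$, so $\RMSE_{\cD,\eps}(\tau_{\cS,N,w})=O(\eps+\tfrac1{\sqrt n})$ (and it is asymptotically normal by \cref{thm:asymptoticNormality}). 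I expect the main obstacle to be exactly this tension: breaking IPW/DR and, especially, $\eps$-Trimmed IPW forces $\rho=\Omega(1)$, which makes the black-box guarantee of \cref{infthm:1} (with its $\alpha\rho L=\Theta(\eta)$ term) too weak, so one must arrange the geometry and outcomes so that the algorithm's merged set lands in a region where the true outcomes are exactly constant and then re-derive the bias via $b(\cdot)$; verifying Sparsity for that geometry (every diameter-$>\alpha$ ball touching $O$ still captures a constant share of $I$) is the fiddly remaining step.
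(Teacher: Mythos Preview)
Your approach is valid but differs substantially from the paper's. The paper builds a five-point one-dimensional instance: covariates $\{0,\eps,\tfrac12,1{-}\eps,1\}$ with the two outliers at $\eps$ and $1{-}\eps$ carrying a \emph{tiny true} propensity $\delta\ll n^{-1}\eta^2$ (not $\eps$). With that choice, even under the \emph{true} scores the variance of IPW and DR already satisfies $\variance\ge\tfrac{1}{n}\sum_x\cD(x)\tfrac{v_1(x)+\mu_1(x)^2}{e(x)}\ge\Omega(\tfrac{1}{n\delta})\ge\Omega(\tfrac{1}{\eta^2})$, so no adversarial perturbation is needed for the first bullet. The trimmed-IPW bias is then a direct computation. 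This is shorter than your route, which sets $e\equiv\eps$ on the outlier region and manufactures the $\Omega(1/\eta)$ via an adversarial $\wh e$ (bias for IPW, variance for DR). Your arithmetic for DR has a small slip---you drop the factor $\eps$ when writing $\var(\tau_{\rm DR})\ge\Omega(\cD(O)/(n\,\wh e(O)^2))$---but your choice $\wh e(O)=\eta\sqrt{\eps/n}$ restores it and the conclusion $\Omega(1/\eta^2)$ is correct. Also, your sentence ``removed for every $\wh e$'' is not literally true (e.g.\ $\wh e=2\eps$ gives $\wh e(1-\wh e)>\eps$), but it is harmless since the robust RMSE only needs one bad $\wh e$, and $\wh e=e$ already works.

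Where your write-up is actually \emph{more} careful than the paper is the third bullet. You correctly flag that a black-box appeal to \cref{thm:main} yields $O(\eps+\rho\alpha L+\tfrac{1}{\sqrt{n\beta}})=O(\eps+\eta+\tfrac{1}{\sqrt n})$ since $\rho=\Theta(1)$ here, and that to get the claimed $O(\eps+\tfrac{1}{\sqrt n})$ one must arrange $\mu_t$ to be constant on the algorithm's merged balls and invoke the finer bias decomposition via the cluster-bias $b(\cdot)$ of \eqref{def:clusterBias} (so that the $\sum_S b(S)\cD_w(S)$ term vanishes). The paper's construction has exactly this property---$\mu_1$ is flat on $[0,\tfrac13]\ni\{0,\eps\}$ and on $[\tfrac23,1]\ni\{1{-}\eps,1\}$---but its proof simply says ``substituting into \cref{thm:main}'' and does not spell out this refinement. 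So your diagnosis of the obstacle, and your fix, are on point; the ``fiddly remaining step'' you mention (verifying Sparsity for your annulus geometry) is genuinely the only loose end, and it is avoided entirely in the paper by working with isolated point masses in one dimension.
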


        \begin{proof}\proofof{\cref{prop:comparisonToBaselines2}}
                    Fix a constant $\delta\ll n^{-1}\eta^2$.
                    We will construct an unconfounded distribution $\cD$ over one-dimensional covariates in the interval $[0,1]$.
                    Fix the following parameters 
                    \[
                        \cD_X(x)
                        = \begin{cases}
                            \nfrac{(1-\eps)}{6} & \text{if $x\in \inbrace{0,1-\eps}$},\\
                            \nfrac{(1-\eps)}{3} & \text{if $x\in \inbrace{\eps,1}$},\\
                            \eps & \text{otherwise}
                        \end{cases}
                        \qquad\text{and}\qquad
                        e(x)=\begin{cases}
                            \delta & \text{if } x=\inbrace{\eps,1-\eps},\\
                            \nfrac{1}{2} & \text{otherwise}
                        \end{cases}\,.
                    \]
                    Where $\eps>0$ is the constant in the theorem.
                    For each $x\in [0,1]$, let $\mu_0(x)=v_0(x)=0$.
                    Finally, define 
                    \[
                          \mu_1(x) 
                            = \begin{cases}
                                0 & \text{if } x\leq \nfrac{1}{3},\\
                                3\inparen{x-\nfrac{1}{3}} & \text{if } \nfrac{1}{3}\leq x\leq \nfrac{2}{3},\\
                                1 & \text{otherwise}
                            \end{cases}
                            \qquad\text{and}\qquad
                            v_1(x)=1
                            \,.
                    \] 
                        We can verify the assumptions as follows.
                        \begin{enumerate}
                            \item \textbf{(Lipschitzness)} 
                                $\mu_0$ and $\mu_1$ are 3-Lipschitz Functions
                            \item \textbf{(Isolation)} 
                                With $\beta=\nfrac{1}{9}$, there are two outliers: $\outlier{}(\beta)=\inbrace{\eps,1-\eps}$.
                                These two outliers can be covered by $k=2\leq 3$ balls of radius $\eps$:
                                    $B_1=[0, \eps]$ and $B_2=[1-\eps, 1]$.
                            \item \textbf{(Sparsity)} 
                                Finally, any ball $B$ of radius $r\geq \eps$ containing one of the outliers also contains either $x=0$ or $x=1$ which are not in $\outlier{}(\beta)$ and, hence, guarantees that $\cD(B\backslash\outlier{}(\beta))\geq \frac{1}{3}\cD(B)$.
                        \end{enumerate}

                    \paragraph{Upper Bound on Robust MSE of Estimators.}
                        Since all three assumptions required by \cref{thm:main} hold, substituting the values of $d, k, L, \alpha,$ and $\beta$ in \cref{thm:main}'s guarantee implies that for $n=\Omega(1/\eps^2)$, the estimator $\tau_{A}$ in \cref{infthm:1} satisfies: $\RMSE_{\cD, O(\eps)}{(\tau_{A})}
                            \leq O\inparen{\eps+\inparen{\sfrac{1}{\sqrt{n}}}}.$

                    \paragraph{Lower Bound on MSE of Estimators.}
                        Using standard lower bounds on the variance of IPW and DR estimators (e.g., see \citep{wager2020notes}), implies that 
                        \[
                            \variance_\cD(\ipw{}),
                            \variance_\cD(\doublyrobust{})
                            \geq 
                            \frac{1}{n}\int_{x\in [0,1]} 
                                \frac{v_1(x)+\mu_1(x)^2}{e(x)}
                                d\cD(x)
                            \geq \frac{(1-\eps)}{3n\delta}
                            \geq \frac{1}{\eta^2}
                            \,.
                        \]
                        Since $\tau_{{\rm trim}(\eps)}$ drops the points that are $\eps$-outliers, its expected value is 
                        \[
                            \Ex\insquare{
                                \tau_{{\rm trim}(\eps)}
                            }
                            = \frac{
                                    \frac{1-\eps}{6} \cdot 0
                                    +\frac{1-\eps}{3} \cdot 1
                                    + 
                                    \frac{\eps}{2}
                                }{
                                    1-\frac{1}{2}(1-\eps)
                                }
                            = \frac{2}{3} \pm O(\eps)\,
                            \,.
                        \]
                        Hence, since $\tau=\frac{1}{2}$, $\bias_\cD{(\tau_{{\rm trim}(\eps)})}\geq \frac{1}{6}\pm O(\eps)$.
                        The result now follows by using that, for any estimator $E$, $\RMSE_\cD(E)=\bias_\cD(E)+\sqrt{\variance_\cD(E)}$. %
                \end{proof}

    \subsection{Formal Statement and Proof of \cref{prop:IPW_has_bad_mse}}\label{sec:robustMSEIPW}  
            In this section, we show that small errors in propensity scores can increase the RMSE of IPW and doubly robust estimators (even those that are given accurate conditional means $\mu_0$ and $\mu_1$) by an arbitrary amount.
            Concretely, we prove the following result.
        \begin{proposition}[{Standard Estimators Are Not Robust}]\label{prop:IPW_has_bad_mse}
            For any $\eta\in (0,\sfrac{1}{4}]$ and $n\geq 1$, there is an unconfounded distribution $\cD=\cD_\eta$ such that
            for any $0\leq \eps<\eta$, the $\eps$-Robust RMSE's of \ipw{} and \doublyrobust{($\mu_0,\mu_1$)} (which is given the accurate conditional means $\mu_0$ and $\mu_1$) are
            \[
                \RMSE_{\cD, \eps}\inparen{\ipw{}}= 
                \Theta\inparen{\frac{\eps}{\eta-\eps} + \frac{1}{\sqrt{n\eta}}}
                 \quadand  
                \RMSE_{\cD, \eps}\inparen{\doublyrobust{}(\mu_0,\mu_1)}= 
                \Theta\inparen{\frac{\sqrt{\eta}}{\sqrt{n} (\eta-\eps)}}
                \,.
            \]
            Therefore for $\epsilon = 0$
            \[
                \RMSE_{\cD}\inparen{\ipw{}}, \RMSE_{\cD}\inparen{\doublyrobust{}(\mu_0,\mu_1)}
                =\frac{O(1)}{\sqrt{\eta n}}\,,
                \tag{upper bound on non-Robust RMSEs}
            \]
            and letting $\eps\to\eta$,
            \[
                \RMSE_{\cD, \eps}\inparen{\ipw{}}, \RMSE_{\cD, \eps}\inparen{\doublyrobust{}(\mu_0,\mu_1)}\to \infty\,.  
                \tag{Lower bound on Robust RMSEs}
            \]
        \end{proposition}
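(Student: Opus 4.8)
The plan is to construct a one-dimensional instance with a single ``bad'' covariate whose true propensity score sits at the boundary $\eta$ of the allowed perturbation ball, so that an $\eps$-perturbation can push its propensity score down to $\eta-\eps$, making the inverse-propensity weight blow up like $1/(\eta-\eps)$. Concretely, I would place a covariate $x_0$ with $e(x_0)=\eta$ and a small probability mass $p$ (to be tuned), with all other covariates having $e(x)=\tfrac12$; I would set $\mu_0\equiv v_0\equiv 0$ and choose $\mu_1$ so that $\mu_1(x_0)$ is bounded away from zero (say $\mu_1(x_0)=1$) while keeping everything $O(1)$-Lipschitz (this is not strictly needed here since \cref{asmp:lipschitzness} is not invoked, but harmless). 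For the doubly robust estimator I additionally need $v_1(x_0)>0$ so that the residual variance term $\propto v_1(x_0)/(e(x_0)(1-e(x_0)))$ survives even when $\mu_1$ is known exactly; for IPW the surviving term is $\propto (v_1(x_0)+\mu_1(x_0)^2)/e(x_0)$, which is the qualitative difference between the two rates in the statement.

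The key computational step is to plug this instance into the exact bias/variance formulas of \cref{lem:exp_of_mse} (specialized to $\mu_0=v_0=0$, no coarsening, and the perturbed propensity scores $\wh e \in B(e,\eps)$). For IPW with $\wh e(x_0)=\eta-\eps$: the bias term is $\bigl(\tfrac{e(x_0)}{\wh e(x_0)}\mu_1(x_0)\cdot p + (\text{unbiased rest}) - \tau\bigr)^2 = \Theta\bigl((\tfrac{\eta}{\eta-\eps}-1)^2 p^2\bigr)=\Theta\bigl((\tfrac{\eps}{\eta-\eps})^2 p^2\bigr)$, and the variance contributes $\tfrac1n\cdot\tfrac{e(x_0)(v_1(x_0)+\mu_1(x_0)^2)}{\wh e(x_0)^2}p = \Theta\bigl(\tfrac{1}{n}\cdot\tfrac{\eta}{(\eta-\eps)^2}p\bigr)$ plus the $\Theta(1/n)$ contribution from the rest of the support. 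Choosing $p=\Theta(1)$ (e.g.\ $p=\tfrac12$) balances these into $\RMSE_{\cD,\eps}(\ipw{})=\Theta\bigl(\tfrac{\eps}{\eta-\eps}+\tfrac{1}{\sqrt{n\eta}}\bigr)$, where the $\tfrac{1}{\sqrt{n\eta}}$ comes from the $\eps=0$ baseline variance $\tfrac1n\cdot\tfrac{1}{\eta}\cdot\Theta(1)$. For the doubly robust estimator, plugging $\mu_1=$ the true conditional mean kills the $\mu_1(x_0)^2$ part of the numerator, leaving only $\tfrac1n\cdot\tfrac{v_1(x_0)}{\wh e(x_0)(1-\wh e(x_0))}p=\Theta\bigl(\tfrac{1}{n}\cdot\tfrac{1}{\eta-\eps}\bigr)$ (since $1-\wh e(x_0)=\Theta(1)$), and the bias of DR is zero when the outcome model is exact, so $\RMSE_{\cD,\eps}(\doublyrobust{}(\mu_0,\mu_1))=\Theta\bigl(\tfrac{\sqrt\eta}{\sqrt n(\eta-\eps)}\bigr)$ after writing $\tfrac{1}{\sqrt{n(\eta-\eps)}}$ in the form $\tfrac{\sqrt\eta}{\sqrt n(\eta-\eps)}\cdot\sqrt{\tfrac{\eta-\eps}{\eta}}$ and absorbing the bounded factor $\sqrt{(\eta-\eps)/\eta}\in[\tfrac12,1]$ into the $\Theta(\cdot)$ (using $\eps<\eta$; for the matching lower bound on the rate one checks the worst perturbation is exactly $\wh e(x_0)=\eta-\eps$, and the variance is monotone in this direction). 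The stated corollaries then follow: at $\eps=0$ both RMSEs are $\Theta(1/\sqrt{\eta n})$, and as $\eps\to\eta$ the factor $1/(\eta-\eps)\to\infty$.

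The main obstacle is getting the matching \emph{lower} bounds (the $\Omega$ direction of each $\Theta$) cleanly: I must verify that over all $\wh e\in B(e,\eps)$ the adversary's best choice really does drive $\wh e(x_0)$ to $\eta-\eps$ (monotonicity of the relevant bias and variance contributions in $\wh e(x_0)$ on the interval $[\eta-\eps,\eta+\eps]$, which holds because $x\mapsto 1/x$ and $x\mapsto 1/x^2$ are decreasing and $\eta+\eps<\tfrac12$ so the $1-\wh e$ denominator never becomes the binding one), and that the other covariates cannot be exploited further (their propensity scores stay in $[\tfrac12-\eps,\tfrac12+\eps]=\Theta(1)$, contributing only the $\Theta(1/n)$ baseline). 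A secondary technical point is confirming the instance is a valid input: $e(x)\in(0,1)$ everywhere (true since $\eta\le\tfrac14$ and $\eps<\eta$), $\mu_t\in[-1,1]$, $v_t\in[0,1]$, and the masses sum to one — all arranged by the explicit choice above. I would present the IPW and DR computations as two short parallel paragraphs, each ending by reading off the $\eps=0$ and $\eps\to\eta$ limits.
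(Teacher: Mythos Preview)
Your construction and overall strategy match the paper's proof exactly: a two-point support with one covariate at propensity $\eta$ and mass $\tfrac12$, the other at propensity $\tfrac12$, with $\mu_0=v_0=0$, $\mu_1\equiv 1$, $v_1\equiv 1$, and the adversarial perturbation $\wh e(x_0)=\eta-\eps$. The IPW bias and variance computations are correct.

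There is, however, a genuine error in your doubly-robust variance step. With the true $\mu_1$ plugged in, the per-unit contribution at $x_0$ is $\tfrac{t(y-\mu_1(x_0))}{\wh e(x_0)}$, whose conditional second moment is $\tfrac{e(x_0)\,v_1(x_0)}{\wh e(x_0)^2}$, \emph{not} $\tfrac{v_1(x_0)}{\wh e(x_0)(1-\wh e(x_0))}$. (The $e(x_0)$ in the numerator comes from $\Pr[t=1\mid x_0]$, and the denominator is $\wh e(x_0)^2$ because the weight $1/\wh e(x_0)$ is squared; the $1-\wh e(x_0)$ factor never appears since $v_0=0$.) With the correct formula the variance is $\Theta\bigl(\tfrac{\eta}{n(\eta-\eps)^2}\bigr)$, so $\RMSE=\Theta\bigl(\tfrac{\sqrt\eta}{\sqrt n(\eta-\eps)}\bigr)$ directly, with no rewriting needed. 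Your intermediate rate $\tfrac{1}{\sqrt{n(\eta-\eps)}}$ is off by a factor $\sqrt{\eta/(\eta-\eps)}$, which is \emph{unbounded} as $\eps\to\eta$; the attempted rescue via ``$\sqrt{(\eta-\eps)/\eta}\in[\tfrac12,1]$'' is false precisely in that limit, so the argument as written does not establish the claimed $\Theta$ for DR. Fix the variance formula and the rest goes through as you outline.
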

        Thus, inaccuracies in propensity scores can arbitrarily increase the RMSE of IPW and, even, doubly robust estimators that are given accurate conditional means $\mu_0$ and $\mu_1$.

        {As mentioned in \cref{sec:examplesIPW}, we focus on the doubly-robust estimator in \cref{sec:examplesIPW}.
        The above result demonstrates that this estimator is fragile to errors in propensity scores.
        Since most guarantees of other doubly robust estimators also crucially rely on the absence of $O(1)$-outliers, we expect other doubly robust estimators to be fragile to errors as well.}

        \label{sec:proofof:prop:IPW_has_bad_mse}
        \medskip        
                \begin{proof}\proofof{\cref{prop:IPW_has_bad_mse}}
                    Consider a distribution supported on two points $x_1$ and $x_2$ which are very close to each other.
                    Construct $\cD=\cD(\eta)$ such that both points have mass $\frac{1}{2}$.
                    Let $x_1$ have the following parameters
                    \[
                        \text{
                            $e(x_1)=\eta$\,,\quad
                            $\mu_1(x_1)=1$\,,\quad
                            $\mu_0(x_1)=0$\,,\quad 
                            $v_1(x_1)=1$\,,\quad and\quad
                            $v_0(x_1)=0$\,.}  
                    \]
                    Let $x_2$ have the same parameters except that its propensity is $\frac{1}{2}$: 
                    \[
                        \text{
                            $e(x_2)=\frac{1}{2}$\,,\quad
                            $\mu_1(x_2)=1$\,,\quad
                            $\mu_0(x_2)=0$\,,\quad 
                            $v_1(x_2)=1$\,,\quad and\quad
                            $v_0(x_2)=0$\,.}  
                    \]
                    Since $\mu_0(x)=v_0(x)=0$ for both $x\in \inbrace{x_1,x_2}$, standard bounds (e.g., see \citet{wager2020notes}) imply
                    \begin{align*}
                        \bias_{\cD}\inparen{\ipw{}(\wh{e})}
                            &=
                            \abs{
                                \frac{e(x_1)\mu_1(x_1)}{2\wh{e}(x_1)}
                                + \frac{e(x_2)\mu_1(x_2)}{2\wh{e}(x_2)}
                                - \frac{\mu_1(x_1)+\mu_1(x_2)}{2}
                            }
                        \,,\\ 
                        \variance_{\cD}\inparen{\ipw{}(\wh{e})}
                            &=
                            \frac{1}{n}
                            \inparen{
                                \frac{e(x_1)v_1(x_1)}{2\wh{e}(x_1)^2}
                                +\frac{e(x_2)v_1(x_2)}{2\wh{e}(x_2)^2}
                            }\,.
                    \end{align*}
                    For the doubly robust estimator, it holds that 
                    \[
                        \bias_{\cD}\inparen{\doublyrobust{}(\mu_0,\mu_1,\wh{e})}
                            =0
                         \quadand  
                        \variance_{\cD}\inparen{\doublyrobust{}(\mu_0,\mu_1,\wh{e})}
                            =
                            \frac{1}{n}
                            \inparen{
                                \frac{e(x_1)v_1(x_1)}{2\wh{e}(x_1)^2}
                                +\frac{e(x_2)v_1(x_2)}{2\wh{e}(x_2)^2}
                            }
                            \,.
                    \]
                    Consider the inaccurate propensity score $\wh{e}(x_1)=\eta-\eps$.
                    We can verify that $\wh{e}\in B(e,\eps)$.
                    For this choice, the above expressions give the following bound
                    \begin{align*}
                        \bias_{\cD}\inparen{\ipw{}(\wh{e})}
                            &= \frac{1}{2} \abs{\frac{\eta}{\eta-\eps}-1}
                            ~~\stackrel{(\eps<\eta)}{=}~~ {\frac{\eps}{2(\eta-\eps)}}\,,\\
                        \variance_{\cD}\inparen{\ipw{}(\wh{e})}
                            &= \frac{\eta}{2n(\eta-\eps)^2}+\frac{1}{4n((\nfrac{1}{2})-\eps)^2}\,,\\
                        \bias_{\cD}\inparen{\doublyrobust{}(\mu_0,\mu_1,\wh{e})}
                            &=0\,,\\
                        \variance_{\cD}\inparen{\doublyrobust{}(\mu_0,\mu_1,\wh{e})}
                            &=
                            \frac{\eta}{2n(\eta-\eps)^2}
                            +
                            \frac{1}{4n((\nfrac{1}{2})-\eps)^2}\,.
                    \end{align*}
                    The result follows as, for any estimator, $\RMSE{(E)}=\bias{(E)}+\sqrt{\variance{(E)}}$.
                \end{proof} 
                \begin{remark}[{CIPW Estimators have a small $\eps$-Robust RMSE in the above example}]\label{rem:prop:IPW_has_bad_mse}
                    Consider the distribution $\cD$ in the proof of \cref{prop:IPW_has_bad_mse}.
                    Define the partition $\cS=\inbrace{\inbrace{x_1,x_2}}$ and $N=\emptyset.$
                    Substituting $\cD$'s parameters in the RMSE expression in \cref{lem:exp_of_mse}, implies that for all $0\leq \eps<\eta$
                    \begin{align*}
                        \bias_\cD{(\tau_{\cS, N}(\wh{e}))}
                        &= \frac{\frac{\eta}{2}+\frac{1}{4}}{\frac{\eta}{2}+\frac{1}{4}\pm \eps}-1 
                        ~\stackrel{(\eps<\eta<\frac{1}{4})}{\leq}~~8\eps\,,\\
                        \variance_\cD{(\tau_{\cS, N}(\wh{e}))}
                        &= \frac{1}{n}
                            \inparen{
                                \frac{\frac{\eta}{2}(1+1)}{\frac{\eta}{2}+\frac{1}{4}\pm \eps}
                                +\frac{\frac{1}{4}(1+1)}{\frac{\eta}{2}+\frac{1}{4}\pm \eps}
                                - 1
                            }
                            ~\stackrel{(\eps<\eta<\frac{1}{4})}{\leq} 
                                \frac{8}{n}
                            \,.
                    \end{align*}
                    Therefore, for all $0\leq \eps<\eta$
                    \[
                        \MSE_{\cD,\eps}{(\tau_{\cS, N})}
                        = 8\eps + \frac{4}{\sqrt{n}}.
                    \]
                \end{remark}

\section*{Acknowledgements}
    We would like to thank the anonymous COLT reviewers for their comments and suggestions.
    AM thanks Colleen Chan, Chris Harshaw, and Shinpei Nakamura-Sakai for helpful discussions and references.

\newpage

\printbibliography

\appendix
\addtocontents{toc}{\protect\setcounter{tocdepth}{2}}

\newpage

\section{Table of Notations}
    \vspace{-1mm}
    
    For ease of reference, we present all the relevant notations below (in \cref{tab:notation}).
    \begin{table}[h!]
        \vspace{-2mm}
        \centering
        \subfigure[\normalsize \textit{Specific Constants}]{  
            \begin{tabular}{p{1.5cm}p{1.5cm}p{12.5cm}}
            \toprule 
            \textbf{Symbol} & \textbf{Range} & \textbf{Meaning}\\
            \midrule{}%
            $n$ & $\N$ &Number of samples provided (i.e., size of the dataset)\\
            $d$ & $\N$ & Dimension of the covariates\\
            $\eps$ & $[0,1]$ & Error in propensity scores\\
            $m$ & $\N$ & Size of the domain in \cref{proofOverivews:hardness}, where it is assumed to be finite\\ 
            \bottomrule{}\\[-5mm]
        \end{tabular}
        }
        
        \subfigure[\normalsize \textit{General Notation}]{  
            \begin{tabular}{p{2cm}p{13.5cm}}
            \toprule
            \textbf{Symbol} & \textbf{Meaning}\\
            \midrule{}%
            $x$ or $X$ & Covariate; domain $\R^d$\\
            $t$ or $T$ & Treatment indicator. Domain $\zo$\\
            $y(t)$ or $Y(t)$ & Outcome when treatment is $T=t$; domain $[-1,1]$\\
            $y$ or $Y$ & Observed outcome. $Y=TY(1)+(1-T)Y(0)$; domain $[-1,1]$\\
            \midrule{}
            $\cD$ & Data distribution. Supported over a subset of $\R^d\times \R\times \zo$ \\
            $\cD_X$ & Marginal of $\cD$ over covariates \\
            $\dataset$ & Censored dataset of size $n$ generated from $\cD$ \\
            \midrule{}
            $B_p(x,r)$ & $\ell_p$-ball of radius $r$ centered at $x$. Subscript is omitted when implied or irrelevant.\\
            $\outlier{}(\beta;\wh{e})$ & The set of all $\beta$-outliers with respect to $\wh{e}$, i.e., $\inbrace{x\in \R^d\colon \wh{e}(x)(1-\wh{e}(x)<\beta}$,\\
            &where the dependence on $\wh{e}$ is hidden when $\wh{e}=e$\\
            \bottomrule{}\\[-5mm]
        \end{tabular}
        }
        
        \subfigure[\normalsize \textit{Parameters of Data Distribution $\cD$}]{  
            \begin{tabular}{p{1.3cm}p{14.2cm}}
            \toprule 
            \textbf{Symbol} & \textbf{Meaning}\\
            \midrule{}%
            $\mu_T(x)$ & Expected value of $Y$ conditioned on $X=x$ and $T=t$\\
            $v_T(x)$ & Variance of $Y$ conditioned on $X=x$ and $T=t$\\
            $e(x)$ & Propensity score at $x$: $e(x)=\Pr[T=1\mid X=x]$\\
            $\cD(x)$ & Probability mass at $X{=}x$. When the domain is continuous, it is the density at $X{=}x$ \\
            $e(S)$ & Average propensity over the subset $S$\\
            $\cD(S)$ & Total mass assigned to subset $S$\\
            \bottomrule{}\\[-5mm]
        \end{tabular}
        }

        \subfigure[\normalsize \textit{Quantities Related to an Estimator $E$}]{  
            \begin{tabular}{p{5cm}p{10.5cm}}
            \toprule 
            \textbf{Symbol} & \textbf{Meaning}\\
            \midrule{}%
            $\bias{(E(\dataset;\nu))}$, $\variance{(E(\dataset;\nu))}$, $\RMSE{(E(\dataset;\nu))}$, $\MSE{(E(\dataset;\nu))}$ & Respectively the bias, variance, RMSE, and MSE of estimator $E$ when evaluated on censored dataset $\dataset$ with nuisance parameters $\nu$\\
            \midrule{}
            $\bias_\cD{(E(\nu))}$, $\variance_\cD{(E(\dataset;\nu))}$,  $\RMSE_\cD{(E(\nu))}$, $\MSE_\cD{(E(\nu))}$ & Respectively the expected values of $\bias{(E(\dataset;\nu))}$, $\variance{(E(\dataset;\nu))}$, $\RMSE{(E(\dataset;\nu))}$, and $\MSE{(E(\dataset;\nu))}$ over censored data $\dataset$ of size $n$ generated from $\cD$\\            
            \midrule{}
            $\RMSE_{\cD, \eps}{(E(\nu))},\; \MSE_{\cD, \eps}{(E(\nu))}$ & Defined as $\max_{\wh{e}\in B(e, \eps)} \RMSE_\cD{(E(\nu, \wh{e}))}$ and $\max_{\wh{e}\in B(e, \eps)}$ $\MSE_\cD{(E(\nu, \wh{e}))}$ respectively \\
            \bottomrule{}\\[-5mm]
        \end{tabular}
        }
        
        \caption{ 
            \normalsize Table of notations.
        }
        \vspace{-6mm}
        \label{tab:notation}
    \end{table}

\newpage
\section{Examples of Propensity-Score-Based Estimators}\label{sec:examplesIPW}
    In this section, we present examples of propensity-score-based estimators.
    Recall that an estimator $E$ is a function that takes a censored dataset $\dataset =\inbrace{\inparen{x_i, y_i, t_i}\colon 1\leq i\leq n}$ as input and outputs a scaler value $E(\dataset)$.
    When $E$ gets some additional or \textit{nuisance} parameters $\mu$ (e.g., propensity scores), we overload the notation to $E(\dataset;\mu).$
    
    \paragraph{Neyman Estimator.} 
        The simplest estimator, called the Neyman Estimator, goes back to \citet{neyman1934two}: %
        \[
        \neyman{}(\dataset) = 
    {\frac{\sum_{i\colon t_i=1}{y_i}}{\sum_{i\colon t_i=1}1} }- {\frac{\sum_{i\colon t_i=0}{y_i}}{\sum_{i\colon t_i=0}1}}\,,
    \tag{Neyman Estimator}
    \] 
    which computes the difference in the average outcomes on the treatment group and control groups.
    This estimator does not rely on any nuisance parameters. 
        
        \paragraph{IPW Estimator.} %
        The most popular estimator that uses a nuisance parameter is the Inverse Propensity-Score Weighted (IPW) Estimator \cite{horvitz1952generalization}. %
        Given propensity scores $e \colon \R^d\to (0,1)$, the IPW estimator is defined as 
        \[
            \ipw{}\inparen{\dataset; e}
            = 
            \frac{1}{n}
            \sum_i
                \inparen{
                    \frac{t_i y_i}{e(x_i)} 
                    - \frac{(1 - t_i) y_i}{1-e(x_i)}
            }
            \,,
            \tag{IPW Estimator}
        \]
        where $n=\abs{\dataset}$.

    \paragraph{{$\eta$-Trimmed IPW Estimators.}}  
        This is a variant of the IPW estimator which is parameterized by an additional number, say, $\eta\in (0,1)$ \citep{imbens2015causal}.
        For any dataset $\dataset$, let $\dataset(\eta)$ be the subset of $\dataset$ that does not contain any sample $(x_i,y_i,t_i)$ where $e(x_i)\not\in [\eta, 1-\eta]$.
        The $\eta$-Trimmed IPW estimator is the IPW estimator on $\dataset(\eta)$, i.e., 
        \[
            \tau_{\rm trim}(\dataset; e, \eta)
            = 
            \ipw{}(\dataset(\eta); e)\,.
            \tag{Trimmed IPW Estimator}
        \]

    \paragraph{{Balancing Score Estimator.}}
        The Balancing Score Estimator is a generalization of the IPW estimator \citep{imbens2015causal}.
        A balancing score $b\colon \R^d\to \R$ is any function such that the treatment ($T$) is independent of the covariates ($X$) conditioned on the value of the balancing score ($b(X)$).
        Fix any balancing score $b$.
        For any $z\in \R$, the $b$-propensity score at $z$, $e_b(z)$, is defined as $e_b(z)=\Pr_\cD[T=1\mid b(X)=z]$.
        The balancing score estimator is defined as follows:
            given a dataset $\dataset$, a balancing score $b\colon \R^d\to (0,1)$, and a $b$-propensity score $e_b$, 
        \[
            \tau_{\rm Balancing}(\dataset; e_b, b) = 
            \frac{1}{n}
            \sum_i
                \inparen{
                    \frac{t_i y_i}{e_b(b(x_i))} 
                    - \frac{(1 - t_i) y_i}{1-e_b(b(x_i))}
            }
            \,.
            \tag{Balancing Score Estimator}
        \]
        In other words, the $b$-balancing score estimator is the IPW estimator over the covariates $\inbrace{b(x)\colon x\in \R^d}$.

    \paragraph{{Blocking Estimators.}}
        A blocking estimator is defined by a partition of the interval $[0,1]$ into blocks $\hypo{B}=\inbrace{B_1,B_2,\dots,B_s}$.
        For each $1\leq i\leq s$, let $e(B_i)=\Pr_\cD[T=1\mid e(X)\in B_i]$.
        Given a set of blocks $\hypo{B}$ and values $e(\hypo{B})=\inbrace{e(B)\mid B\in \hypo{B}}$, the blocking estimator is 
        \[
            \tau_{\rm Blocking}(\dataset; \hypo{B}, e(\hypo{B})) = 
            \frac{1}{n}
            \sum_i
                \inparen{
                    \frac{t_i y_i}{e(B(x_i))} 
                    - \frac{(1 - t_i) y_i}{1-e(B(x_i))}
            }
            \,,
            \tag{Blocking Estimator}
        \]
        where $B(x_i)$ is the unique set $B\in \hypo{B}$ containing $x_i$.
        In other words, the blocking estimator is the IPW estimator over the \toa{covariates} $\hypo{B}$.

    \paragraph{Doubly Robust Estimators.}   
        The family of doubly robust estimators is a significant extension of the IPW estimators which, in addition to propensity scores $e$, also take (estimates of) the conditional means $\mu_0$ and $\mu_1$ respectively as input \citep{robins2005doublyRobust,chernozhukov2018doubleML,foster2023orthognalSL}.
        {All doubly robust estimators enjoy the property that they are unbiased estimates of $\tau$ whenever either of the two nuisance parameters (the propensity scores or the mean outcomes) are correct \citep{robins2005doublyRobust}. 
        More recent doubly robust estimators come with stronger guarantees: under the assumptions that there are no $O(1)$-outliers (i.e., all propensity scores lie in the interval $[\Omega(1), 1-\Omega(1)]$),  their RMSE scales with, roughly, the product of the RMSE of the given propensity score and the given estimates of conditional-outcome-means \citep{chernozhukov2018doubleML,foster2023orthognalSL}.}

        {In this work we focus on the simplest doubly-robust estimator and demonstrate that it is fragile to errors in propensity scores.
        Since most guarantees doubly robust estimators crucially rely on the absence of $O(1)$-outliers, we expect them to be fragile to errors as well.}
        Concretely, we focus on the following doubly robust estimator:
            given conditional means $\mu_0, \mu_1\colon \R^d\to [-1,1]$, and propensity scores $e\colon \R^d\to (0,1)$
            the doubly robust estimator is 
        \[
            \doublyrobust{(\dataset; \mu_0, \mu_1, e)}
            = \frac{1}{n} \sum_i \inparen{
                \frac{
                    \inparen{t_i-e(x_i)}\inparen{y-\mu_1(x_i)}
                }{e(x_i)}
                    -
                \frac{
                    \inparen{e(x_i)-t_i}\inparen{y-\mu_0(x_i)}
                }{1-e(x_i)}
            }\,.
            \tag{Doubly Robust Estimator}
        \]

\section{Further Discussion on the Need for Data-Dependence}\label{sec:data-dependent}
    In this section, we further explore the need to consider data-dependent estimators.
    Recall the following result from \cref{sec:intro}, whose proof appears later in this section. %

    \needDataDependence*

    \noindent It shows that not only is it impossible to weakly beat the RMSE of the IPW estimator, but any \CIPW{} estimator different from IPW has at least a \textit{constant} RMSE for some unconfounded $\cD$ irrespective of the number of samples $n$.
    In particular, it shows that unless we use a data-dependent CIPW estimator, there is no hope of achieving a smaller RMSE than the IPW estimator.
    Ideally, for each unconfounded distribution $\cD$ satisfying some mild properties, we want to find a corresponding CIPW estimator that has a smaller Robust RMSE than the IPW estimator.
    A natural preliminary question is:
        is there any unconfounded distribution $\cD$ for which \textit{some} CIPW estimator has a smaller Robust RMSE than the IPW estimator?
    Our next result answers this in the affirmative. %
    \begin{restatable}[{CIPW Estimators can have much smaller RMSE than IPW}]{lemma}{sanitycheck}\label{lem:sanitycheck}
        For any $\eta,\epsilon >0$, there is a distribution $\cD$ satisfying unconfoundedness (Equation~\eqref{eq:unconfounded}) and a partition $(\cS,N)$ such that 
        \[
            \RMSE_{\cD,\epsilon}(\tau_{\cS, N})
            \leq 
            O(\eta)\cdot \RMSE_{\cD, \epsilon }(\ipw{})\,.
        \]
    \end{restatable}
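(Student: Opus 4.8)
The plan is to exhibit a single hard distribution supported on just two nearby covariates and compare the vanilla IPW estimator against the CIPW estimator that merges them. Let $x_1,x_2\in\R^d$ be two covariates at distance at most $\eps$ (the exact gap is irrelevant), each with mass $\cD(x_1)=\cD(x_2)=\tfrac12$. Assume first that $\eta\le\tfrac14$ and $\eps\le\tfrac18$; the remaining ranges are handled at the end. Set $e(x_1)=\eta^2$, $e(x_2)=\tfrac12$, $\mu_0(x_i)=\mu_1(x_i)=0$, $v_0(x_i)=0$, and $v_1(x_i)=1$ for $i=1,2$. Concretely: draw $X$ uniformly from $\{x_1,x_2\}$, let $Y(0)\equiv 0$ and $Y(1)$ be a mean-zero unit-variance variable (e.g.\ Rademacher) drawn independently of $T$ given $X$, and let $T\mid X\sim\mathrm{Bern}(e(X))$. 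This $\cD$ satisfies unconfoundedness (Equation~\eqref{eq:unconfounded}) by construction, has $\tau=\Ex[Y(1)-Y(0)]=0$, and trivially satisfies \cref{asmp:lipschitzness} since $\mu_0,\mu_1$ are constant (indeed $L=0$ works).

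First I would lower bound IPW. Since IPW is the CIPW estimator for the singleton partition, \cref{lem:exp_of_mse} with accurate propensities gives $\bias_\cD(\ipw{})=\lvert\tau-0\rvert=0$ and
\[
  \variance_\cD(\ipw{})
  =\frac1n\sum_{x}\cD(x)\,\frac{v_1(x)}{e(x)}
  =\frac1n\Big(\frac{1}{2\eta^2}+1\Big)\ \ge\ \frac{1}{2n\eta^2}\,.
\]
As $e\in B(e,\eps)$, the $\eps$-robust RMSE dominates the non-robust one, so $\RMSE_{\cD,\eps}(\ipw{})\ge\RMSE_{\cD}(\ipw{})\ge(\eta\sqrt{2n})^{-1}$.

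Next I would upper bound the merging estimator $\tau_{\cS,N}$ with $\cS=\{\{x_1,x_2\}\}$ and $N=\emptyset$. Its single coarse propensity score is $e(S)=\tfrac12\eta^2+\tfrac14\in[\tfrac14,\tfrac13]$, and for any $\wh{e}\in B(e,\eps)$ the coarsened value $\wh{e}(S)=\tfrac12\wh{e}(x_1)+\tfrac12\wh{e}(x_2)$ lies within $\eps$ of $e(S)$, hence $\wh{e}(S)\in[\tfrac18,\tfrac12]$. Repeating the computation of \cref{lem:exp_of_mse} with $\wh{e}$ in place of $e$ (the means are still zero) gives $\bias_\cD(\tau_{\cS,N}(\wh{e}))=0$ and $\variance_\cD(\tau_{\cS,N}(\wh{e}))=\tfrac1n\,e(S)/\wh{e}(S)^2\le\tfrac1n\cdot(1/3)/(1/8)^2=O(1/n)$, uniformly over $\wh{e}\in B(e,\eps)$. (Alternatively, $(\cS,N)$ is an \goodlocal{\eps,1/6,0} with $L=0$, so \cref{lem:RobustMSEofLGpartition} yields $\RMSE_{\cD,\eps}(\tau_{\cS,N})=O(\eps+1/\sqrt n)$; the direct bound is what makes the ratio clean.) Thus $\RMSE_{\cD,\eps}(\tau_{\cS,N})=O(1/\sqrt n)$, and dividing the two displays gives $\RMSE_{\cD,\eps}(\tau_{\cS,N})\le O(\eta)\cdot\RMSE_{\cD,\eps}(\ipw{})$.

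Finally, the edge cases: if $\eta>\tfrac14$, run the construction with $\eta':=\tfrac14$ and note the ratio is $O(\eta')=O(1)=O(\eta)$; if $\eps>\tfrac18$ then $\RMSE_{\cD,\eps}(\ipw{})$ is unbounded (let $\wh{e}(x_1)\to 0^+$, permissible since $\eps>\eta^2$), so the inequality is immediate. The one real design choice — and the only step needing care — is setting the outlier's propensity to $\eta^2$ rather than $\eta$: this is precisely what forces $\variance_\cD(\ipw{})\asymp 1/(n\eta^2)$ so that the RMSE ratio is $O(\eta)$ rather than $O(\sqrt\eta)$, and it is why the bound is $n$-independent. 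Everything else is a routine evaluation of \cref{lem:exp_of_mse} on a two-point distribution.
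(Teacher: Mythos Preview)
Your proof is correct and follows essentially the same approach as the paper: a two-point distribution with one extreme propensity score, together with the CIPW estimator that merges the two points. Your version is in fact cleaner than the paper's written proof—the deliberate choice $e(x_1)=\eta^2$ is exactly what yields the $O(\eta)$ ratio (the paper's proof as printed sets $e(x_1)=\eps$, which appears to be a typo and would only give $O(\sqrt{\eps})$), and you explicitly handle the $\eps$-robust RMSE and the edge cases, both of which the paper elides.
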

    In fact, in \cref{lem:sanitycheck}, even the non-robust RMSE of the CIPW estimator is significantly better than that of the IPW estimator. In particular, one can show that, by fixing $\epsilon=0$ in the above result,  $\RMSE_{\cD}(\tau_{\cS, N})=O\inparen{\sfrac{1}{\sqrt{n}}}$ and $\RMSE_{\cD}(\ipw{})=\eta^{-1} \times O\inparen{\sfrac{1}{\sqrt{n}}}$.
    
    More fundamentally, \cref{lem:sanitycheck} suggests that it may be possible to find data-dependent CIPW estimators that have a significantly smaller confidence interval than the IPW estimator; a possibility that we confirm for a large class of unconfounded distributions in \cref{infthm:1}.

    \subsection*{Proof of \cref{lem:sanitycheck}}
    \label{sec:proofof:lem:sanitycheck}
            Next, we prove \cref{lem:sanitycheck}.

            \medskip

            \begin{proof}\proofof{\cref{lem:sanitycheck}}
            Construct a distribution $\cD$ supported on two covariates $\inbrace{x_1,x_2}$.
            Let 
            \[
                \cD(x_1)=\cD(x_2)=\frac{1}{2}\,.
            \]
            Next, for each $x_i$, set $\mu_0(x_i)=v_0(x_i)=0$.
            Further, set 
            \[
                \mu_1(x_1)=\mu_1(x_2) =v_1(x_1)=v_1(x_2) = \frac{1}{2}\,.
            \]
            Finally, set the following propensity scores.
            \[
                \text{$e(x_1)=\eps$\quad and \quad $e(x_2)=\frac{1}{2}$}\,.
            \]
            For each $1\leq i\leq 3$, let
            \[
                \Ex\insquare{Y^1\mid X=x_i}=\mu_i.
            \]
            Using the expression of RMSE (e.g., in \cref{lem:exp_of_mse}), it follows that 
            \[
                \RMSE{(\ipw{})}\geq \var{(\ipw{})}^2 
                = 
                    \sum_{i\in [2]}\frac{\cD(x_i)}{n}\frac{v_1(x_i)+\mu_1(x_i)^2}{e(x_i)(1-e(x_i))}
                = \Omega\inparen{\frac{1}{\eps{n}}}\,.
            \]
            Select the partition $\cS=\inbrace{\inbrace{1,2}}$ and $N=\emptyset.$
            The expression of RMSE in \cref{lem:exp_of_mse} implies that
            \[
                \RMSE{(\tau_{\cS, N})}^2
                = \var{(\ipw{})} 
                = \frac{1}{n(1-\cD(N))}\sum_{S\in \cS}
                    \frac{\sum_{x\in S}\cD(x)\inparen{v_1(x)+\mu_1(x)^2}}{e(S)(1-e(S))}
                \leq \frac{O(1)}{n}
                \,,
            \]
            where the first equality used the fact that $\tau_{\cS, N}$ is unbiased in this example as $\mu_1=\mu_2$. %
        \end{proof}

    \subsection*{Proof of \cref{lem:need_data_dependence}} %
            In the remainder of this section, we prove the following result.
            \needDataDependence*

            \begin{proof}\proofof{\cref{lem:need_data_dependence}}
                Since $(\cS, N)$ is nontrivial, one of the following two cases holds.
                \begin{enumerate}
                    \item \textbf{Case A ($\exists~ S\in \cS$ with $\abs{S}\geq 2$):}
                        Without loss of generality, let $S\supseteq \inbrace{x_1, x_2,\dots}$.
                        Define a data distribution $\cD$ supported on $\inbrace{x_1, x_2}$ with the following parameters 
                        \[
                            \cD(x_1)=\cD(x_2)=\frac{1}{2}\,,\quad 
                            e(x_1)=\frac{7}{8}\,,\quad 
                            e(x_2)=\frac{1}{8}\,,\quad 
                            \mu_1(x_1)=1\,, \quadand  
                            \mu_1(x_2)=0\,.
                        \]
                        Further, set $\mu_0(x_i)=v_0(x_i)$ for all $i$.
                        In this example:
                        \[
                            \tau=\sum_{i\in [2]}\cD(x_i)\mu_1(x_i) = \frac{1}{2}
                             \quadand  
                            \Ex\insquare{\tau_{\cS, N}}
                                =
                                    \frac{
                                        \sum_{i\in [2]} e(x_i)\cD(x_i)\mu_1(x_i)
                                    }{
                                        \sum_{i\in [2]} e(x_i)\cD(x_i)
                                    }
                                =\frac{7}{8}
                                \,.
                        \]
                        Therefore, it follows that $\bias_\cD{}(\tau_{\cS, N})\geq \frac{3}{8}$.
                        Hence
                        \[
                            \RMSE_\cD{(\tau_{\cS, N})} > \frac{3}{8}\geq \frac{1}{3}\,.
                        \]
                        Moreover, as all propensity scores are bounded away from $0$ and $1$, standard bounds on the RMSE of the IPW estimator imply that 
                        $\RMSE_\cD{(\ipw{})}=\frac{O(1)}{\sqrt{n}}$.

                    \item \textbf{Case B ($N\neq \emptyset$):}
                        If $N$ contains all covariates, then, in the example from the previous case, $\tau_{\cS, N}=0$ and, hence, $\bias_\cD{(\tau_{\cS, N})}=\frac{1}{2}$.
                        This implies the result as $\RMSE_\cD{(\tau_{\cS, N})}>\frac{1}{3}$.
                        
                        Hence, we assume that $N$ does not contain all covariates.
                        Without loss of generality, let $x_1\in N$ and $x_2\not\in N$.
                        Consider the example in the previous case.
                        It follows that $\Ex\insquare{\tau_{\cS, N}}=1$ and, since $\tau=\frac{1}{2}$, $\bias_\cD{(\tau_{\cS, N})}=\frac{1}{2}$.
                        This implies the result as $\RMSE_\cD{(\tau_{\cS, N})}>\frac{1}{3}$.
                \end{enumerate}

            \end{proof}

\section{Formal Statement and Proof of Lower Bounds on RMSE}\label{sec:infoLB}
    In this section, we prove the following result.

    \begin{theorem}[{Lower Bound on RMSE}]\label{thm:infoLB:alphaL} 
        For any $n\geq 1$ and $\alpha,\beta,L>0$, there is an unconfounded distribution $\cD$ that satisfies \cref{asmp:lipschitzness,asmp:sparsity,asmp:isolation} with parameters $\alpha,\beta,L$, and $k=1$, such that, there is no algorithm that, given 
                    a dataset $\dataset$ of size $n$ generated from $\cD$, and 
                    (accurate) propensity scores $e\colon \R^d\to (0,1)$,
                outputs $\tau'$ such that $\Pr\insquare{\abs{\tau-\tau'}\leq \rho\cdot\min\inbrace{1,\alpha L}}\geq \frac{1}{2}$. 
    \end{theorem}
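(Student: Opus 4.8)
I would prove this by a two‑point (Le Cam) argument in which the quantity that cannot be estimated is the conditional treated‑outcome mean $\mu_1$ at a single covariate whose propensity score is extreme. The intuition is the one flagged in the introduction for Issue II: a $\beta$‑outlier with a very small propensity is essentially never observed ``treated'' among $n$ samples, so its contribution $\rho\cdot\mu_1(x_{\rm out})$ to $\tau$ is unidentifiable; \cref{asmp:sparsity} forces a non‑outlier to sit within distance $\approx\alpha$ of it, and \cref{asmp:lipschitzness} together with boundedness then limits how much $\mu_1(x_{\rm out})$ may move, which is exactly what produces the factor $\min\{1,\alpha L\}$.

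\textbf{Construction.} Work in $d=1$ and assume WLOG $\beta\le \tfrac14$ (if $\beta>\tfrac14$ every covariate is an outlier and a degenerate variant applies). Fix a mass parameter $\rho\in(0,\tfrac12]$. Put a single outlier $x_{\rm out}=0$ with $\cD(x_{\rm out})=\rho$ and propensity $e(x_{\rm out})=p$, where $p>0$ is small enough that $p(1-p)<\beta$ and $\rho p n\le \tfrac1{10}$; place the remaining mass $1-\rho$ on a non‑outlier cloud situated at distance $\approx\alpha$ from $x_{\rm out}$, all with propensity $\tfrac12$ and with $\mu_0\equiv\mu_1\equiv 0$ and $v_0\equiv v_1\equiv0$ there. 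Set $\mu_0\equiv0$ and $v_0\equiv0$ everywhere. Define two distributions $\cD^{+}$ and $\cD^{-}$ that agree everywhere except at $x_{\rm out}$, where the treated outcome is taken to be the deterministic value $\pm c$ with $c=\tfrac12\min\{1,L\alpha\}$ (so that $|\mu_1|\le1$ and the $L$‑Lipschitz bound across the gap $\approx\alpha$ to the cloud both hold). Then $\tau^{\pm}=\rho\mu_1(x_{\rm out})=\pm\rho c$, hence $|\tau^{+}-\tau^{-}|=\rho\min\{1,L\alpha\}$, and the propensity scores (which are what the algorithm is handed) are identical under $\cD^+$ and $\cD^-$, so revealing them gives the algorithm nothing.

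\textbf{Indistinguishability and conclusion.} Let $E$ be the event that no sampled unit with covariate $x_{\rm out}$ is treated. The count of such units is dominated by $\mathrm{Bin}(n,\rho p)$, so $\Pr[\neg E]\le\rho p n\le\tfrac1{10}$ under either hypothesis. On $E$ every observed outcome equals $0$ (controls are $0$ everywhere, and the only observed treated units lie in the cloud where $\mu_1=0$ deterministically), and the law of $(X,T)$ is the same under both hypotheses; hence the dataset has the same law under $\cD^{+}$ and $\cD^{-}$ conditioned on $E$, so $\mathrm{TV}(\cD^{+},\cD^{-})\le\tfrac1{10}$ (and this can be driven to $0$ by shrinking $p$). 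Now for any estimator and any $r<\tfrac12\,|\tau^{+}-\tau^{-}|$ the events $\{|\tau'-\tau^{+}|\le r\}$ and $\{|\tau'-\tau^{-}|\le r\}$ are disjoint, so the success probabilities $p_{+},p_{-}$ under $\cD^{+},\cD^{-}$ satisfy $p_{+}+p_{-}\le 1+\mathrm{TV}$; taking the worse of the two distributions as the true $\cD$ gives an estimator that errs by at least $r=\Theta(\rho\min\{1,\alpha L\})$ with probability bounded away from below $\tfrac12$, which after choosing the absolute constants in the construction (shrinking $c$ by a fixed factor and using $\mathrm{TV}$ as small as needed, exactly as in the companion $\Omega(\rho)$ lower bound) yields the stated impossibility. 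It remains to check the three assumptions for this family: \cref{asmp:lipschitzness} holds because $c\le\tfrac12 L\alpha$; \cref{asmp:isolation} with $k=1$ holds since the lone outlier lies in one $\ell_p$‑ball of diameter $\alpha$; and \cref{asmp:sparsity} holds by placing the cloud so that every $\ell_p$‑ball of diameter $>\alpha$ that meets $x_{\rm out}$ also captures a constant fraction of the cloud's mass.

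\textbf{Main obstacle.} The information‑theoretic core above is routine; the real work is making the construction simultaneously compatible with all three assumptions while keeping the $\tau$‑gap as large as $\rho\min\{1,\alpha L\}$. \cref{asmp:sparsity} pins non‑outlier mass to within $O(\alpha)$ of the outlier, and \cref{asmp:lipschitzness} then caps the movement of $\mu_1(x_{\rm out})$ at $O(L\alpha)$, which combined with $|\mu_1|\le 1$ is precisely the source of the $\min\{1,\alpha L\}$ factor; so the exact placement and mass/density profile of the cloud, the choice of $p$, and the absolute constants must be balanced carefully rather than treated as afterthoughts.
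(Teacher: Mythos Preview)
Your proposal is correct and follows essentially the same approach as the paper: a single outlier $x_{\rm out}$ of mass $\rho$ with vanishingly small propensity, a non-outlier neighbor at distance $\approx\alpha$ with propensity $\tfrac12$ and $\mu_1=0$, and the observation that with high probability no treated sample at $x_{\rm out}$ appears so $\mu_1(x_{\rm out})$---and hence $\tau$ up to $\rho\min\{1,\alpha L\}$---is unidentifiable. Your version is in fact a bit more careful than the paper's (you phrase the indistinguishability via an explicit Le Cam/TV bound and you worry about placing the non-outlier mass so that \cref{asmp:sparsity} genuinely holds for all balls), but the core construction and mechanism are identical.
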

    Observe that without \cref{asmp:lipschitzness} we can set $L=\infty$ and, hence, we get the following corollary.
    \begin{corollary}[{Lower Bound on RMSE}]\label{coro:infoLB:rho}
        For any $n\geq 1$ and $\alpha,\beta,L>0$, there is an unconfounded distribution $\cD$, such that, there is no algorithm that, given 
                    a dataset $\dataset$ of size $n$ generated from $\cD$, and 
                    (accurate) propensity scores $e\colon \R^d\to (0,1)$,
                outputs $\tau'$ such that $\Pr\insquare{\abs{\tau-\tau'}\leq \rho}\geq \frac{1}{2}$. 
    \end{corollary}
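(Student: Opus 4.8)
The plan is to prove \cref{thm:infoLB:alphaL} by Le Cam's two-point method, with \cref{coro:infoLB:rho} falling out as the case $L=\infty$. First I would reduce to one-dimensional covariates, $d=1$ (a hard $1$-dimensional instance embeds isometrically into $\R^d$, so this only strengthens the bound). Then I would build two unconfounded distributions $\cD_+$ and $\cD_-$ that agree on everything an estimator can see and differ only in $\tau$. Concretely: place an atom of mass $\rho$ at $x_{\rm out}=0$ with a tiny propensity $e(0)=p$, chosen so that $p<\beta$ (hence $x_{\rm out}$ is a $\beta$-outlier) and $p<1/(4n\rho)$; put the remaining mass $1-\rho$ as a uniform density on $[-2\alpha,-r]\cup[r,2\alpha]$ with propensity $\tfrac12$ there, where $r:=\tfrac14\min\{\alpha,1/L\}$ is a ``hole'' around $x_{\rm out}$ carrying no non-outlier mass. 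For the outcomes set $Y(0)\equiv 0$ (so $\mu_0\equiv v_0\equiv 0$) and $v_1\equiv 0$, and let $\mu_1(x)=s\cdot\max\{0,\,c-L\abs{x}\}$ with $c:=Lr=\tfrac14\min\{\alpha L,1\}$ and $s=+1$ under $\cD_+$, $s=-1$ under $\cD_-$; the conditional law of $Y(1)$ away from $x_{\rm out}$ is irrelevant and chosen to respect unconfoundedness.

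Next I would check the three assumptions with parameters $\alpha,\beta,L$ and $k=1$. Lipschitzness: $x\mapsto\abs{x}$ is $1$-Lipschitz, so $\mu_1$ is $L$-Lipschitz, and $\abs{\mu_1}\le c\le 1$; $\mu_0\equiv 0$ is trivially Lipschitz. Isolation: the unique $\beta$-outlier $x_{\rm out}$ lies in an $\ell_p$-ball of diameter $\alpha$. Sparsity is the point needing care: any ball $B$ with $\diam(B)>\alpha$ containing $x_{\rm out}$ has radius $>\alpha/2$ and, since $2r\le\alpha/2$, contains a sub-interval of $[-2\alpha,2\alpha]\setminus(-r,r)$ of length $\ge\alpha/2$, which carries non-outlier mass at least an absolute constant times $1-\rho$; hence the outlier fraction in $B$ is bounded away from $1$. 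Finally, because $\mu_1$ is supported inside the hole, where $\cD_\pm$ place no non-outlier mass, the two distributions differ only in the law of $Y(1)\mid X=x_{\rm out}$, and $\tau(\cD_\pm)=\rho\cdot(\pm c)=\pm\rho c$, so $\abs{\tau(\cD_+)-\tau(\cD_-)}=2\rho c=\Theta(\rho\min\{1,\alpha L\})$.

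For the indistinguishability step, note that the only way to separate $\cD_+$ from $\cD_-$ is to observe a censored sample with $X=x_{\rm out}$ and $T=1$ (which reveals $Y(1)$ at $x_{\rm out}$); the probability of this over the $n$ i.i.d.\ samples is at most $n\rho p\le\tfrac14$ by the choice of $p$. Conditioned on the complementary event, the observed dataset, together with the (identical) propensity function, has exactly the same law under $\cD_+$ and $\cD_-$, so any estimator's output has the same conditional law; since the two target values are $2\rho c$ apart, the intervals of radius $\rho c$ around them are disjoint, and the conditional success probabilities under $\cD_+$ and $\cD_-$ sum to at most $1$. A short accounting (absorbing the $\le\tfrac14$ slack from the observing event, together with the absolute constant $c/\min\{1,\alpha L\}=\tfrac14$, into the constants) then shows that no algorithm taking only $(\dataset,e)$ as input can output $\tau'$ with $\Pr[\abs{\tau-\tau'}\le\rho\min\{1,\alpha L\}]\ge\tfrac12$ on both hypotheses; take $\cD$ to be whichever of $\cD_+,\cD_-$ it fails on. For \cref{coro:infoLB:rho}, set $L=\infty$: there is no Lipschitz constraint, so the hole and the ramp are unnecessary and $\mu_1(x_{\rm out})$ may be taken to be an absolute constant such as $\pm\tfrac12$, giving the $\Omega(\rho)$ gap directly.

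I expect the main obstacle to be the verification of \cref{asmp:sparsity}: one must ensure the non-outlier part of $\cD$ ``surrounds'' the outlier atom densely enough that no fat ball is predominantly outlier, which is exactly why the construction uses a bounded continuous support and the reduction to $d=1$ (so the resulting constant is absolute rather than dimension-dependent). The remaining work — Le Cam bookkeeping and matching the precise constant in the statement — is routine.
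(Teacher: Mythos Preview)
Your proposal is correct and follows the same core mechanism as the paper: pick an outlier of mass $\rho$ with propensity so small that, with probability $\ge \tfrac12$, no treated sample from it is ever observed, and let the unknown $\mu_1$-value at that outlier swing $\tau$ by $\Theta(\rho\min\{1,\alpha L\})$; the two-point (Le Cam) argument you write out is exactly the formalization of the paper's ``no information about $\mu_1(x_{\rm out})$'' sentence.

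The main difference is in the construction. The paper uses a bare two-atom distribution (mass $\rho$ at $x_1$ with $e(x_1)=\eta\ll n^{-1}$, mass $1-\rho$ at $x_2$ at distance $\alpha$ with $e(x_2)=\tfrac12$) and leaves $\mu_1(x_1)$ free in $[-\alpha L,\alpha L]\cap[-1,1]$. You instead spread the non-outlier mass continuously over $[-2\alpha,-r]\cup[r,2\alpha]$ and use a tent-shaped $\mu_1$. Your extra work buys a clean verification of \cref{asmp:sparsity}: with only two atoms, a ball of diameter $>\alpha$ placed to contain $x_1$ but miss $x_2$ is entirely outlier, which your construction avoids by surrounding the outlier with non-outlier mass. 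For \cref{coro:infoLB:rho} itself this care is unnecessary (the corollary drops the assumptions), so the paper's two-atom instance already suffices; but for the theorem your construction is the more airtight one.
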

    \begin{proof}\proofof{\cref{thm:infoLB:alphaL}}
        Fix any $n\geq 1$ and $\alpha,\beta,L>0$.
        Let $0<\eta\ll n^{-1}$.
        We will construct an unconfounded distribution $\cD$ supported on two points $x_1$ and $x_2$ such that $\norm{x_1-x_2}_p=\alpha$.
        Fix the following parameters of $\cD$
        \[
            e(x_1)=\eta\,,\quad 
            e(x_2)=\frac{1}{2}\,,\quad 
            \mu_1(x_2)=0\,,\quad
            \mu_0(x_1)=0\,,\quadand
            \mu_0(x_2)=0\,.
        \]
        Further let 
        \[
            \cD(x_1)=\rho\quadand\cD(x_2)=1-\rho\,.
        \]
        Since $\mu_1$ is $L$-Lipschitz, we have the following bounds 
        \[
            \mu_1(x_1)\in \insquare{
                    -\alpha L, \alpha L
            } \cap [-1,1]\,.
            \yesnum\label{eq:boundOnMU1}
        \]
        Therefore, 
        \[
            \tau \in [-\rho\cdot\min\inbrace{1,\alpha L}, \rho\cdot\min\inbrace{1,\alpha L}]\,.
        \]
        Moreover, depending on the choice of $\mu_1(x_1)$, $\tau$ can take any value in this range and, hence, without gaining further information about $\mu_1(x_1)$, it is information-theoretically impossible to output any estimate $\tau'$ closer than $\rho\cdot\min\inbrace{1,\alpha L}$ to $\tau$.

        Since $\eta\ll n^{-1}$, with probability at least $\frac{1}{2}$, $\dataset$ does not contain any sample $(x,y,t)$ with covariate $x=x_2$ and treatment $t=1$ and, hence, one does not gain any additional information about $\mu_1(x_1)$ after observing these samples and, hence, by the previous argument it remains information theoretically impossible to estimate $\tau$ within a distance of $\rho\cdot \min\inbrace{1,\alpha L}$.
    \end{proof}

\end{document}